\documentclass[final,1p,times]{elsarticle}
\usepackage{amssymb,amsmath}
\usepackage{qcircuit}
\usepackage{graphicx}
\usepackage{pstricks,pst-node,pst-tree}
\usepackage{color}
\usepackage{macros}
\usepackage{stmaryrd}
\usepackage{url}
\usepackage{proof}

\setlength{\textwidth}{5.5in}

\definecolor{dark red}{rgb}{0.545098,0.000000,0.000000}
\definecolor{dark green}{rgb}{0.000000,0.392157,0.000000}
\definecolor{dark blue}{rgb}{0.000000,0.000000,0.545098}
\definecolor{brown}{rgb}{0.647059,0.164706,0.164706}

\def \leaveout#1{}

\def \Math#1{{ $#1$}}


\def\sset#1{\{#1\}}

\def\Nat{{\mathbb N}}
\def\Exp{{\mathbb E}}
\def\comp{\rhd\!\!\!\lhd}
\newcommand{\Char}{{\bf 1}}
\newcommand{\triple}[3]{\{#1\}\;#2\;\{#3\}}

\newcommand{\Assn}{{\textbf{Assn}}}
\newcommand{\Denote}[1]{[\![#1 ]\!]}

\def\PState{{\textbf{POVD}}}
\newcommand{\uclosed}{\sf uclosed}
\def\state{\sigma}
\def\qstate{\rho}
\def\povd{\mu}

\def\Qimp{{\textbf{QIMP}}}

\def\Tru{{\textbf{T}}}
\def\true{{\textbf{true}}}
\def\false{{\textbf{false}}}

\def\Cvar{{\textbf{Cvar}}}
\def\Qvar{{\textbf{Qvar}}}
\def\Aexp{{\textbf{Aexp}}}
\def\Bexp{{\textbf{Bexp}}}
\def\Com{{\textbf{Com}}}
\def\Skip{{\textbf{skip}}}
\def\nil{\textbf{nil}}
\def\abort{\textbf{abort}}
\def\If{\textbf{if}~}
\def\Then{~\textbf{then}~}
\def\Else{~\textbf{else}~}
\def\While{\textbf{while}~}
\def\Do{~\textbf{do}~}

\def\Assn{{\textbf{Assn}}}

\def\pair#1{\langle#1\rangle}

\def\CH{{\cal H}}
\def\h{\ensuremath{\mathcal{H}}}
\def\lh{\ensuremath{\mathcal{L(H)}}}

\def\CNOT{{\textit{CNOT}}}

\newcommand{\cH}{{\cal H}}
\renewcommand{\>}{\ensuremath{\rangle}}
\newcommand{\<}{\ensuremath{\langle}}
\newcommand{\tU}{{ U}}
\newcommand{\tM}{{ M}}
\newcommand{\tr}{{\rm tr}}
\newcommand{\Lab}{{l}}
\newcommand{\Id}{{\it Id}}
\newcommand{\cond}[1]{{\color{blue}#1}}


\def \VS{\vspace{2mm}}


\begin{document}

\begin{frontmatter}
  \title{Formal Semantics of a Classical-Quantum Language}
  
\author[ecnu]{Yuxin Deng\corref{deng}}
\ead{yxdeng@sei.ecnu.edu.cn}

\author[uts]{Yuan Feng}
\ead{Yuan.Feng@uts.edu.au}


\address[ecnu]{Shanghai Key Laboratory of Trustworthy Computing,
East China Normal University}

\address[uts]{University of Technology Sydney, Australia}

\begin{abstract}
We investigate the formal semantics of a simple imperative  language that has both classical and quantum constructs.  More specifically, we provide an operational semantics, a denotational semantics and two Hoare-style proof systems: an abstract one and a concrete one. The two proof systems are satisfaction-based, as inspired by the program logics of Barthe et al for probabilistic programs. The abstract proof system turns out to be sound and relatively complete, while the concrete one is sound only.
\end{abstract}

\begin{keyword}
Classical-quantum language \sep 
Formal semantics \sep 
Soundness \sep
Completeness
\end{keyword}

\end{frontmatter}

\section{Introduction}

Programming is the core of software development, but it is also an inherently error-prone activity. The likelihood of errors will even be significantly higher when programming with a quantum computer, as the techniques used for classical programming are, unfortunately, hard to apply to quantum computers because quantum systems are essentially different from classical ones. Thus, there is a pressing need to provide verification and analysis techniques for reasoning about the correctness of quantum programs. Furthermore, these techniques would also be very useful for compiling and optimising quantum programs.
	
Among other techniques, Hoare logic~\cite{hoare1969axiomatic} provides a syntax-oriented proof system to reason about program correctness. After decades of development, Hoare logic has been successfully applied in analysis of programs with non-determinism, recursion, parallel execution, etc~\cite{apt2019fifty,apt2010verification}. It was also extended to programming languages with probabilistic features. Remarkably, as the program states for probabilistic languages are (sub)distributions over evaluations of program variables, the extension naturally follows two different approaches, depending on how assertions of probabilistic states are defined. The first one
takes subsets of distributions as (qualitative) assertions, similar to the non-probabilistic case, and the \emph{satisfaction} relation between distributions and assertions is then just the ordinary membership~\cite{ramshaw1979formalizing,den2002verifying,chadha2007reasoning,BEGGHS18}. In contrast, the other approach takes
non-negative functions on evaluations as (quantitative) assertions. Consequently, one is concerned with the \emph{expectation} of a distribution satisfying an assertion~\cite{morgan1996probabilistic,mciver2005abstraction,olmedo2016reasoning,kozen1981semantics,kozen1985probabilistic}.

In the past two decades, several Hoare-type logic systems for quantum programs (QHL) have been proposed, also following the two approaches as in the probabilistic setting. 

\textbf{Expectation-based QHLs}. In the logic systems proposed in~\cite{Yin12,ying2016foundations,ying2018reasoning,li2019quantum} for purely quantum programs, the assertions $P$ and $Q$ in a Hoare triple $\{P\} S \{Q\}$ are both positive operators with the eigenvalues lying in $[0,1]$, and such a triple is valid in the sense of total correctness if for any initial quantum state $\rho$, $\tr(P\rho) \leq \tr(Q\rho')$ where $\rho'$ is the final state obtained by executing $S$ on $\rho$, and $\tr(P\rho)$ denotes the expectation/degree of $\rho$ satisfying $P$ (or physically, the average outcome when measuring $\rho$ according to the projective measurement determined by $P$). This definition captures the idea that the precondition $P$ (on the initial state) provides a lower bound on the degree of satisfaction of the postcondition $Q$ (on the finial state). 
More recently, this type of expectation-based QHL has been extended to quantum programs with classical variables~\cite{feng2020quantum}
as well as distributed quantum programs with classical communication~\cite{feng2021verification}.
These logic systems have proven to be useful in describing and verifying correctness of a wide range of quantum algorithms such as Shor's algorithm~\cite{Sho94},  Grover's algorithm~\cite{Gro96}, etc. Moreover, they are theoretically elegant: all of these systems are (relatively) complete in the sense that every semantically valid Hoare triple can be deduced from the corresponding proof system.

\textbf{Limit of the expectation-based approach}. However, the expectation-based quantum Hoare logic systems proposed in the literature all suffer from the following \emph{expressiveness} problems.
\begin{itemize}
 \item	Unlike the classical boolean-valued assertions, the positive operator assertions cannot exclude undesirable quantum inputs. Note that the correctness of many algorithms such as quantum teleportation (an EPR pair is assumed as part of the input) and phase estimation (the corresponding eigenstate is given) all have some restrictions on the input. The expressiveness of expectation-based QHL might be limited, as the following artificial example suggests.
	Let 
	$$S\equiv  x := M_{\pm}[q]$$
	where $M_{\pm}$ is the measurement according to the $|\pm\>$ basis. 
	Obviously, starting with $|0\>$, the program ends at $|+\>$ with probability 0.5. However, expectation-based QHL cannot describe this property: the natural candidate
	\begin{equation}\label{eq:cexample}
	\{0.5 |0\>\<0|\}\ S\ \{|+\>\<+|\}
	\end{equation}
	is actually not valid. The reason is, it somehow \emph{over-specifies} the correctness: in addition to the above desirable property, it also puts certain requirement for \emph{all other} possible input states. To see this, take $\rho = |-\>\<-|$. Then $\Denote{S}(\rho) = |-\>\<-|$. Thus $$\tr(0.5 |0\>\<0|\rho) = 0.25 > \tr(|+\>\<+|\Denote{S}(\rho)) =0,$$
	making the Hoare triple in Eq.~\eqref{eq:cexample} invalid.
	 \item	As logic operations such as conjunction and disjunction are difficult, if at all possible, to define for positive operators, complicated properties can only be analysed separately, making the verification process cumbersome. This has been pointed out in~\cite{BEGGHS18} for expectation-based probabilistic Hoare logics. The same is obviously true for expectation-based QHLs as well.
\end{itemize}

\textbf{Satisfaction-based QHLs}. An Ensemble Exogenous Quantum Propositional Logic (EEQPL) was proposed in~\cite{chadha2006reasoning} for a simple quantum language with bounded iteration. The assertions in EEQPL can access amplitudes of quantum states, which makes it very strong in expressiveness but also hinders its use in applications such as debugging, as amplitudes are not physically accessible through measurements. The completeness of EEQPL is only proven in a special case where all real and complex values involved range over a finite set. In contrast, the QHL proposed in~\cite{Kakutani:2009} takes as the assertion language an extended first-order logic with the primitives of applying a matrix on a set of qubits and computing the probability that a classical predicate is satisfied by the outcome of a quantum measurement. The proof system is shown to be sound, but no completeness result was established. 

Another way of defining satisfaction-based QHLs proposed in~\cite{zhou2019applied,unruh2019quantum} regard subspaces of the Hilbert space as assertions, and a quantum state $\rho$ satisfies an assertion $P$ iff the support (the image space of linear operators) of $\rho$ is included in $P$.  The subspace assertion 
makes it easy to describe and determine properties of quantum programs, but the expressive power of the assertions is limited: they only assert if a given quantum state lies completely within a subspace. Consequently,  quantum algorithms which succeed with certain probability cannot be verified in their logic systems.

\textbf{Contribution of the current paper}. In this paper, motivated by~\cite{BEGGHS18}, we propose two Hoare-style proof systems: an abstract one and a concrete one. 
It is worth noting that the imperative language we consider here involves both classical and quantum constructs. 
Our work distinguishes itself from the works on QHLs mentioned above in the following aspects:
\begin{itemize}
	\item \emph{Assertion language}. The assertions used in our logic systems are boolean-typed, so that they can be easily combined using logic operations such as disjunction and conjunction. On the other hand, all information used in the assertions are physically accessible: they can be obtained through quantum measurement applying on the program states.
	  For example, consider the protocol of superdense coding (see Section~\ref{sec:example}). In order to send a message of two bits stored in two variables $x_0x_1$, Alice actually sends a qubit to Bob. From the received quantum information Bob can recover the message $y_0y_1$. The property we would expect is $x_0=y_0 \wedge x_1=y_1$. Indeed, by letting $SC$ be a quantum program to implement the protocol, we can prove that the following judgement is derivable:
          \[\triple{\true} {SC}{\Box(x_0=y_0 \wedge x_1=y_1)}\]
          where the precondition $\true$ is satisfied by any program state. Intuitively, the judgement says that the message received by Bob is always the same as that sent by Alice, no matter what  the initial program state is. There is no need to mention the concrete values of $x_0$ and $x_1$. This is a natural and concise way of specifying the correctness of the program $SC$. 
	\item \emph{Satisfaction-based complete QHL}. The existing satisfaction-based QHLs proposed in the literature all lack of completeness; the only exception is~\cite{zhou2019applied}, but as mentioned above, the assertions there are not expressive enough to verify probabilistic correctness.
          The abstract proof system proposed in the current work is shown to be sound and relatively complete, while the concrete proof system is sound only. By soundness, we mean that if the Hoare triple $\triple{P}{c}{Q}$ is derivable, then for any program state $\povd$ that satisfies $P$, the program state $\Denote{c}_\povd$ after the execution of command $c$ always satisfies $Q$. Completeness means the converse. We only have relative completeness because our assertion language allows for first-order logic operators such as implication. Note that in the presence of both classical and quantum variables, we represent each program state $\povd$ as a partial density operator valued distribution (POVD), and interpret a program as a transformer of POVDs. We establish a consistence result between the denotational semantics and the small-step operational semantics based on POVDs.
\end{itemize}

The rest of the paper is structured as follows. In Section~\ref{sec:pre} we recall some basic notations from linear algebra and quantum mechanics. In Section~\ref{sec:qimp} we define the syntax and operational semantics of a simple classical-quantum imperative  language. In Section~\ref{sec:abs} we present an abstract proof system and show its soundness and relative completeness. In Section~\ref{sec:conc} we provide a concrete proof system. In Section~\ref{sec:example} we use the example of superdense coding to illustrate the concrete proof system. Finally, we conclude in Section~\ref{sec:clu}.

\section{Preliminaries}\label{sec:pre}
We briefly recall some basic notations
from linear algebra and quantum mechanics which are needed in this paper. 
For more details, we refer to \cite{NC00}.

A {\it Hilbert space} $\h$ is a complete vector space with an inner
product $\langle\cdot|\cdot\rangle:\h\times \h\rightarrow \mathbf{C}$
such that 
\begin{enumerate}
\item
$\langle\psi|\psi\rangle\geq 0$ for any $|\psi\>\in\h$, with
equality if and only if $|\psi\rangle =0$;
\item
$\langle\phi|\psi\rangle=\langle\psi|\phi\rangle^{\ast}$;
\item
$\langle\phi|\sum_i c_i|\psi_i\rangle=
\sum_i c_i\langle\phi|\psi_i\rangle$,
\end{enumerate}
where $\mathbf{C}$ is the set of complex numbers, and for each
$c\in \mathbf{C}$, $c^{\ast}$ stands for the complex
conjugate of $c$. For any vector $|\psi\rangle\in\h$, its
length $|||\psi\rangle||$ is defined to be
$\sqrt{\langle\psi|\psi\rangle}$, and it is said to be {\it normalised} if
$|||\psi\rangle||=1$. Two vectors $|\psi\>$ and $|\phi\>$ are
{\it orthogonal} if $\<\psi|\phi\>=0$. An {\it orthonormal basis} of a Hilbert
space $\h$ is a basis $\{|i\rangle\}$ where each $|i\>$ is
normalised and any pair of them are orthogonal.

Let $\lh$ be the set of linear operators on $\h$.
For any $A\in \lh$, $A$ is {\it Hermitian} if $A^\dag=A$ where $A^\dag$ is the adjoint operator of $A$ such that
$\<\psi|A^\dag|\phi\>=\<\phi|A|\psi\>^*$ for any
$|\psi\>,|\phi\>\in\h$.
A linear operator $A\in \lh$ is {\it unitary} if $A^\dag A=A A^\dag=I_\h$ where $I_\h$ is the
identity operator on $\h$. 
The {\it  trace} of $A$ is defined as $\tr(A)=\sum_i \<i|A|i\>$ for some
given orthonormal basis $\{|i\>\}$ of $\h$.
A linear operator $A\in \lh$ is {\it positive} if $\<\phi|A|\phi\> \geq 0$ for any state $|\phi\> \in\h$.
The \emph{L\"{o}wner order} $\sqsubseteq$ on the set of Hermitian operators on $\h$ is defined by letting $A\sqsubseteq B$ iff $B-A$ is positive.

Let $\h_1$ and $\h_2$ be two Hilbert spaces. Their {\it tensor product} $\h_1\otimes \h_2$ is
defined as a vector space consisting of
linear combinations of the vectors
$|\psi_1\psi_2\rangle=|\psi_1\>|\psi_2\rangle =|\psi_1\>\otimes
|\psi_2\>$ with $|\psi_1\rangle\in \h_1$ and $|\psi_2\rangle\in
\h_2$. Here the tensor product of two vectors is defined by a new
vector such that
$$\left(\sum_i \lambda_i |\psi_i\>\right)\otimes
\left(\sum_j\mu_j|\phi_j\>\right)=\sum_{i,j} \lambda_i\mu_j
|\psi_i\>\otimes |\phi_j\>.$$ Then $\h_1\otimes \h_2$ is also a
Hilbert space where the inner product is defined as the following:
for any $|\psi_1\>,|\phi_1\>\in\h_1$ and $|\psi_2\>,|\phi_2\>\in
\h_2$,
$$\<\psi_1\otimes \psi_2|\phi_1\otimes\phi_2\>=\<\psi_1|\phi_1\>_{\h_1}\<
\psi_2|\phi_2\>_{\h_2}$$ where $\<\cdot|\cdot\>_{\h_i}$ is the inner
product of $\h_i$.

By applying quantum gates to qubits, we can change their
states. For example, the Hadamard gate (H gate) can be applied
on a single qubit, while the CNOT gate can be applied on two
qubits. Some commonly used gates and their
representation in terms of matrices are as follows.
$$\CNOT=\left(%
\begin{array}{cccc}
  1 & 0 & 0 & 0 \\
  0 & 1 & 0 & 0 \\
  0 & 0 & 0 & 1 \\
  0 & 0 & 1 & 0
\end{array}%
\right),$$
\[
H=\frac{1}{\sqrt{2}}\left(%
\begin{array}{cc}
  1 & 1 \\
  1 & -1 \\
\end{array}%
\right),\ \  I_2=\left(%
\begin{array}{cc}
  1 & 0 \\
  0 & 1 \\
\end{array}%
\right),\ \
X=\left(%
\begin{array}{cc}
  0 & 1 \\
  1 & 0 \\
\end{array}%
\right),\ \ Z=\left(%
\begin{array}{cc}
  1 & 0 \\
  0 & -1 \\
\end{array}%
\right).
\]

According to von Neumann's formalism of quantum mechanics
\cite{vN55}, an isolated physical system is associated with a
Hilbert space which is called the {\it state space} of the system. A {\it pure state} of a
quantum system is a normalised vector in its state space, and a
{\it mixed state} is represented by a density operator on the state
space. Here a \emph{density operator} $\rho$ on Hilbert space $\h$ is a
positive linear operator such that $\tr(\rho)= 1$. A \emph{partial density operator} $\rho$ is a positive linear operator with $\tr(\rho)\leq 1$.

The evolution of a closed quantum system is described by a unitary
operator on its state space: if the states of the system at times
$t_1$ and $t_2$ are $\rho_1$ and $\rho_2$, respectively, then
$\rho_2=U\rho_1U^{\dag}$ for some unitary operator $U$ which
depends only on $t_1$ and $t_2$. 

A quantum {\it measurement} is described by a
collection $\{M_m\}$ of measurement operators, where the indices
$m$ refer to the measurement outcomes. It is required that the
measurement operators satisfy the completeness equation
$\sum_{m}M_m^{\dag}M_m=I_\h$. If the system is in state $\rho$, then the probability
that measurement result $m$ occurs is given by
$$p(m)=\tr(M_m^{\dag}M_m\rho),$$ and the state of the post-measurement system
is $M_m\rho M_m^{\dag}/p(m).$

\section{\Qimp}\label{sec:qimp}
We define the syntax and operational semantics of a simple classical-quantum imperative  language called \Qimp. The language is essentially extended from \textbf{IMP} \cite{Win93} by adding quantum data and a few operations for manipulating quantum data.

\subsection{Syntax }
We assume three types of data in our language: {\tt Bool} for booleans,  {\tt Int} for integers, and qubits {\tt Qbt} for quantum data. Let $\mathbb{Z}$ be the set of constant integer numbers, ranged over by $n$. Let $\Cvar$, ranged over by $x,y,...$, be the set of classical variables, and $\Qvar$, ranged over by $q,q',...$, the set of quantum variables. It is assumed that both $\Cvar$ and $\Qvar$ are countably infinite. We assume a set {\Aexp} of arithmetic expressions over {\tt Int}, which includes $\Cvar$ as a subset and is ranged over by $a, a',...$, and a set of boolean-valued expressions $\Bexp$, ranged over by $b,b',...$, with the usual boolean constants {\true, \false} and boolean operators $\neg, \wedge,\vee$. In particular, we let $a=a'$ and $a\leq a'$ be boolean expressions for any $a,a'\in\Aexp$. We further assume that only classical variables can occur free in both arithmetic and boolean expressions. 

We let $U$ range over unitary operators, which can be  user-defined  matrices or built in if the language is implemented. For example, a concrete $U$ could be the $1$-qubit Hadamard operator $H$, or the 2-qubit controlled-NOT operator $\CNOT$, etc. Similarly, we write $M$ for the measurement described by a collection $\{M_i\}$ of measurement operators, with each index $i$ representing a measurement outcome. For example, to describe the measurement of the qubit referred to by variable $q$ in the computational basis, we can write $M:=\{M_0, M_1\}$, where $M_0=|0\>_q\<0|$ and $M_1=|1\>_q\<1|$.

\leaveout{ 
The following list summarizes the syntactic sets associated with \Imp.
	\begin{itemize}
		\item real constants $\mathbb{R}$, consisting of all real constant numbers, ranged over by \emph{metavariables} \Math{n,m}
		\item truth values \Tru=$\sset{\true,\false}$,
                \item classical variables \Cvar, ranged over by \Math{x, y}
                \item quantum variables \Qvar, ranged over by \Math{q, r}
		\item arithmetic expressions \Aexp, ranged over by \Math{a}
		\item boolean expressions \Bexp, ranged over by \Math{b}
		\item commands \Com, ranged over by \Math{c}
	\end{itemize}
} 

	Sometimes we use metavariables which are primed or subscripted, e.g. $x', x_0$ for classical variables. We abbreviate a tuple of quantum variables $\pair{q_1,...,q_n}$ as $\bar{q}$ if the length $n$ of the tuple is not important.
	The formation rules for arithmetic and boolean expressions as well as commands are defined by the following grammar.
	\begin{itemize}
		\item For \Aexp: \quad \Math{a ::= n \mid x \mid a_0+a_1 \mid a_0 - a_1 \mid a_0\times a_1}
		\item For \Bexp: \quad \Math{b ::=\true \mid \false \mid a_0=a_1 \mid a_0 \leq a_1 \mid \neg b \mid b_0\wedge b_1 \mid b_0 \vee b_1}
		\item For {\Com}: \[\begin{array}{rl}
		 c  ::= & \Skip \mid x:=a \mid c_0;c_1 \mid \If b \Then c_0 \Else c_1 \mid \While b \Do c \\
		 & \mid q := |0\> \mid \tU[\bar{q}] \mid x := \tM[\bar{q}]
		 \end{array}\]
	\end{itemize}

An arithmetic expression can be an integer, a variable, or built from other arithmetic expressions  by addition, subtraction,  or multiplication.  A boolean expression can be formed by comparing arithmetic expressions or by using the usual boolean operators. A command can be a skip statement, a classical assignment, a conditional statement, or  a while-loop, as in many classical imperative languages. In addition, there are three commands that involve quantum data. The command $q := |0\>$ initialises the qubit referred to by variable $q$ to be the basis state $|0\>$. The command $\tU[\bar{q}]$ applies the unitary operator $\tU$ to the quantum system referred to by $\bar{q}$. The command $x := \tM[\bar{q}]$ performs a measurement $M$ on $\bar{q}$ and assigns the measurement outcome to $x$. It differs from a classical assignment because the measurement $M$ may change the quantum state of $\bar{q}$, besides the fact that the value of $x$ is updated.

\subsection{Operational Semantics}
Since the execution of a {\Qimp} program may involve both classical and quantum data, we
consider the setting where the CPU that processes the program has two registers: one stores classical data and the other quantum data. Therefore, we will  model a machine state as a pair composed of a classical state and a quantum state.

The notion of classical state is standard. Formally, a \emph{classical state} is a function \Math{\state: \Cvar\rightarrow\mathbb{Z}} from classical variables to integers. Thus $\state(x)$ is the value of variable $x$ in state $\state$. The notion of quantum state is slightly more complicated. For each quantum variable $q\in \Qvar$, we assume a $2$-dimensional Hilbert space $\CH_q$ to be the state space of the $q$-system. For any finite subset $V$ of $\Qvar$, we denote
\[ \CH_V \ = \ \bigotimes_{q\in V}\CH_q.\] 
That is, $\CH_V$ is the tensor product of the individual state spaces of all the quantum variables in $V$. Throughout the paper, when we refer to a subset of $\Qvar$, it is always assumed to be finite.
 Given $V\subseteq\Qvar$, 
 the set of \emph{quantum states} consists of all partial density operators in the space $\CH_V$, denoted by $\padist{\cH_V}$. 
 A \emph{machine state} is a pair $\pair{\state,\qstate}$ where $\state$ is a classical state and $\qstate$ a quantum state. In the presence of measurements, we often need to consider an ensemble of states. For that purpose, we introduce a notion of distribution.
 \begin{definition}
 	Suppose $V\subseteq\Qvar$ and $\Sigma$ is the set of classical states, i.e., the set of functions of type $\Cvar\rightarrow\mathbb{Z}$.
 	A \emph{partial density operator valued distribution (POVD)} is a function $\povd: \Sigma\rightarrow\padist{\cH_V}$ with $\sum_{\sigma\in\Sigma}\tr(\mu(\sigma))\leq 1$.
 \end{definition}
 Intuitively, a POVD $\povd$ represents a collection of machine states where each classical state $\sigma$ is associated with a quantum state $\povd(\sigma)$. 
The notation of POVD is called  classical-quantum state in \cite{feng2020quantum}.
 If the collection has only one element $\sigma$, we explicitly write $(\sigma,\povd(\sigma))$ for $\povd$.
 The support of $\povd$, written $\support{\povd}$, is the set $\{\sigma\in\Sigma \mid \povd(\sigma)\not=0\}$.
 We can also define the addition of two distributions by letting
 $(\povd_1+\povd_2)(\sigma) = \povd_1(\sigma)+\povd_2(\sigma)$.
 
 A \emph{configuration} is a pair $\pair{e,\state,\qstate}$, where $e$ is an expression and $(\state,\qstate)$ is a POVD.
 We define the small-step operational semantics of arithmetic and boolean expressions as well as commands in a syntax-directed way  by  using an evaluation relation $\hookrightarrow$ between configurations. In Figure~\ref{fig:evalab} we list the rules for evaluating integer variables, sums, and expressions of the form $a_0\leq a_1$; the rules for other arithmetic and boolean expressions are similar.
When evaluating an arithmetic or boolean expression, we only rely on the information from the given classical state, therefore we omit the quantum state in the configuration. This is not the case when we execute commands.

\begin{figure}
\[\begin{array}{l}
\slinfer[\Rlts{Evaluation of local variables}]{\pair{x,\state}\hookrightarrow \pair{\state(x), \state}}\VS\\
\prooftree
\pair{a_0,\state}\hookrightarrow \pair{a_0',\state}
\justifies
\pair{a_0+a_1,\state}\hookrightarrow \pair{a_0'+a_1,\state}
\endprooftree\qquad
\prooftree
\pair{a_1,\state}\hookrightarrow \pair{a_1',\state}
\justifies
\pair{n+a_1,\state}\hookrightarrow \pair{n+a'_1,\state}
\endprooftree
\VS\\

\pair{n+m,\state}\hookrightarrow\pair{p,\state} \qquad \mbox{if $p$ is the sum of $n$ and $m$}\VS\\
\prooftree
\pair{a_0,\state}\hookrightarrow \pair{a'_0,\state}
\justifies
\pair{a_0\leq a_1,\state}\hookrightarrow \pair{a'_0\leq a_1,\state}
\endprooftree
		\qquad
\prooftree
\pair{a_1,\state}\hookrightarrow \pair{a'_1,\state}
\justifies
\pair{n\leq a_1,\state}\hookrightarrow \pair{n\leq a'_1,\state}
\endprooftree
\VS\\
		
\pair{n\leq m,\state}\hookrightarrow\pair{\true,\state} \qquad \mbox{if $n$ is less than or equal to  $m$.}
\VS\\
\pair{n\leq m,\state}\hookrightarrow\pair{\false,\state} \qquad \mbox{if $n$ is  greater than  $m$.}
\end{array}\]
\caption{Evaluation of arithmetic and boolean expressions (selected rules)}\label{fig:evalab}
\end{figure}


Let $\state$ be a classical state and $n\in \mathbb{Z}$. We write $\state[n/x]$ for the updated state satisfying
\[
\state[n/x](y)=\left\{\begin{array}{ll}
		n & \mbox{if $y=x$,}\\
		\state(y) & \mbox{if $y\not=x$.}
	\end{array}\right.
	\]
	
	We are going to write $\rightarrow$ for the execution of commands. 
	The transition rules are given in Figure~\ref{fig:exec}. Here we introduce a special command {\bf nil} that stands for a successful termination of programs.
We follow \cite{Yin12} to define the operational semantics of quantum measurements in a non-deterministic way, and the probabilities of different branches
are encoded in the quantum part of the configurations. For that reason we need to take partial density
operators instead of the normalised density operators to represent quantum states.
 After the measurement $M$ defined by some measurement operators $M_i$, the  original state $(\state,\qstate)$ may evolve  into a new state whose classical part is the updated state $\state[i/x]$ and  the quantum part is the new quantum state $M_i\qstate M_i^\dag$. In all other rules, the execution of a command changes a configuration to another one. Among them, the rules for initialising qubits and unitary transformations only affect the quantum part of the original machine state. On the contrary, the commands for manipulating classical data only update the classical part of a state.

\begin{figure}
\[\begin{array}{l}
\pair{\Skip,\state,\qstate}\rightarrow \pair{\nil,\state,\qstate}\VS\\
 
 \prooftree
 \pair{a,\state}\hookrightarrow \pair{a',\state'}
 \justifies
 \pair{x:=a,\state,\qstate}\rightarrow \pair{x:=a',\state',\qstate}
 \endprooftree
 \qquad
\pair{x:=n,\state,\qstate}\rightarrow \pair{\nil,\state[n/x],\qstate} \VS\\
\prooftree\pair{c_0,\state,\qstate}\rightarrow \pair{c'_0,\state',\qstate'} 
\justifies
\pair{c_0;c_1,\state,\qstate}\rightarrow \pair{c'_0;c_1,\state',\qstate'}
\endprooftree
\qquad
\prooftree
\pair{c_1,\state,\qstate}\rightarrow \pair{c'_1,\state',\qstate'} 
\justifies
\pair{\nil;c_1,\state,\qstate}\rightarrow \pair{c'_1,\state',\qstate'}
\endprooftree
\VS\\
		
\prooftree
\pair{b,\state}\hookrightarrow \pair{b',\state'}
\justifies
\pair{\If b \Then c_0 \Else c_1,\state,\qstate}\rightarrow \pair{\If b' \Then c_0 \Else c_1,\state',\qstate'}
\endprooftree
\VS\\
\pair{\If \true \Then c_0 \Else c_1,\state,\qstate}\rightarrow \pair{ c_0,\state,\qstate}
\qquad
\pair{\If \false \Then c_0 \Else c_1,\state,\qstate}\rightarrow \pair{ c_1,\state,\qstate}
\VS\\

\pair{\While b \Do c,\state,\qstate} \rightarrow \pair{\If b \Then (c;\While b\Do c)\Else \Skip,\state,\qstate}
\VS\\

\pair{q:=|0\>,\state,\qstate}\rightarrow \pair{\nil,\state,\qstate'}
\qquad\mbox{ with } \qstate'=  |0\>_q\<0|\qstate |0\>_q\<0| + |0\>_q\<1|\qstate |1\>_q\<0| 
\VS\\
\pair{\tU[\bar{q}],\state,\qstate}\rightarrow \pair{\nil,\state,\tU\qstate\tU^\dag}
\VS\\
\prooftree 
M:=\{M_i\}_{i\in I}
\justifies
\pair{x:=\tM[\bar{q}],\state,\qstate}\rightarrow \pair{\nil, \state[i/x],M_i \qstate M_i^\dag}
\endprooftree
\end{array}\]
\caption{Execution of commands}\label{fig:exec}
\end{figure}


\subsection{Denotational Semantics}

For the purpose of presenting the denotational semantics, we add an {\abort} command that halts the computation with no result.
We interpret programs as POVD transformers. 
We write 
$\PState$ for the set of POVDs called distribution states.

\begin{lemma}
We impose an order between POVDs by letting $\povd_1\leq\povd_2$ if for any classical state $\state$ we have $\povd_1(\state)\sqsubseteq\povd_2(\state)$, where $\sqsubseteq$ is the L\"{o}wner order.
Let $(\povd_n)_{n\in\Nat}\in\PState$ be an increasing sequence of POVDs. This sequence converges to some POVD $\povd_\infty$ and $\povd_n\leq \povd_\infty$ for any $n\in \Nat$.
\end{lemma}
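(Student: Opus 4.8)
The plan is to reduce the statement to a pointwise (per classical state) analysis and then reassemble a POVD out of the pointwise limits. First I would fix an arbitrary classical state $\state\in\Sigma$ and consider the sequence $(\povd_n(\state))_{n\in\Nat}$ of partial density operators on $\cH_V$. Since $\povd_n\leq\povd_{n+1}$ by hypothesis, this sequence is nondecreasing in the L\"{o}wner order; and since each $\povd_n$ is a POVD we have $\tr(\povd_n(\state))\leq\sum_{\state'}\tr(\povd_n(\state'))\leq 1$, so every $\povd_n(\state)$ is dominated by the identity operator $I$ on $\cH_V$ (a positive operator of trace at most $1$ has largest eigenvalue at most $1$). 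Thus for each $\state$ we are looking at an increasing, uniformly bounded chain of Hermitian operators living in the \emph{finite-dimensional} space $\lh$, where $\dim\cH_V=2^{|V|}$ because $V$ is finite and each $\cH_q$ is two-dimensional.

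Next I would extract the pointwise limit. For every $|\psi\>\in\cH_V$ the real sequence $\<\psi|\povd_n(\state)|\psi\>$ is nondecreasing and bounded above by $\<\psi|\psi\>$, hence converges. By polarisation every matrix entry $\<\phi|\povd_n(\state)|\psi\>$ then converges, and since $\cH_V$ is finite-dimensional this yields convergence in operator norm to an operator I would call $\povd_\infty(\state)$. Passing to the limit preserves positivity, because $\<\psi|\povd_\infty(\state)|\psi\>=\lim_n\<\psi|\povd_n(\state)|\psi\>\geq 0$, and the limit of the traces is at most $1$; so each $\povd_\infty(\state)$ is again a partial density operator, and this defines a candidate function $\povd_\infty:\Sigma\rightarrow\padist{\cH_V}$.

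It then remains to verify that $\povd_\infty$ is genuinely a POVD and that it bounds the chain. For the global trace constraint I would use that $\tr(\povd_n(\state))$ is nondecreasing in $n$ with supremum $\tr(\povd_\infty(\state))$, so that for every \emph{finite} $F\subseteq\Sigma$ we have $\sum_{\state\in F}\tr(\povd_\infty(\state))=\lim_n\sum_{\state\in F}\tr(\povd_n(\state))\leq 1$; taking the supremum over all finite $F$ gives $\sum_{\state\in\Sigma}\tr(\povd_\infty(\state))\leq 1$, so $\povd_\infty\in\PState$. For the ordering, fixing $n$ and letting $m\to\infty$ in $\<\psi|\povd_n(\state)|\psi\>\leq\<\psi|\povd_m(\state)|\psi\>$ yields $\povd_n(\state)\sqsubseteq\povd_\infty(\state)$ for all $\state$, i.e. $\povd_n\leq\povd_\infty$; the same passage to the limit shows $\povd_\infty$ is in fact the \emph{least} upper bound, so it is exactly the limit of the chain.

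The main obstacle I anticipate is not the per-state operator convergence, which is routine once finite-dimensionality and the trace bound are in hand, but the two interchanges of a limit with infinite data: confirming that the operator supremum is computed correctly entrywise, and, more delicately, commuting the limit in $n$ with the summation over the possibly infinite index set $\Sigma$ when establishing $\sum_{\state}\tr(\povd_\infty(\state))\leq 1$. I would handle the latter entirely through finite partial sums together with the uniform bound $\sum_{\state}\tr(\povd_n(\state))\leq 1$, which sidesteps any appeal to dominated convergence and keeps the argument elementary.
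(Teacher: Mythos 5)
Your proposal is correct. The paper states this lemma without giving any proof at all, so there is no authorial argument to compare against; your proof supplies the omitted details in what is surely the intended way: pointwise monotone convergence of the L\"{o}wner-bounded operators $\povd_n(\state)$ in the finite-dimensional space $\cH_V$, followed by verification that the pointwise limits reassemble into a POVD. Each step is sound — the domination $\povd_n(\state)\sqsubseteq I$ from positivity and the trace bound, polarization plus finite-dimensionality to upgrade convergence of diagonal values to norm convergence, preservation of positivity and of the per-state trace in the limit, and, the only genuinely delicate point, the interchange of the limit in $n$ with the possibly infinite sum over $\Sigma$, which you correctly avoid by working with finite partial sums under the uniform bound $\sum_{\state}\tr(\povd_n(\state))\leq 1$. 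Your closing observation that $\povd_\infty$ is in fact the \emph{least} upper bound is a worthwhile addition: it is what makes the unqualified word ``converges'' in the statement precise, something the paper itself never spells out.
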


Given an expression $e$, we denote its interpretation with respect to machine state $(\state,\qstate)$ by $\Denote{e}_{(\state,\qstate)}$. 
The denotational semantics of commands is displayed in Figure~\ref{fig:de}, where we omit the denotational semantics of arithmetic and boolean expressions such as $\Denote{a}_\state$ and $\Denote{b}_\state$, which is almost the same as in the classical setting because the quantum part  plays no role for those expressions. This is an extension of the semantics for probabilistic programs presented in \cite{BEGGHS18}. Instead of probabilistic assignments are measurements of quantum systems. A state evolves into a POVD after some quantum qubits are measured, with the measurement outcomes assigned to a classical variable. Two other quantum commands, initialisation of qubits and unitary operations, are deterministic and only affect the quantum part of a state. As usual, we define the semantics of a loop ($\While b \Do c$) as the limit of its lower approximations, where the $n$-th lower approximation of $\Denote{\While b \Do c}_{(\state,\qstate)} $ is $\Denote{(\If b \Then c)^n; \If b \Then \abort}_{(\state,\qstate)} $, where ($\If b \Then c$) is shorthand for ($\If b \Then c \Else \Skip$) and $c^n$ is the command $c$ iterated $n$ times with $c^0\equiv\Skip$. The limit exists because the sequence $(\Denote{(\If b \Then c)^n; \If b \Then \abort}_{(\state,\qstate)} )_{n\in\Nat}$ is increasing and bounded. We write $\varepsilon$ for the special POVD whose support is the empty set.

\begin{proposition}
 The semantics $\Denote{c}_{(\state,\qstate)} $ of a command $c$ in initial state ${(\state,\qstate)} $ is a POVD. The lifted semantics $\Denote{c}_\povd$ of a command $c$ in initial POVD $\povd$ is a POVD.	
\end{proposition}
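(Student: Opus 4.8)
The plan is to prove, by structural induction on the command $c$, the slightly stronger statement that $\Denote{c}_{(\state,\qstate)}$ is a POVD each of whose values is a positive partial density operator and whose total trace satisfies $\sum_{\state'}\tr(\Denote{c}_{(\state,\qstate)}(\state'))\leq\tr(\qstate)$, i.e.\ the semantics is \emph{trace non-increasing}. This strengthening is what makes the induction go through, and it immediately yields the second claim: since the lifted semantics is $\Denote{c}_\povd=\sum_{\state\in\support{\povd}}\Denote{c}_{(\state,\povd(\state))}$, positivity is preserved under sums and the total trace is bounded by $\sum_{\state}\tr(\povd(\state))\leq 1$, so $\Denote{c}_\povd$ is again a POVD. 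Because the clause for sequential composition, $\Denote{c_0;c_1}_{(\state,\qstate)}=\Denote{c_1}_{\Denote{c_0}_{(\state,\qstate)}}$, already refers to the lifted semantics, the two statements have to be carried through the induction together.

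For the atomic commands the verification is a direct computation. The cases $\Skip$ and $x:=a$ produce a point mass that leaves $\qstate$ untouched, so positivity and the trace are trivially preserved; $\abort$ yields $\varepsilon$, with total trace $0$. For $\tU[\bar q]$ the new operator is $\tU\qstate\tU^\dag$, and unitary conjugation preserves both positivity and trace. For the measurement $x:=\tM[\bar q]$ with $M=\{M_i\}$, the result $\sum_i(\state[i/x],M_i\qstate M_i^\dag)$ has positive components, and by the completeness equation $\sum_i M_i^\dag M_i=I$ its total trace is $\sum_i\tr(M_i^\dag M_i\qstate)=\tr(\qstate)$. For the initialisation $q:=|0\>$ I would observe that $\qstate\mapsto |0\>_q\<0|\,\qstate\,|0\>_q\<0|+|0\>_q\<1|\,\qstate\,|1\>_q\<0|$ is of the form $E_0\qstate E_0^\dag+E_1\qstate E_1^\dag$ with $E_0^\dag E_0+E_1^\dag E_1=|0\>_q\<0|+|1\>_q\<1|=I$, hence a trace-preserving completely positive map, so positivity and the trace are again preserved.

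For the conditional, exactly one branch is selected according to $\Denote{b}_\state\in\{\true,\false\}$, and the claim follows from the induction hypothesis for that branch. For sequential composition I would first apply the induction hypothesis to $c_0$ to obtain that $\Denote{c_0}_{(\state,\qstate)}$ is a POVD with total trace $\leq\tr(\qstate)$, and then apply the already-established lifted statement for $c_1$ to this POVD; the total trace only decreases at each stage, so the bound $\leq\tr(\qstate)$ persists. The main obstacle is the loop $\While b\Do c$, whose denotation is the limit of the lower approximations $\povd_n=\Denote{(\If b\Then c)^n;\If b\Then\abort}_{(\state,\qstate)}$. Each $\povd_n$ is the denotation of a command built only from $c$, conditionals, sequencing and $\abort$, so the cases already treated show that every $\povd_n$ is a POVD with total trace $\leq\tr(\qstate)$. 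It then remains to check that $(\povd_n)_{n\in\Nat}$ is increasing in the order $\leq$ of the preceding lemma --- intuitively, increasing $n$ delays the $\abort$ and so lets strictly more computation paths survive, which I would argue from monotonicity of the underlying POVD transformers --- and that it is bounded. Granting these, the preceding lemma supplies a limit POVD $\povd_\infty=\Denote{\While b\Do c}_{(\state,\qstate)}$ with $\povd_n\leq\povd_\infty$, and continuity of the trace along the monotone bounded convergent sequence gives $\tr(\povd_\infty)=\lim_n\tr(\povd_n)\leq\tr(\qstate)$, completing the induction. The two delicate points are exactly the monotonicity of the loop approximations and the passage of the trace bound to the limit; everything else is a routine unfolding of the defining clauses.
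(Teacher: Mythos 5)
The paper offers no proof of this proposition at all---it is stated bare, with the only related ingredients being the unnumbered convergence lemma for increasing sequences of POVDs and the unproved assertion, made when the while-semantics is defined, that the sequence of lower approximations ``is increasing and bounded.'' So there is no paper argument to match; your proof has to stand on its own, and essentially it does. Your overall design is the natural one and is correct: the strengthening to trace non-increase, $\sum_{\state'}\tr(\Denote{c}_{(\state,\qstate)}(\state'))\leq\tr(\qstate)$, is exactly the invariant needed to make the sequencing case and the lifting to POVDs go through; your observation that the per-state and lifted statements must be proved simultaneously (because the clause for $c_0;c_1$ invokes the lifted semantics) is correct and necessary; and the atomic cases are all verified properly, including the recognition of $q:=|0\>$ as a Kraus map with $E_0^\dag E_0+E_1^\dag E_1=I$ and of measurement as trace-preserving via the completeness equation.

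The one step you leave sketched is the monotonicity of the loop approximations, and the tool you name---``monotonicity of the underlying POVD transformers''---is not quite the one that works. The intermediate POVDs $\nu_n=\Denote{(\If b\Then c)^n}_{(\state,\qstate)}$ are \emph{not} increasing in $n$ (already $\nu_1=\Denote{c}_{(\state,\qstate)}$ need not dominate or be dominated by $\nu_0$), so monotonicity of $\Denote{\cdot}$ in its input has nothing to act on. What does work is \emph{additivity} of the transformers (the property $\Denote{c}_{\povd_1+\povd_2}=\Denote{c}_{\povd_1}+\Denote{c}_{\povd_2}$, which the paper establishes later when proving soundness of its splitting rule) together with the decomposition $\Denote{\If b\Then c}_\nu=\Denote{c}_{\nu_{|b}}+\nu_{|\neg b}$: writing $\povd_n=(\nu_n)_{|\neg b}$ for the $n$-th approximation, one computes $\povd_{n+1}=\povd_n+\bigl(\Denote{c}_{(\nu_n)_{|b}}\bigr)_{|\neg b}$, and the added term is positive by the induction hypothesis, giving $\povd_n\leq\povd_{n+1}$. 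With that one-line repair, your appeal to the convergence lemma and to monotone convergence for the trace bound completes the induction; note that this sketched point is precisely the claim the paper itself asserts without proof, so even as written your argument is no less rigorous than the text it supplements.
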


\begin{figure}
\[\begin{array}{rcl}
\Denote{\Skip}_{(\state,\qstate)} & = & (\state,\qstate)  \VS\\
\Denote{\abort}_{(\state,\qstate)}  & = & \varepsilon \VS\\
\Denote{x := a}_{(\state,\qstate)}  & = & (\state[\Denote{a}_\state/x], \qstate) \VS\\
\Denote{c_0;c_1}_{(\state,\qstate)}  & = & \Denote{c_1}_{\Denote{c_0}_{(\state,\qstate)} }\VS\\
\Denote{\If b \Then c_0 \Else c_1}_{(\state,\qstate)}  & = & \left\{\begin{array}{ll}
                            \Denote{c_0}_{(\state,\qstate)}  & \mbox{if $\Denote{b}_\state=\true$}\\
                            \Denote{c_1}_{(\state,\qstate)}  & \mbox{if $\Denote{b}_\state=\false$}
                           \end{array}\right.\VS\\
\Denote{\While b \Do c}_{(\state,\qstate)}  & = &  \lim_{n\rightarrow\infty}
\Denote{(\If b \Then c)^n; \If b \Then \abort}_{(\state,\qstate)}               \VS\\
\Denote{q := |0\>}_{(\state,\qstate)}  & = &  \pair{\state,\qstate'}\\
& & \mbox{where $\qstate' :=  |0\>_q\<0|\qstate |0\>_q\<0| + |0\>_q\<1|\qstate |1\>_q\<0| $} \VS\\
\Denote{\tU[\bar{q}] }_{(\state,\qstate)}  & = & \pair{\state, \tU\qstate\tU^\dag} \VS\\
\Denote{x := \tM[\bar{q}]}_{(\state,\qstate)}  & = & \povd \VS\\
 & & \mbox{where $M=\sset{M_i}_{i\in I}$ and $\povd(\sigma')=\sum_i\{M_i \qstate M_i^\dag \mid \state[i/x]=\sigma'\}$}
\vspace{4mm}\\
\Denote{c}_{\povd} & = & \sum_{\state\in\support{\povd}} \Denote{c}_{(\state,\povd(\state))}.
\end{array}\]
\caption{Denotational semantics of commands}\label{fig:de}
\end{figure}

The operational and denotational semantics are related by the following theorem.
\begin{theorem}
  For any command $c$ and state $(\sigma,\rho)$, we have
  \[\Denote{c}_{(\sigma,\rho)} = \sum_{i}\{(\sigma_i,\rho_i) \mid \pair{c,\sigma,\rho}\rightarrow ^*\pair{\nil,\sigma_i,\rho_i}\}\ .\]
\end{theorem}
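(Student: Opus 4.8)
The plan is to prove the identity by structural induction on the command $c$, showing along the way that the operational right-hand side, which I abbreviate $\mathrm{Op}(c)_{(\sigma,\rho)}$, satisfies exactly the same clauses as $\Denote{c}_{(\sigma,\rho)}$ in Figure~\ref{fig:de}. For the atomic commands $\Skip$, $x:=a$, $\tU[\bar{q}]$, $q:=|0\>$ and the measurement $x:=\tM[\bar{q}]$, each configuration reaches its terminal configuration(s) in boundedly many $\rightarrow$-steps, and reading these off reproduces the corresponding clause directly. The only atomic case with branching is the measurement, whose non-deterministic rule yields one terminal configuration $\pair{\nil,\sigma[i/x],M_i\rho M_i^\dag}$ per outcome $i$; summing them and grouping by the resulting classical state $\sigma[i/x]=\sigma'$ reproduces the POVD $\povd(\sigma')=\sum_i\{M_i\rho M_i^\dag\mid \sigma[i/x]=\sigma'\}$. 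The conditional is equally easy: evaluating $b$ and selecting a branch takes boundedly many steps, after which the configuration is $\pair{c_j,\sigma,\rho}$ with $j$ determined by $\Denote{b}_\sigma$, and the induction hypothesis for $c_j$ finishes the case.

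The first substantial step is a \emph{sequencing lemma}: for arbitrary $c_0,c_1$, $\mathrm{Op}(c_0;c_1)_{(\sigma,\rho)}=\mathrm{Op}(c_1)_{\mathrm{Op}(c_0)_{(\sigma,\rho)}}$, with $\mathrm{Op}(c_1)$ lifted to POVDs as in the last clause of Figure~\ref{fig:de}. I would prove it by inspecting the two rules for ``$;$'': while the first component is not $\nil$, every transition of $\pair{c_0;c_1,\sigma,\rho}$ arises, via the first rule, from a transition of $c_0$, so a terminating computation first drives $c_0$ to $\nil$, reaching $\pair{\nil;c_1,\sigma',\rho'}$ for some terminal $\pair{\nil,\sigma',\rho'}$ of $c_0$; and $\pair{\nil;c_1,\sigma',\rho'}$ has exactly the same one-step successors as $\pair{c_1,\sigma',\rho'}$. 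An induction on the length of the $c_0$-phase makes this factorization precise, and, together with linearity of $\mathrm{Op}(c_1)$ in its quantum argument (an easy auxiliary induction, using that $\tU(\cdot)\tU^\dag$ and each $M_i(\cdot)M_i^\dag$ are linear), it shows the terminal multiset of $c_0;c_1$ decomposes as claimed. I would phrase this and the conditional case as implications ``the theorem for the parts yields the theorem for the composite,'' so that they may subsequently be applied to commands that are not subterms of the original $c$.

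The crux is the loop $W\equiv\While b\Do c'$. Using the unfolding $\pair{W,\sigma,\rho}\rightarrow\pair{\If b\Then(c';W)\Else\Skip,\sigma,\rho}$, I note that every terminating computation of $W$ exits after some finite number $k$ of executions of $c'$. Writing $A_n\equiv(\If b\Then c')^n;\If b\Then\abort$ for the $n$-th approximant, I claim $\mathrm{Op}(A_n)_{(\sigma,\rho)}$ equals the partial sum $\mathrm{Op}_n(W)_{(\sigma,\rho)}$ of those terminating $W$-computations that exit within $n$ iterations: a branch that has run $c'$ at most $n$ times and then finds $b$ false passes through all remaining tests unchanged, whereas a branch with $b$ still true after $n$ iterations reaches $\If b\Then\abort$ and is mapped to $\varepsilon$. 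Because $A_n$ is built only from $c'$ (covered by the induction hypothesis), sequencing, conditionals and $\abort$, the already-established compositional cases give $\mathrm{Op}(A_n)_{(\sigma,\rho)}=\Denote{A_n}_{(\sigma,\rho)}$. The partial sums $\mathrm{Op}_n(W)_{(\sigma,\rho)}$ increase with $n$ in the order of the convergence lemma, since each newly admitted iteration count contributes positive operators; that lemma therefore makes them tend to the full operational sum $\mathrm{Op}(W)_{(\sigma,\rho)}$, while $\Denote{W}_{(\sigma,\rho)}$ is by definition $\lim_n\Denote{A_n}_{(\sigma,\rho)}$, whose existence is guaranteed by the same lemma. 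Equality of the two limits finishes the proof.

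I expect the loop case to be the main obstacle, and within it the bookkeeping that identifies the abort-based approximant $A_n$ with the operational partial sum over computations exiting within $n$ iterations when $c'$ contains measurements: distinct branches of one computation may sit in different phases of the loop---some already exited, others still iterating---and one must verify that per-branch linearity is exactly what lets $A_n$ discard only the not-yet-exited branches while leaving the exited ones intact. The sequencing lemma is the second delicate point, because the rule for $\nil;c_1$ erases the $\nil$ prefix while simultaneously performing a step of $c_1$; matching one-step successors rather than inserting a spurious intermediate configuration requires a little care.
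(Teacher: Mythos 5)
Your proposal is correct and follows essentially the same route as the paper: structural induction on $c$, with the while-loop handled through the abort-based approximants $(\If b \Then c')^n;\If b \Then \abort$, an induction on the number of iterations (the paper's count of unfolding-rule applications) relating each approximant to the operational partial sums, and a final limit argument justified by the convergence lemma. The only difference is organizational: you factor the loop case into $\mathrm{Op}(A_n)=\Denote{A_n}$ (via reusable compositional cases) plus operational bookkeeping, whereas the paper proves the combined claim $\Denote{A_n}=$ (operational partial sums of the loop) directly by induction on $n$ --- a repackaging of the same argument.
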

\begin{proof}
  We proceed by induction on the structure of $c$. The most difficult case is when $c\equiv \While b \Do c'$ for some command $c'$. Below we consider this case.

  Let $\textbf{While}^n = (\If b \Then c')^n; \If b \Then \abort$ and
  $\pair{c,\sigma,\rho} \rightarrow^{n}\pair{\nil,\sigma',\rho'}$ be the sequence of maximal transitions from $\pair{c,\sigma,\rho}$ such that the unfolding rule
  \[\pair{\While b \Do c',\state'',\qstate''} \rightarrow \pair{\If b \Then (c';\While b\Do c')\Else \Skip,\state'',\qstate''}\ ,\] for any $\state''$ and $\qstate''$, has been applied at most $n$ times.
  \[\mbox{\bf Claim: }\hspace{2cm} \Denote{\textbf{While}^n}_{(\sigma,\rho)} = \sum_i\{(\sigma_i,\rho_i)\mid
  \pair{c,\sigma,\rho} \rightarrow^{n+1}\pair{\nil,\sigma_i,\rho_i}\}\ .\]

  We prove the above claim by induction on $n$.
  \begin{itemize}
  \item $n=0$. On the left hand side, we have
    $$\Denote{\textbf{While}^0}_{(\sigma,\rho)} = \Denote{\If b \Then \abort}_{(\sigma,\rho)}
    = \left\{\begin{array}{ll}
                            \varepsilon\  & \mbox{if $\Denote{b}_\state=\true$}\\
                           (\state,\qstate)  & \mbox{if $\Denote{b}_\state=\false$}
    \end{array}\right. . $$
    On the right hand side, we observe that
    \begin{equation}\label{eq:while}
      \begin{array}{rcl}
    \pair{\While b \Do c',\state,\qstate} & \rightarrow & \pair{\If b \Then (c';\While b\Do c')\Else \Skip,\state,\qstate} \\
    & \rightarrow^* &  \left\{\begin{array}{ll}
                            \pair{(c';\While b\Do c'),\sigma,\rho}\  & \mbox{if $\Denote{b}_\state=\true$}\\
                           \pair{\Skip, \state,\qstate}  & \mbox{if $\Denote{b}_\state=\false$}
    \end{array}\right.
    \end{array}\end{equation}
    The unfolding rule has been used in the first reduction step in (\ref{eq:while}). 
    If $\Denote{b}_\state=\false$ then the claim clearly holds. If $\Denote{b}_\state=\true$ then the configuration $\pair{(c';\While b\Do c'),\sigma,\rho}$ cannot reduce to any $\pair{\nil, \sigma'',\rho''}$ without using the unfolding rule again, which means that there is no maximal transition from $\pair{c,\sigma,\rho}$ that uses the unfolding rule at most once. It follows that the claim also holds in this case.

  \item Suppose  $n=k+1$ and the claim holds for some $k$. On the left hand side, we have
    \begin{equation}\label{eq:m0}
    \begin{array}{rcl}
    \Denote{\textbf{While}^{k+1}}_{(\sigma,\rho)}  & = & \Denote{\If b \Then c'; \textbf{While}^k}_{(\sigma,\rho)}\\
    & = & \left\{\begin{array}{ll}
                           \Denote{\textbf{While}^k}_{\Denote{c'}_{(\sigma,\rho)}}  & \mbox{if $\Denote{b}_\state=\true$}\\
                           (\state,\qstate)  & \mbox{if $\Denote{b}_\state=\false$}
    \end{array}\right.
      \end{array}
      \end{equation}
    On the right hand side, we have the same transitions as in (\ref{eq:while}). If $\Denote{b}_\state=\false$ then the claim clearly holds. If $\Denote{b}_\state=\true$ then we infer as follows.
    Since $c'$ is a subterm of $c$, we know from the hypothesis of the structural induction that \begin{equation}\label{eq:0}
      \Denote{c'}_{(\sigma,\rho)}
      = \sum_{j\in J}\{(\sigma_j,\rho_j) \mid \pair{c',\sigma,\rho}\rightarrow ^*\pair{\nil,\sigma_j,\rho_j}\}
      \end{equation}
    for some set $J$.
    It follows that
    \begin{equation}\label{eq:aa}\Denote{\textbf{While}^k}_{\Denote{c'}_{(\sigma,\rho)}} = \sum_{j\in J}\Denote{\textbf{While}^k}_{(\sigma_j,\rho_j)} \ .
    \end{equation} By induction hypothesis on $k$,
    \begin{equation}\label{eq:bb}
      \Denote{\textbf{While}^k}_{(\sigma_j,\rho_j)} = \sum_{i\in I_j}\{(\sigma_i,\rho_i)\mid
      \pair{c,\sigma_j,\rho_j} \rightarrow^{k+1}\pair{\nil,\sigma_i,\rho_i}\}
    \end{equation}
    for some index set $I_j$.
    As a result, when $\Denote{b}_\state=\true$, we have
    \begin{equation}\label{eq:c}
      \begin{array}{rll}
        \pair{c,\state,\qstate} & \rightarrow & \pair{\If b \Then (c';c)\Else \Skip,\state,\qstate} \\
        & \rightarrow^* & \pair{(c';c),\sigma,\rho} \\
        & \rightarrow^* & \pair{c, \sigma_j,\rho_j} \qquad\mbox{by (\ref{eq:0})} \\
        & \rightarrow^{k+1} & \pair{\nil,\sigma_i,\rho_i} \qquad\mbox{by (\ref{eq:bb})}
      \end{array}
    \end{equation}
    for each $j\in J$ and $i\in I_j$. This means that
    \begin{equation}\label{eq:d}
    \pair{c,\sigma,\rho} \rightarrow^{k+2} \pair{\nil,\sigma_i,\rho_i} 
    \end{equation}
    for each $j\in J$ and $i\in I_j$. Thus, we rewrite (\ref{eq:bb}) as follows.
    \begin{equation}\label{eq:e}
      \Denote{\textbf{While}^k}_{(\sigma_j,\rho_j)} = \sum_{i\in I_j}\{(\sigma_i,\rho_i)\mid
      \pair{c,\sigma,\rho} \rightarrow^{k+2}\pair{\nil,\sigma_i,\rho_i}\}
    \end{equation}
    Combining (\ref{eq:m0}), (\ref{eq:aa}) and (\ref{eq:e}), we obtain the desired result that
    \[\Denote{\textbf{While}^{k+1}}_{(\sigma,\rho)} = \sum_{j\in J}\sum_{i\in I_j} \{(\sigma_i,\rho_i)\mid
      \pair{c,\sigma,\rho} \rightarrow^{k+2}\pair{\nil,\sigma_i,\rho_i}\}\]
  \end{itemize}
  So far we have proved the claim. Then by taking the limit on both sides of the claim, we see that $\Denote{c}_{(\sigma,\rho)} = \sum_{i}\{(\sigma_i,\rho_i) \mid \pair{c,\sigma,\rho}\rightarrow ^*\pair{\nil,\sigma_i,\rho_i}\}$.
\end{proof}

\section{An Abstract Proof System}\label{sec:abs}
In this section, we present an abstract proof system, where assertions are arbitrary predicates on POVDs. We show that the proof system is sound and relatively complete.
\begin{definition}
	The set $\Assn$ of assertions is defined as $\cal{P}(\PState)$, the powerset of $\PState$. Each assertion $P$ can be constructed by the following grammar.
	\[P ~:=~ \Char_{\povd} \mid S \mid  \neg P \mid P_1\wedge P_2 \mid \Box\psi  \mid P_1\oplus P_2 \mid P[f]\]
	where $\mu\in\PState$, $S\subseteq \PState$, $\psi$ is a predicate over states and $f$ is a function from $\PState$ to $\PState$.
	
\end{definition}
Here $\Char_\povd$ is also called the characteristic function of the 
 POVD $\povd$, which is a predicate requiring that
$\Char_\povd$ holds on $\povd'$ if and only if $\povd'=\povd$, for any distribution state $\povd'$.
The satisfaction relation $\models$ between a POVD and an assertion is defined as follows.
\[\begin{array}{rcl}
\povd \models \Char_{\povd'} & \mbox{iff} & \povd=\povd' \\
\povd \models S & \mbox{iff} & \povd\in S \\
\povd \models \neg P & \mbox{iff} & 
\mbox{not } \povd\models P \\
\povd \models P_1\wedge P_2  & \mbox{iff} & 
\povd \models P_1 \wedge \povd\models P_2\\
\povd \models \Box\psi & \mbox{iff} &  \forall \state.\ \state\in\support{\povd} \Rightarrow \Denote{\psi}_\state =\true\\
\povd \models P_1\oplus P_2 & \mbox{iff} & \exists \povd_1, \povd_2.\ \povd=\povd_1 + \povd_2 \wedge \povd_1\models P_1 \wedge \povd_2\models P_2\\
\povd \models P[f] & \mbox{iff} & f(\povd)\models P
\end{array}\]
Let $\Denote{P}:=\{\povd \mid \povd\models P\}$ be the semantic interpretation of assertion $P$. We see that boolean operations of assertions are represented by set operations.  For example, we have $\Denote{\neg P}=\mathcal{P}(\PState)\backslash \Denote{P}$ and $\Denote{P_1\wedge P_2}=\Denote{P_1}\cap\Denote{P_2}$.
The predicate $\Box\psi$ is lifted from a state predicate by requiring that $\Box\psi$ holds on the POVD $\povd$ when $\psi$ holds on all the states in the support of $\povd$.  For example, a particular predicate over states is a boolean expression $b$ with $\state\models b$ iff $\Denote{b}_\state=\true$. Therefore, the predicate $\Box b$ holds on the POVD $\povd$ when $b$ evaluates to be true under any state $\state$ in the support of $\povd$.
The assertion $P_1\oplus P_2$ holds on the POVD $\povd$ if we can split $\povd$ into the sum of two POVDs such that $P_1$ and $P_2$ hold on each of them. Lastly, $P[f]$ holds on a POVD $\povd$ only when $P$ holds on the image of $\povd$ under $f$.

\leaveout{ 
We define boolean operations of assertions by set operations. For example, $P\wedge P' := P\cap P'$ and $\neg P := \PState\backslash P$.
Given a predicate $\psi$ over states, we lift it to be a predicate $\Box\psi$ over subdistributions by letting
\[ \Box\psi (\Delta) ~:=~ \forall \state. \state\in\support{\Delta} \Rightarrow \psi(\state) .\]
Intuitively, $\Box\psi$ holds on the subdistribution $\Delta$ when $\psi$ holds on all the states in the support of $\Delta$.  For example, a particular predicate over states is a boolean expression $b$ with $m\models b$ iff $\Denote{b}_m=\true$. Therefore, the predicate $\Box b$ holds on the subdistribution $\Delta$ when $b$ evaluates to be true under any state $m$ in the support of $\Delta$.
To reason about branching commands, we combine two assertions $P_1$ and $P_2$ into a new assertion $P_1\oplus P_2$ by letting
\[(P_1\oplus P_2)(\Delta) ~:=~ \exists \Delta_1, \Delta_2. \Delta=\Delta_1 + \Delta_2 \wedge P_1(\Delta_1) \wedge P_2(\Delta_2) .\]
In other words, the assertion $P_1\oplus P_2$ holds on the subdistribution $\Delta$ if we can split $\Delta$ into the sum of two subdistributions such that $P_1$ and $P_2$ holds on each of them.
Given an assertion $P$ and a function $f$ from $\PState$ to $\PState$, we define $$P[f] := \lambda \Delta. P(f(\Delta)) .$$ Therefore, $P[f]$ holds on a subdistribution $\Delta$ only when $P$ holds on the image of $\Delta$ under $f$.
} 

\begin{definition}
A sequence of assertions $(P_n)_{n\in\Nat^\infty}$ is \emph{u-closed}, if for each increasing sequence of POVDs $(\povd_n)_{n\in\Nat}$ such that $\povd_n\models P_n$ for all $n\in\Nat$, we have $\lim_{n\rightarrow\infty}\povd_n\models P_\infty$.
\end{definition}

\begin{definition}
	A judgement is a triple in the form $\triple{P}{c}{P'}$, where $c$ is a command, $P$ and $P'$ are assertions. It is valid, written $\models\triple{P}{c}{P'}$, if 
	\[\forall \povd.\ \povd\models P \ \Rightarrow \ \Denote{c}_\povd \models P' .\]
\end{definition}

In Figure~\ref{fig:proofrules} we give the rules for an abstract proof system denoted by 
 $\mathcal{S}_a$. It extends the system in \cite{BEGGHS18} with the last three rules to handle the manipulations of quantum systems. In order to show the soundness of $\mathcal{S}_a$, we need a few technical lemmas.

\begin{figure}
\[\begin{array}{c}
\linfer{}{\triple{P}{\Skip}{P}}{[\sf Skip]}
\qquad
\linfer{}{\triple{P}{\abort}{\Box\false}}{[\sf Abort] }
\qquad
\linfer{}{\triple{P[\Denote{x:=a}]}{x : = a }{P}}{[\sf Assgn]}
\VS\\
\prooftree 
\triple{P_0}{c_0}{P_1} \quad
\triple{P_1}{c_1}{P_2}
\justifies
\triple{P_0}{c_0;c_1}{P_2}
\using
[\sf Seq]
\endprooftree

\qquad
\prooftree
\triple{P_0}{c}{P'_0}
\quad
\triple{P_1}{c}{P'_1}
\justifies
\triple{P_0\oplus P_1}{c}{P'_0\oplus P'_1}
\using
[\sf Split]
\endprooftree
\VS\\

\prooftree
\triple{P_0\wedge\Box b}{c_0}{P'_0}
\quad
\triple{P_1\wedge\Box \neg b}{c_1}{P'_1}
\justifies
\triple{(P_0\wedge\Box b)\oplus (P_1 \wedge \Box\neg b)}
{\If b \Then c_0 \Else c_1}{P'_0\oplus P'_1}
\using
[\sf Cond]
\endprooftree
\VS\\
\prooftree
\justifies
\triple{\false}{c}{P}
\using
[\sf Absurd]
\endprooftree

\qquad
\prooftree
P_0 \Rightarrow P_1 \quad
\triple{P_1}{c}{P_2} \quad
P_2 \Rightarrow P_3
\justifies
\triple{P_0}{c}{P_3}
\using
[\sf Conseq]
\endprooftree

\qquad
\prooftree
\forall \povd.\ \triple{\Char_\povd\wedge P}{c}{P'}
\justifies
\triple{P}{c}{P'}
\using
[\sf All]
\endprooftree
\VS\\

\prooftree
\begin{array}{c}
\uclosed((P'_n)_{n\in\Nat^\infty})\\
\forall n.\ 
\triple{P_n}{\If b \Then c}{P_{n+1}}
\quad
\forall n.\ \triple{P_n}{\If b \Then \abort}{P'_n}
\end{array}
\justifies
\triple{P_0}{\While b \Do c}{P'_\infty \wedge \Box \neg b}
\using
[\sf While]
\endprooftree
\VS\\

\linfer{}{\triple{P[\Denote{q := |0\>}]}{q := |0\> }{P}}{[\sf QInit]}
\qquad
\linfer{}{\triple{P[\Denote{\tU[\bar{q}]}]}{\tU[\bar{q}]}{P}}{[\sf QUnit]}
\VS\\

\linfer{}{\triple{P[\Denote{x := \tM[\bar{q}]}]}{x := \tM[\bar{q}]}{P}}{[\sf QMeas]}
\end{array}\]
\caption{Proof rules for $\mathcal{S}_a$}\label{fig:proofrules}
\end{figure}

\begin{lemma}\label{lem:assgn1}
Let $P$ be an assertion and $c$ a command. Then 
$\models \triple{P[\Denote{c}]}{c}{P}$.
\end{lemma}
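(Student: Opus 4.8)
The plan is to prove the statement by directly unwinding the two relevant definitions, since this lemma is essentially the semantic content already captured by the $P[f]$ constructor of the assertion language, instantiated with $f = \Denote{c}$. First I would record a well-formedness remark: $\Denote{c}$ is a legitimate function from $\PState$ to $\PState$. Indeed, by the Proposition above, $\Denote{c}_\povd$ is again a POVD for every POVD $\povd$, so $\povd \mapsto \Denote{c}_\povd$ is precisely a map of the type required by the grammar clause $P[f]$, whence $P[\Denote{c}]$ is a well-formed assertion and the judgement $\triple{P[\Denote{c}]}{c}{P}$ makes sense.

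Next I would expand the definition of validity. By definition, $\models \triple{P[\Denote{c}]}{c}{P}$ asserts
\[\forall \povd.\ \povd \models P[\Denote{c}] \ \Rightarrow\ \Denote{c}_\povd \models P.\]
So I would fix an arbitrary $\povd$ and assume the antecedent $\povd \models P[\Denote{c}]$. Applying the satisfaction clause for the substitution modality, namely that $\povd \models P[f]$ holds iff $f(\povd)\models P$, with $f = \Denote{c}$, this hypothesis unfolds to exactly $\Denote{c}_\povd \models P$. But that is precisely the consequent to be established, so the implication holds; since $\povd$ was arbitrary, the judgement is valid.

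I expect no genuine obstacle here: once the $P[f]$ clause is unwound, the required implication is a tautology (its hypothesis and conclusion coincide). It is worth emphasising that this lemma is the common semantic backbone of the rules \textsf{Assgn}, \textsf{QInit}, \textsf{QUnit} and \textsf{QMeas}, all of which have the uniform shape $\triple{P[\Denote{c}]}{c}{P}$; proving it once at this level of generality justifies all four at a stroke. The only point deserving comment is the well-formedness observation above, which guarantees that $\Denote{c}$ may legitimately be substituted for $f$, and which is supplied by the Proposition without further work.
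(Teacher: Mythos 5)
Your proof is correct and is essentially identical to the paper's: both simply unfold the satisfaction clause $\povd \models P[f] \iff f(\povd)\models P$ with $f = \Denote{c}$, at which point the hypothesis and conclusion of the validity implication coincide. Your extra well-formedness remark (that $\Denote{c}$ maps $\PState$ to $\PState$, justified by the Proposition) is a harmless addition the paper leaves implicit.
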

\begin{proof}
	Suppose $\povd$ is a distribution state and $\povd\models P[\Denote{c}]$. By the definition of $P[\Denote{c}]$, this means that $\Denote{c}_\povd \models P$, which is the desired result.
\end{proof}

\begin{lemma}\label{lem:split}
	Let $\state$ be a classical state, $\qstate_1, \qstate_2$ be two quantum states, and $\povd_1, \povd_2$ be two POVDs. For any command $c$, we have
	\begin{enumerate}
		\item $\Denote{c}_{(\state,\qstate_1+\qstate_2)} = \Denote{c}_{(\state,\qstate_1)} + \Denote{c}_{(\state,\qstate_2)}$;
		\item $\Denote{c}_{(\povd_1+\povd_2)} = \Denote{c}_{\povd_1} + \Denote{c}_{\povd_2}$.
	\end{enumerate}
\end{lemma}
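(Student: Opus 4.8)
The plan is to prove part (1) by structural induction on $c$, and then to obtain part (2) as a consequence, using the pointwise definition of POVD addition together with the defining equation $\Denote{c}_\povd = \sum_{\state\in\support{\povd}}\Denote{c}_{(\state,\povd(\state))}$. The guiding idea is that every clause of the denotational semantics in Figure~\ref{fig:de} attaches to the command an operation that is linear in the quantum part $\qstate$, and that linearity is preserved by composition and by the monotone limit used for loops.

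First I would dispatch the base cases and the purely quantum commands, where the claim reduces to the linearity of the attached operation. For $\Skip$, $x:=a$, and the conditional, the quantum state is merely carried along or routed by a test on $\state$ that is insensitive to $\qstate$, so the identity is immediate once one observes that $(\state,\qstate_1)+(\state,\qstate_2)=(\state,\qstate_1+\qstate_2)$ as POVDs (in the conditional, $\qstate_1$ and $\qstate_2$ share the same $\state$, so the same branch is taken). For $\abort$ both sides equal $\varepsilon$. For $q:=|0\>$, $\tU[\bar q]$, and $x:=\tM[\bar q]$ the defining maps $\qstate\mapsto |0\>_q\<0|\qstate|0\>_q\<0| + |0\>_q\<1|\qstate|1\>_q\<0|$, $\qstate\mapsto \tU\qstate\tU^\dag$, and $\qstate\mapsto\sum_i M_i\qstate M_i^\dag$ (grouped by the value of $\state[i/x]$) are each manifestly linear in $\qstate$, so the equality holds termwise.

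The sequencing case $c_0;c_1$ is where the two parts interlock: by the induction hypothesis for $c_0$ I split $\Denote{c_0}_{(\state,\qstate_1+\qstate_2)}$ into $\Denote{c_0}_{(\state,\qstate_1)}+\Denote{c_0}_{(\state,\qstate_2)}$, and then I must push $\Denote{c_1}$ through this sum of POVDs, i.e. I need part (2) for the subterm $c_1$. I would therefore record once and for all the implication ``part (1) for a command $\Rightarrow$ part (2) for that command'': expanding $\Denote{c}_{\povd_1+\povd_2}$ via the defining sum and using $(\povd_1+\povd_2)(\state)=\povd_1(\state)+\povd_2(\state)$, part (1) splits each summand; since the $\povd_i(\state)$ are positive there is no cancellation, so $\support{\povd_1+\povd_2}=\support{\povd_1}\cup\support{\povd_2}$, and the vanishing fact $\Denote{c}_{(\state,0)}=\varepsilon$ (read off from part (1) with $\qstate_1=\qstate_2=0$, since $\nu=\nu+\nu$ forces $\nu=\varepsilon$ for positive operators) lets me re-index each half back to $\support{\povd_1}$ and $\support{\povd_2}$, yielding $\Denote{c}_{\povd_1}+\Denote{c}_{\povd_2}$. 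Applying this implication to $c_1$ (available from the IH) closes the sequencing case, and the same implication applied to an arbitrary $c$ gives part (2) once part (1) is fully established.

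The main obstacle is the loop $\While b \Do c'$, whose semantics is the L\"owner-order limit $\lim_{n\rightarrow\infty} \Denote{W_n}_{(\state,\qstate)}$ of the approximants $W_n=(\If b\Then c')^n;\If b\Then\abort$. I would first show, by an inner induction on $n$, that each $W_n$ satisfies part (1): this uses only the IH for the body $c'$ together with the already-established closure of part (1) under sequencing, conditionals, $\Skip$, and $\abort$ (those cases show that the class of commands satisfying part (1) is closed under the relevant constructors, so they apply to the compound $W_n$ even though $W_n$ is not itself a subterm of the loop). Then linearity must survive passage to the limit: since $(\Denote{W_n}_{(\state,\qstate_i)})_{n\in\Nat}$ and $(\Denote{W_n}_{(\state,\qstate_1+\qstate_2)})_{n\in\Nat}$ are increasing and convergent by the preceding lemma on increasing sequences of POVDs, and addition is continuous with respect to this monotone limit, I obtain $\lim_n(\Denote{W_n}_{(\state,\qstate_1)}+\Denote{W_n}_{(\state,\qstate_2)}) = \lim_n\Denote{W_n}_{(\state,\qstate_1)}+\lim_n\Denote{W_n}_{(\state,\qstate_2)}$, which is exactly the claimed identity for the loop. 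The delicate point is this interchange of limit and sum, which I would justify through the monotone convergence guaranteed by the cited lemma rather than by any explicit metric estimate.
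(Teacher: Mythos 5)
Your proposal is correct and takes essentially the same route as the paper, whose entire proof is the one-line remark that both clauses follow by a simultaneous structural induction on $c$; your organization (part (1) by induction, with the implication that (1) for a command yields (2) for that command recorded once and invoked in the sequencing case) is precisely that simultaneous induction spelled out. The details you supply---linearity of the quantum operations in the base cases, the vanishing fact $\Denote{c}_{(\state,0)}=\varepsilon$ used to re-index supports, the closure argument showing the approximants $W_n$ satisfy (1) even though they are not subterms of the loop, and the interchange of the monotone limit with addition---are all sound and are exactly what the paper leaves implicit.
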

\begin{proof}
The two clauses can be proved by a simultaneous induction on the structure of command $c$.
\end{proof}

\begin{lemma}\label{lem:seq1}
For any commands $c_0, c_1$ and distribution state $\povd$, we have	$\Denote{c_0;c_1}_\povd = \Denote{c_1}_{\Denote{c_0}_\povd}$.
\end{lemma}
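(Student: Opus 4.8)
The plan is to reduce the lifted identity to two ingredients already at hand: the single-state clause for sequential composition in Figure~\ref{fig:de}, which reads $\Denote{c_0;c_1}_{(\state,\qstate)} = \Denote{c_1}_{\Denote{c_0}_{(\state,\qstate)}}$, and the additivity of command interpretation over sums of POVDs from Lemma~\ref{lem:split}(2).

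First I would expand the left-hand side with the defining equation of the lifting, $\Denote{c}_\povd = \sum_{\state\in\support{\povd}}\Denote{c}_{(\state,\povd(\state))}$, yielding $\Denote{c_0;c_1}_\povd = \sum_{\state\in\support{\povd}}\Denote{c_0;c_1}_{(\state,\povd(\state))}$, and then rewrite each summand by the sequential-composition clause as $\Denote{c_1}_{\Denote{c_0}_{(\state,\povd(\state))}}$; this is legitimate because $\Denote{c_0}_{(\state,\povd(\state))}$ is itself a POVD by the preceding proposition. For the right-hand side I would note that $\Denote{c_0}_\povd = \sum_{\state\in\support{\povd}}\Denote{c_0}_{(\state,\povd(\state))}$ is a sum of POVDs and push $\Denote{c_1}$ inside this sum, obtaining $\Denote{c_1}_{\Denote{c_0}_\povd} = \sum_{\state\in\support{\povd}}\Denote{c_1}_{\Denote{c_0}_{(\state,\povd(\state))}}$. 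The two resulting indexed families coincide term by term, which closes the argument.

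The one nontrivial step, and the main obstacle, is distributing $\Denote{c_1}$ over the sum indexed by $\support{\povd}$: Lemma~\ref{lem:split}(2) only delivers additivity for two summands, while $\support{\povd}$ may be countably infinite. For finite support this is a routine induction on the number of summands. For countably infinite support I would write $\povd$ as the limit of the increasing sequence of its finite truncations---which converges by the earlier lemma on increasing bounded sequences of POVDs---apply finite additivity to each truncation, and then pass $\Denote{c_1}$ through the limit. The last move needs continuity of $\Denote{c_1}_{(\cdot)}$ in its POVD argument, which I would establish by a structural induction on $c_1$ mirroring Lemma~\ref{lem:split}, using that every clause of Figure~\ref{fig:de} is built from trace-non-increasing, $\leq$-continuous maps. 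Apart from this continuity bookkeeping, every remaining equality is purely definitional.
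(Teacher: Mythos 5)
Your proof is correct, and it differs from the paper's in a recognizable way. The paper argues by a single chain of equalities starting from $\Denote{c_1}_{\Denote{c_0}_\povd}$: it expands $\Denote{c_0}_\povd$ pointwise as $\Denote{c_0}_\povd(\state)=\sum_{\state'}\Denote{c_0}_{(\state',\povd(\state'))}(\state)$, distributes $\Denote{c_1}_{(\state,\cdot)}$ over this sum of \emph{quantum states} by Lemma~\ref{lem:split}(1), interchanges the resulting double sum over $\state$ and $\state'$, and recombines via the single-state clause $\Denote{c_0;c_1}_{(\state',\qstate)}=\Denote{c_1}_{\Denote{c_0}_{(\state',\qstate)}}$ of Figure~\ref{fig:de}. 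You instead expand both sides and push $\Denote{c_1}$ through the sum of \emph{POVDs} $\Denote{c_0}_\povd=\sum_{\state}\Denote{c_0}_{(\state,\povd(\state))}$ using Lemma~\ref{lem:split}(2), thereby avoiding the double-sum interchange. The countability issue you flag is genuine but not special to your route: the paper also applies the binary Lemma~\ref{lem:split}(1) to a countably infinite sum, and moreover swaps two infinite sums, all without comment, so on this point your proof is the more careful of the two. Your proposed repair works, but it is heavier than needed: instead of proving $\leq$-continuity of $\Denote{c_1}$ by structural induction (whose while case requires an exchange of two limits), observe that by finite additivity the difference $\Denote{c_1}_{\Denote{c_0}_\povd}-\sum_{\state\in F}\Denote{c_1}_{\Denote{c_0}_{(\state,\povd(\state))}}$ equals $\Denote{c_1}_{\nu_F}$, where $\nu_F$ is the tail $\sum_{\state\notin F}\Denote{c_0}_{(\state,\povd(\state))}$ over a finite set $F$; this difference is positive, and since the semantics is trace-non-increasing (provable by the same induction as Lemma~\ref{lem:split}, and implicit in the Proposition stating that $\Denote{c}_\povd$ is a POVD), its total trace is at most $\sum_{\state\notin F}\tr(\povd(\state))$, which tends to $0$ as $F$ exhausts the support of $\povd$. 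Hence countable additivity follows directly from the binary case with no continuity machinery, and both your proof and the paper's can be closed rigorously this way.
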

\begin{proof}
	\[\begin{array}{rcl}
 	\Denote{c_1}_{\Denote{c_0}_\povd} 
& = & \sum_\state  \Denote{c_1}_{(\state,\Denote{c_0}_\povd(\state))}\\
& = & \sum_\state  \Denote{c_1}_{(\state,\sum_{\state'}\Denote{c_0}_{(\state',\povd(\state'))}(\state))}\\
& = & \sum_\state\sum_{\state'}\Denote{c_1}_{(\state,\Denote{c_0}_{(\state',\povd(\state'))}(\state))}\qquad\mbox{by Lemma~\ref{lem:split}(1)}\\
& = & \sum_{\state'}\sum_{\state}\Denote{c_1}_{(\state,\Denote{c_0}_{(\state',\povd(\state'))}(\state))}\\
& = & \sum_{\state'}\Denote{c_1}_{\Denote{c_0}_{(\state',\povd(\state'))}}\\
& = & \sum_{\state'}\Denote{c_0;c_1}_{(\sigma',\povd(\sigma'))} \\
& = & \Denote{c_0;c_1}_{\povd}
	\end{array}\]
\end{proof}

\begin{theorem}\label{thm:sounda}
	{\bf (Soundness)}
	Every judgement provable using the proof system $\mathcal{S}_a$ is valid.
\end{theorem}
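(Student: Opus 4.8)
The plan is to prove soundness by rule induction on the derivation of $\triple{P}{c}{P'}$ in $\mathcal{S}_a$. For each proof rule, I would assume the premises are valid (the induction hypothesis) and show the conclusion is valid, i.e. that $\povd\models P$ implies $\Denote{c}_\povd\models P'$ for the relevant command and assertions. The base cases for the quantum commands ([\sf QInit], [\sf QUnit], [\sf QMeas]) and for [\sf Assgn] all share the same shape $\triple{P[\Denote{c}]}{c}{P}$, and these follow immediately from Lemma~\ref{lem:assgn1}, which already establishes $\models\triple{P[\Denote{c}]}{c}{P}$ for an arbitrary command $c$. The [\sf Skip] rule is trivial since $\Denote{\Skip}_\povd=\povd$, and [\sf Abort] holds because $\Denote{\abort}_\povd=\varepsilon$ has empty support, so $\varepsilon\models\Box\false$ vacuously. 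The rules [\sf Absurd] and [\sf All] are also straightforward: nothing satisfies $\false$, and the [\sf All] rule holds because $\povd\models P$ lets us instantiate the premise with $\Char_\povd\wedge P$.

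Next I would handle the compositional rules. For [\sf Seq] I would appeal to Lemma~\ref{lem:seq1}: given $\povd\models P_0$, the first premise gives $\Denote{c_0}_\povd\models P_1$, and the second premise applied to $\Denote{c_0}_\povd$ gives $\Denote{c_1}_{\Denote{c_0}_\povd}\models P_2$, which equals $\Denote{c_0;c_1}_\povd\models P_2$. For [\sf Split] the key ingredient is Lemma~\ref{lem:split}(2): if $\povd\models P_0\oplus P_1$ then $\povd=\povd_0+\povd_1$ with $\povd_i\models P_i$, so $\Denote{c}_\povd=\Denote{c}_{\povd_0}+\Denote{c}_{\povd_1}$ with each summand satisfying $P'_i$ by the premises, hence $\Denote{c}_\povd\models P'_0\oplus P'_1$. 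The [\sf Cond] rule combines [\sf Split]-style reasoning with the observation that on the part of the state where $\Box b$ holds the conditional behaves like $c_0$ and on the $\Box\neg b$ part like $c_1$; I would need the fact that $\Denote{\If b \Then c_0 \Else c_1}$ acts as $\Denote{c_0}$ on states in the support satisfying $b$ (and $\Denote{c_1}$ on those satisfying $\neg b$), which follows from the denotational clause for conditionals together with the support-wise decomposition of POVDs. The [\sf Conseq] rule is immediate from the definition of validity and the semantic meaning of $\Rightarrow$ as set inclusion of interpretations.

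The main obstacle, as the theorem's own proof of the operational/denotational correspondence already signals, will be the [\sf While] rule. Here I would use the denotational definition of the loop as the limit of its finite unrollings, $\Denote{\While b \Do c}_{(\state,\qstate)}=\lim_{n\to\infty}\Denote{(\If b \Then c)^n; \If b \Then \abort}_{(\state,\qstate)}$. The strategy is to show by induction on $n$ that each finite approximation satisfies $P'_n$: from $\povd\models P_0$ and the premise $\triple{P_n}{\If b \Then c}{P_{n+1}}$ one tracks the POVD through $n$ iterations of $\If b \Then c$ to obtain a POVD satisfying $P_n$, and then the premise $\triple{P_n}{\If b \Then \abort}{P'_n}$ yields that the $n$-th approximation satisfies $P'_n$. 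Passing to the limit is exactly where the hypothesis $\uclosed((P'_n)_{n\in\Nat^\infty})$ is needed: since the approximations form an increasing sequence of POVDs (by the lemma on increasing POVD sequences converging to a limit) with the $n$-th one satisfying $P'_n$, u-closedness delivers that the limit $\Denote{\While b \Do c}_\povd$ satisfies $P'_\infty$. Finally I would argue that the limit satisfies $\Box\neg b$, because every terminating branch of the loop exits only when $b$ is false, so every classical state in the support of the limit falsifies $b$. The delicate points are justifying the interchange of the limit with the satisfaction relation (handled by u-closedness) and verifying that the finite-unrolling POVDs indeed satisfy the $P'_n$ by correctly interleaving the two loop-body premises.
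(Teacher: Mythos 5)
Your proposal is correct and follows essentially the same route as the paper's own proof: a rule-by-rule induction in which the assignment-style rules reduce to Lemma~\ref{lem:assgn1}, {[\sf Seq]} to Lemma~\ref{lem:seq1}, {[\sf Split]} to Lemma~\ref{lem:split}, {[\sf Cond]} to the support-wise decomposition plus the soundness of {[\sf Split]}, and {[\sf While]} to validity of the finite unrollings $\triple{P_0}{(\If b \Then c)^n;\If b \Then \abort}{P'_n}$, u-closedness for the passage to the limit, and the observation that every state in the support of each approximation (hence of the limit) falsifies $b$. The only difference is presentational: the paper phrases the $\Box\neg b$ step via the explicit fact that $\Denote{c';\If b \Then \abort}_\povd\models\Box\neg b$ for any $c'$, which is the precise form of your "terminating branches exit only when $b$ is false" remark.
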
 
\begin{proof}

We analyze the cases one by one.
\begin{itemize}
\item Rule {[\sf Skip]}. Suppose $\povd\models P$ for some distribution state $\povd$. Then we have $\Denote{\Skip}_\povd = \povd$ and thus
$\Denote{\Skip}_\povd\models P$ as required.

\item Rule  {[\sf Abort]}. This case is easy by noting that $\Denote{\abort}_\povd=\varepsilon$ and $\varepsilon\models\Box\false$ for any $\povd$.

\item The cases for rules {[\sf Assgn], [\sf QInit], [\sf QUnit]}, and {[\sf QMeas]} follow from Lemma~\ref{lem:assgn1}. 

\item Rule {[\sf Seq]}. Suppose $\povd\models P_0$ for some distribution state $\povd$. By  the premises, both $\triple{P_0}{c_0}{P_1}$ and $ \triple{P_1}{c_1}{P_2}$ are valid. It follows that $\Denote{c_0}_\povd \models P_1$ and then $\Denote{c_1}_{\Denote{c_0}_\povd}\models P_2$, which is $\Denote{c_0;c_1}_\povd\models P_2$ by Lemma~\ref{lem:seq1} as required.

\item Rule {[\sf Split]}. Suppose $\povd \models P_0\oplus P_1$ for some distribution state $\povd$. Then there exist $\povd_0$ and $\povd_1$ such that $\povd=\povd_0+\povd_1$, $\povd_0\models P_0$ and $\povd_1\models P_1$. By  the premises, both $\triple{P_0}{c_0}{P'_0}$ and $ \triple{P_1}{c_1}{P'_1}$ are valid. Therefore, we have that $\Denote{c}_{\povd_0}\models P'_0$ and  $\Denote{c}_{\povd_1}\models P'_1$. By Lemma~\ref{lem:split} we obtain $\Denote{c}_\povd = \Denote{c}_{\povd_0} + \Denote{c}_{\povd_1}$. It follows that $\Denote{c}_\povd\models P'_0\oplus P'_1$ as required.

\item Rule {[\sf Cond]}. We first claim that $\triple{P_0\wedge\Box b}{\If b \Then c_0 \Else c_1}{P'_0}$ is valid. To see this, suppose $\povd$ is a POVD with $\povd\models P_0\wedge\Box b$. Obviously, we have $\povd\models\Box b$ and thus $\Denote{b}_\state=\true$ for each $\state\in\support{\povd}$. It follows that
\[\begin{array}{rcl}
\Denote{\If b \Then c_0 \Else c_1}_\povd & = & \sum_\state \Denote{\If b \Then c_0 \Else c_1}_{(\state,\povd(\state))} \\
& = & \sum_\state  \Denote{c_0}_{(\state,\povd(\state))} \\
& = & \Denote{c_0}_\povd .
\end{array}\]
By the first premise, $\triple{P_0\wedge\Box b}{c_0}{P'_0}$ is valid. Therefore, 
we have $\Denote{c_0}_\povd\models P'_0$, and thus
 $$\Denote{\If b \Then c_0 \Else c_1}_\povd \models P'_0 $$
 and the above claim is proved. Similarly, we can prove that
 $\triple{P_1\wedge\Box \neg b}{\If b \Then c_0 \Else c_1}{P'_1}$ is valid.
 By the soundness of  {[\sf Split]}, it follows that $$\triple{(P_0\wedge\Box b)\oplus (P_1 \wedge \Box\neg b)}
 {\If b \Then c_0 \Else c_1}{P'_0\oplus P'_1}$$ is also valid.
 
 \item Rule {[\sf Absurd]}. There exists no $\povd$ with $\povd\models \false$. Thus, we always have $\forall \povd.\ \povd\models \false \Rightarrow \Denote{c}_\povd\models P$. 
 
 \item Rule {[\sf Conseq]}. Let $\povd$ be a distribution state and $\povd\models P_0$. The first premise gives $\povd\models P_1$. The second premise tells us that $\Denote{c}_\povd\models P_2$. By the third premise, we derive that $\Denote{c}_\povd\models P_3$. It follows that $\triple{P_0}{c}{P_3}$ is valid.
 
 \item Rule {[\sf All]}. Let $\povd$ be a POVD and $\povd\models P$. It is clear that $\povd\models \Char_\povd \wedge P$. By the premise, $\triple{\Char_\povd\wedge P}{c}{P'}$ is valid. Therefore, we have $\Denote{c}_\povd \models P'$, and thus $\triple{P}{c}{P'}$ is valid.
 
 \item Rule {[\sf While]}. We first observe that, for any state $(\state,\qstate)$,
 \[ \Denote{\If b \Then \abort}_{(\state,\qstate)} = \left\{\begin{array}{ll}
  	\varepsilon & \mbox{if $\Denote{b}_{(\state,\qstate)}=\true$}\\
  	{(\state,\qstate)} & \mbox{if $\Denote{b}_{(\state,\qstate)}=\false$} \ .
  	\end{array}
  \right. \]
  Thus, if a state $\state'$ is in the support of $\Denote{\If b \Then \abort}_{(\state,\qstate)} $, it must be the case that $\state'\models \neg b$. Furthermore, for any distribution state $\povd$, if a state $\state'$ is in the support of $\Denote{\If b \Then \abort}_\povd$ then $\state'\models \neg b$. It follows that, for any command $c'$ and distribution state $\povd$, we have 
  \[\Denote{c'; \If b \Then \abort}_\povd\models \Box \neg b. \]
  By definition, $\Denote{\While b \Do c}_\povd$ is the limit of the sequence 
   \[(\Denote{(\If b \Then c)^n; \If b \Then \abort}_\povd)_{n\in\Nat}\]
   and  so we have that 
   \begin{equation}\label{e:1}
   \Denote{\While b \Do c}_\povd\models \Box \neg b.
   \end{equation}
   
   By the first premise and the soundness of {[\sf Seq]}, it is easy to show by induction that
   \[\forall n.\ 
   \triple{P_0}{(\If b \Then c)^n}{P_{n}}\]
   is valid. By the second premise and {[\sf Seg]} again, the following judgement
   \[\forall n.\ 
   \triple{P_0}{(\If b \Then c)^n; \If b \Then \abort}{P'_{n}}\]
   is valid. Let $\povd$ be any POVD with $\povd\models P_0$. Then
     \[\forall n.\ 
   \Denote{(\If b \Then c)^n; \If b \Then \abort}_\povd\models P'_{n} .\]
   By assumption, the sequence of assertions $(P'_n)_{n\in\Nat^\infty}$ is u-closed. Hence, we can infer that $$\Denote{\While b \Do c}_\povd \models P'_\infty, $$
  which means that the judgement
  \begin{equation}\label{e:2}
  \triple{P_0}{\While b \Do c}{P'_\infty}
  \end{equation} is valid.
  Combining (\ref{e:1}) and (\ref{e:2}), we finally obtain that 
  $ \triple{P_0}{\While b \Do c}{P'_\infty \wedge \Box \neg b}$ is valid.
\end{itemize}
\end{proof}

Now we turn to the relative completeness of the proof system $\mathcal{S}_a$. Formulas of the form $\Char_\povd$  will be helpful for that purpose.

\begin{lemma}\label{lem:seq}
For any distribution state $\povd$ and command $c$, 
\[\Char_\povd \Rightarrow \Char_{\Denote{c}_\povd}[\Denote{c}] .\]
\end{lemma}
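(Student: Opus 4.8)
The plan is to prove the implication directly, by unfolding the definitions of the connectives involved. Recall that an implication $P_0 \Rightarrow P_1$ between assertions (as used in rule {[\sf Conseq]}) holds precisely when $\Denote{P_0} \subseteq \Denote{P_1}$, i.e. when every POVD satisfying $P_0$ also satisfies $P_1$. So it suffices to fix an arbitrary distribution state $\povd'$ with $\povd' \models \Char_\povd$ and establish $\povd' \models \Char_{\Denote{c}_\povd}[\Denote{c}]$.

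First I would read off the antecedent using the definition of the characteristic predicate: $\povd' \models \Char_\povd$ means exactly $\povd' = \povd$. Next I would unfold the substitution operator in the consequent with $f = \Denote{c}$: by the clause $\povd' \models P[f]$ iff $f(\povd') \models P$, the goal $\povd' \models \Char_{\Denote{c}_\povd}[\Denote{c}]$ reduces to $\Denote{c}_{\povd'} \models \Char_{\Denote{c}_\povd}$. A second application of the definition of $\Char$ then turns this into the plain equation $\Denote{c}_{\povd'} = \Denote{c}_\povd$.

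The final step discharges this equation using $\povd' = \povd$ obtained above. Since the lifted semantics $\Denote{c}$ is a well-defined total function on $\PState$ — its totality and the fact that it returns a POVD are guaranteed by the proposition following Figure~\ref{fig:de} — applying it to the two equal arguments $\povd'$ and $\povd$ yields the same output, so $\Denote{c}_{\povd'} = \Denote{c}_\povd$ holds at once.

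There is no genuine obstacle here: the lemma is a direct consequence of the way $\Char$ and the substitution operator $[\cdot]$ are defined, and it records in purely syntactic form the semantic fact already captured by Lemma~\ref{lem:assgn1}. The only point requiring care is to keep the roles straight — the substitution $[\Denote{c}]$ must be read as precomposition by the transformer $\Denote{c}$, so that the consequent \emph{observes} the image $\Denote{c}_{\povd'}$ rather than $\povd'$ itself; once this reading is fixed, equality of images follows immediately from equality of arguments.
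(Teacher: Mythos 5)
Your proof is correct and follows essentially the same route as the paper's: fix an arbitrary $\povd'$, unfold $\Char_\povd$ to get $\povd'=\povd$, unfold the substitution $[\Denote{c}]$ and $\Char_{\Denote{c}_\povd}$ to reduce the goal to $\Denote{c}_{\povd'}=\Denote{c}_\povd$, and conclude by functionality of $\Denote{c}$. The paper presents this as a chain of (bi)implications, but the content is identical.
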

\begin{proof}
Let $\povd'$ be any distribution state. 
\[\begin{array}{rcl}
\povd'\models \Char_\povd & \Leftrightarrow & \povd'=\povd \\
& \Rightarrow & \Denote{c}_{\povd'} = \Denote{c}_\povd \\
& \Leftrightarrow &  \Denote{c}_{\povd'} \models \Char_{ \Denote{c}_{\povd} } \\
& \Leftrightarrow & \povd' \models \Char_{ \Denote{c}_{\povd}}[\Denote{c}]
\end{array}\]
\end{proof}

\begin{definition}
Let $\povd$ be a distribution state and $b$ a boolean expression. The restriction $\povd_{|b}$ of $\povd$ to $b$ is the distribution state such that $\povd_{|b}(\state)=\povd(\state)$ if $\Denote{b}_\state=\true$ and $0$ otherwise.
\end{definition}
According to the definition above, it is easy to see that we can split any $\povd$ into two parts w.r.t. a boolean expression.
\begin{lemma}\label{lem:decomp}
	For any distribution state $\povd$ and boolean expression b, we have
	$\povd = \povd_{|b} + \povd_{|\neg b}$.
\end{lemma}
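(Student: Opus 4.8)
# Proof Proposal for Lemma (Decomposition)

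The statement to prove is that for any distribution state $\povd$ and boolean expression $b$, we have $\povd = \povd_{|b} + \povd_{|\neg b}$.

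The plan is to prove this by a direct pointwise argument: two POVDs are equal exactly when they agree as functions on every classical state $\state \in \Sigma$, so I would fix an arbitrary $\state$ and verify that $(\povd_{|b} + \povd_{|\neg b})(\state) = \povd(\state)$. First I would recall from the definition of POVD addition that $(\povd_{|b} + \povd_{|\neg b})(\state) = \povd_{|b}(\state) + \povd_{|\neg b}(\state)$, reducing the whole statement to a check at each individual state.

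The key step is a case analysis on the truth value of $\Denote{b}_\state$, which is well-defined and total since $b$ is a boolean expression depending only on the classical variables recorded in $\state$. If $\Denote{b}_\state = \true$, then by the definition of restriction we have $\povd_{|b}(\state) = \povd(\state)$, while $\Denote{\neg b}_\state = \false$ forces $\povd_{|\neg b}(\state) = 0$; hence the sum is $\povd(\state) + 0 = \povd(\state)$. Symmetrically, if $\Denote{b}_\state = \false$, then $\povd_{|b}(\state) = 0$ and $\povd_{|\neg b}(\state) = \povd(\state)$, so the sum is again $\povd(\state)$. Since $b$ is a two-valued boolean expression, these two cases are exhaustive and mutually exclusive, so the equality holds at every $\state$.

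I expect no genuine obstacle here; the only point deserving care is confirming that the restriction genuinely partitions each state into exactly one of the two branches, i.e. that $\Denote{\neg b}_\state = \true$ precisely when $\Denote{b}_\state = \false$. This follows immediately from the standard semantics of negation on boolean expressions, so the lemma reduces to the trivial algebraic identities $\povd(\state) + 0 = \povd(\state)$ and $0 + \povd(\state) = \povd(\state)$. Since the equality is established at each $\state$, the two POVDs coincide as functions and the result follows.
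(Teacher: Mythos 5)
Your proof is correct and takes essentially the same approach as the paper: the paper's proof is a one-line observation that at each state $\sigma$ the boolean expression $b$ evaluates to either $\true$ or $\false$, which is exactly the pointwise case analysis you spell out in detail. Your version just makes the routine details explicit (pointwise equality of POVDs, the definition of restriction, and the semantics of negation), so there is nothing to add or correct.
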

\begin{proof}
	This is straightforward because, at each state $\sigma$ in the support of $\povd$, the boolean expression $b$ evaluates to either $\true$ or $\false$.
\end{proof}

With Lemmas~\ref{lem:decomp} and \ref{lem:split}, it is easy to see that the denotational semantics of conditional commands can be rewritten as follows.
\begin{equation}\label{eq:cond}
\Denote{\If b \Then c_0 \Else c_1}_\povd ~=~ \Denote{c_0}_{\povd_{|b}} + \Denote{c_1}_{\povd_{|\neg b}}
\end{equation}
The following facts are also easy to show.
\begin{equation}\label{eq:Char}
\begin{array}{rcl}
 \Char_{\povd | b} & \Leftrightarrow &\Char_{\povd | b} \wedge \Box b \\
  \Char_{\povd} & \Leftrightarrow &\Char_{\povd} \wedge P \qquad\mbox{if $\povd\models P$}\\
 \Char_{\povd} & \Leftrightarrow & \Char_{\povd|b} \oplus \Char_{\povd|\neg b}\\
 \Char_{\povd_1+\povd_2} & \Leftrightarrow & \Char_{\povd_1} \oplus \Char_{\povd_2}
\end{array}
\end{equation}

\begin{lemma}\label{lem:CharP}
	For any POVD $\povd$, the following judgement is provable:
	\[\triple{\Char_{\povd}}{c}{\Char_{\Denote{c}_\povd}} . \]
\end{lemma}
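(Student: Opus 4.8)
The plan is to prove the lemma by induction on the structure of $c$, in each case applying the proof rule of Figure~\ref{fig:proofrules} whose conclusion matches the outermost constructor of $c$, and then reshaping the pre- and postconditions with rule~[\textsf{Conseq}] and the algebraic equivalences collected in \eqref{eq:Char}. The whole argument is thus a matter of matching each command to its rule and massaging characteristic formulas, so the creative content is concentrated entirely in the loop case.

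The atomic commands are all handled by one uniform pattern. For $c \equiv x:=a$, instantiating rule~[\textsf{Assgn}] with $P := \Char_{\Denote{c}_\povd}$ gives $\triple{\Char_{\Denote{c}_\povd}[\Denote{c}]}{c}{\Char_{\Denote{c}_\povd}}$, while Lemma~\ref{lem:seq} supplies $\Char_\povd \Rightarrow \Char_{\Denote{c}_\povd}[\Denote{c}]$; one use of [\textsf{Conseq}] yields the claim. The cases $q:=|0\>$, $\tU[\bar q]$ and $x:=\tM[\bar q]$ are verbatim the same using [\textsf{QInit}], [\textsf{QUnit}] and [\textsf{QMeas}]. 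For $c\equiv\Skip$ rule~[\textsf{Skip}] suffices since $\Denote{\Skip}_\povd=\povd$, and for the auxiliary command $\abort$ (needed below) rule~[\textsf{Abort}] together with the observation that $\Box\false$ and $\Char_\varepsilon$ hold on exactly the empty-support POVD, i.e. $\Box\false \Leftrightarrow \Char_\varepsilon$, gives $\triple{\Char_\povd}{\abort}{\Char_\varepsilon}$ after one [\textsf{Conseq}].

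For $c\equiv c_0;c_1$ the two induction hypotheses give $\triple{\Char_\povd}{c_0}{\Char_{\Denote{c_0}_\povd}}$ and $\triple{\Char_{\Denote{c_0}_\povd}}{c_1}{\Char_{\Denote{c_1}_{\Denote{c_0}_\povd}}}$, and [\textsf{Seq}] followed by the rewriting $\Denote{c_1}_{\Denote{c_0}_\povd}=\Denote{c_0;c_1}_\povd$ (Lemma~\ref{lem:seq1}) closes the case. For $c\equiv\If b\Then c_0\Else c_1$ I instantiate rule~[\textsf{Cond}] with its $P_0:=\Char_{\povd_{|b}}$, $P_1:=\Char_{\povd_{|\neg b}}$ and postconditions $\Char_{\Denote{c_0}_{\povd_{|b}}}$, $\Char_{\Denote{c_1}_{\povd_{|\neg b}}}$; the premises $\triple{\Char_{\povd_{|b}}\wedge\Box b}{c_0}{\Char_{\Denote{c_0}_{\povd_{|b}}}}$ follow from the induction hypothesis applied to $\povd_{|b}$ after weakening $\Char_{\povd_{|b}}\wedge\Box b\Rightarrow\Char_{\povd_{|b}}$ (first line of \eqref{eq:Char}), symmetrically for $c_1$. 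The conclusion's precondition then collapses via the first, third and fourth lines of \eqref{eq:Char} and Lemma~\ref{lem:decomp} to $\Char_\povd$, and its postcondition collapses via the fourth line of \eqref{eq:Char} and \eqref{eq:cond} to $\Char_{\Denote{\If b\Then c_0\Else c_1}_\povd}$; a final [\textsf{Conseq}] finishes the case.

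The main obstacle is the loop $c\equiv\While b\Do c'$. Writing $\nu_n:=\Denote{(\If b\Then c')^n}_\povd$, I instantiate rule~[\textsf{While}] with $P_n:=\Char_{\nu_n}$ and $P'_n:=\Char_{(\nu_n)_{|\neg b}}$, so that $P_0=\Char_\povd$. Both families of premises reduce to the conditional case already established: applying it to $\If b\Then c'\Else\Skip$ (using the induction hypothesis for $c'$ and the base case for $\Skip$) and rewriting $\Denote{\If b\Then c'}_{\nu_n}=\nu_{n+1}$ by Lemma~\ref{lem:seq1} yields $\triple{P_n}{\If b\Then c'}{P_{n+1}}$, while applying it to $\If b\Then\abort\Else\Skip$ (using the base cases for $\abort$ and $\Skip$) and the identity $\Denote{\If b\Then\abort}_{\nu_n}=(\nu_n)_{|\neg b}$ yields $\triple{P_n}{\If b\Then\abort}{P'_n}$. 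The delicate point is the side condition $\uclosed((P'_n)_{n\in\Nat^\infty})$: since each $P'_n=\Char_{(\nu_n)_{|\neg b}}$ is satisfied by a single POVD, the only increasing sequence $(\povd_n)$ with $\povd_n\models P'_n$ for all $n$ is $\povd_n=(\nu_n)_{|\neg b}$ itself, which is precisely the increasing, bounded sequence whose limit defines $\Denote{\While b\Do c'}_\povd$. Choosing $P'_\infty:=\Char_{\Denote{\While b\Do c'}_\povd}$ therefore makes the family u-closed, and rule~[\textsf{While}] delivers $\triple{\Char_\povd}{\While b\Do c'}{\Char_{\Denote{\While b\Do c'}_\povd}\wedge\Box\neg b}$; a last [\textsf{Conseq}] drops the redundant conjunct $\Box\neg b$ to give the claim.
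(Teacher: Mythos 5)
Your proof is correct and follows essentially the same route as the paper's: structural induction, with the atomic cases discharged by Lemma~\ref{lem:seq} and [\textsf{Conseq}], the conditional handled via restrictions $\povd_{|b}$, $\povd_{|\neg b}$ and \eqref{eq:Char}, and the loop handled by instantiating [\textsf{While}] with $P_n=\Char_{\Denote{(\If b \Then c')^n}_\povd}$ and $P'_n$ the characteristic function of the same POVD the paper uses (your $(\nu_n)_{|\neg b}$ equals $\Denote{(\If b \Then c')^n;\If b \Then \abort}_\povd$). Your only departure is cosmetic but welcome: you spell out why the family $(P'_n)$ is u-closed (characteristic assertions pin down a unique increasing sequence whose limit is by definition $\Denote{\While b \Do c'}_\povd$), a point the paper dismisses as obvious.
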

\begin{proof}
We proceed by induction on the structure of $c$.
\begin{itemize}
	\item $c\equiv \Skip$. This case is immediate as $\Denote{\Skip}_\povd=\povd$ and by {[\sf Skip]} we have $\vdash \triple{\Char_{\povd}}{c}{\Char_{\povd}} . $
	\item $c\equiv\abort$. Then $\Denote{c}_\povd=\varepsilon$. For any POVD $\povd'$, we note that \[\povd'\models\Box \false ~\Leftrightarrow~ \povd'=\varepsilon ~\Leftrightarrow~ \povd'\models\Char_{\varepsilon} . \]
	By rules {[\sf Abort]} and {[\sf Conseq]} we can infer 
	$\vdash \triple{\Char_{\povd}}{c}{\Char_{\varepsilon}} $.
	
	\item $c\equiv x:=a,\ q := |0\>,\ \tU[\bar{q}]$ or $x := \tM[\bar{q}]$.  By the corresponding rules
	{[\sf Assgn]}, {[\sf QInit]}, {[\sf QUnit]} or {[\sf QMeas]}, we have 
	$$\vdash \triple{\Char_{\Denote{c}_\povd} [\Denote{c}]}{c}{\Char_{\Denote{c}_\povd}} . $$
	By Lemma~\ref{lem:seq} and rule {[\sf Conseq]}, we obtain that 
		$\vdash \triple{\Char_{\povd}} {c}{\Char_{\Denote{c}_\povd}} . $
		
	\item $c\equiv c_0;c_1$. By induction, we have $\vdash\triple{\Char_{\povd}}{c_0}{\Char_{\Denote{c_0}_\povd}}$ and 
	$\vdash\triple{\Char_{\Denote{c_0}_\povd}}{c_1}{\Char_{\Denote{c_1}_{\Denote{c_0}_\povd}}}$. Using the rule {[\sf Seq]}, we obtain that 
	$\vdash\triple{\Char_{\povd}}{c}{\Char_{\Denote{c_1}_{\Denote{c}_\povd}}}$. 
	
	\item $c\equiv \If b \Then c_0 \Else c_1$. By induction, we have 
	$\vdash \triple{\Char_{\povd_{|b}}}{c_0}{\Char_{\Denote{c_0}_{\povd_{|b}}}}$. By the first clause in (\ref{eq:Char}) and rule {[\sf Conseq]}, we have
	$\vdash \triple{\Char_{\povd_{|b}}\wedge \Box b}{c_0}{\Char_{\Denote{c_0}_{\povd_{|b}}}}$. Similarly, 
	$\vdash \triple{\Char_{\povd_{|\neg b}}\wedge \Box \neg b}{c_1}{\Char_{\Denote{c_1}_{\povd_{|\neg b}}}}$. Using rule {[\sf Cond]}, we infer
	$$\vdash \triple{(\Char_{\povd_{|b}} \wedge \Box b) \oplus (\Char_{\povd_{|\neg b}}\wedge \Box \neg b)}{\If b \Then c_0 \Else c_1}{\Char_{\Denote{c_0}_{\povd_{|b}}} \oplus \Char_{\Denote{c_1}_{\povd_{|\neg b}}}}. $$
	Using (\ref{eq:cond}), (\ref{eq:Char}) and rule {[\sf Conseq]}, we finally obtain that
	$$\vdash \triple{\Char_{\povd}}{\If b \Then c_0 \Else c_1}{\Char_{\Denote{\If b \Then c_0 \Else c_1}_\povd}}.$$
	
	\item $c\equiv \While b \Do c'$. For each $n\in\Nat$, let
	\[\begin{array}{rcl}
	P_n & = & \Char_{\Denote{(\If b \Then c')^n}_\povd} \\
	P'_n & = & \Char_{\Denote{(\If b \Then c')^n; \If b \Then \abort}_\povd} \\
	P'_\infty & = & \Char_{\lim_{n\rightarrow\infty}\Denote{(\If b \Then c')^n; \If b \Then \abort}_\povd}
	\end{array}\]
	Obviously, the sequence of assertions $(P'_n)_{n\in\Nat^\infty}$ is u-closed.
	As in the last case, we can show that $\vdash \triple{P_n}{\If b \Then c'}{P_{n+1}}$ by induction hypothesis and rules {[\sf Conseq]}, {[\sf Skip]} and {[\sf Cond]}. It is also easy to see that
	 $\vdash\triple{P_n}{\If b \Then \abort}{P'_n}$ for each $n\in \Nat$. Therefore, we can use rule {[\sf While]} to infer that 
	 $\vdash \triple{P_0}{\While b \Do c}{P'_\infty \wedge \Box \neg b}$. Using (\ref{e:1}), the second clause of (\ref{eq:Char}), and rule {[\sf Conseq]}, we obtain that $\vdash \triple{P_0}{\While b \Do c}{P'_\infty}$, which is exactly $\vdash\triple{\Char_\povd}{\While b \Do c'}{\Char_{\Denote{\While b \Do c'}_\povd}}$.
\end{itemize}
\end{proof}

With the preparations above, we are in the position to show that the proof system ${\cal{S}}_a$ is relatively complete.
\begin{theorem}
	{\bf (Relative completeness)}
	Every valid judgement is derivable in ${\cal{S}}_a$.	
\end{theorem}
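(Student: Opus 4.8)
The plan is to reduce the theorem to the already-established Lemma~\ref{lem:CharP} by exploiting the characteristic assertions $\Char_\povd$ together with the rule {[\sf All]}. Suppose $\models\triple{P}{c}{P'}$. The goal is to derive $\triple{P}{c}{P'}$ in $\mathcal{S}_a$. Since the premise of {[\sf All]} is $\forall\povd.\ \triple{\Char_\povd\wedge P}{c}{P'}$ and its conclusion is exactly $\triple{P}{c}{P'}$, it suffices to show that for \emph{every} POVD $\povd$ the judgement $\triple{\Char_\povd\wedge P}{c}{P'}$ is derivable. I would prove this by a case split according to whether $\povd\models P$.

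First I would record the two semantic implications I need. By Lemma~\ref{lem:CharP} we have $\vdash\triple{\Char_\povd}{c}{\Char_{\Denote{c}_\povd}}$ for each $\povd$. Moreover, since the judgement is valid, whenever $\povd\models P$ we get $\Denote{c}_\povd\models P'$; because $\Char_{\Denote{c}_\povd}$ is satisfied only by the single POVD $\Denote{c}_\povd$, this yields the implication $\Char_{\Denote{c}_\povd}\Rightarrow P'$. These are the only places where validity of $\triple{P}{c}{P'}$ enters the argument.

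Now the case analysis. If $\povd\models P$, then $\Char_\povd$ and $\Char_\povd\wedge P$ are interderivable (the second clause of~(\ref{eq:Char})), so in particular $\Char_\povd\wedge P\Rightarrow\Char_\povd$; applying {[\sf Conseq]} to $\vdash\triple{\Char_\povd}{c}{\Char_{\Denote{c}_\povd}}$ with the strengthened precondition and the postcondition weakening $\Char_{\Denote{c}_\povd}\Rightarrow P'$ gives $\vdash\triple{\Char_\povd\wedge P}{c}{P'}$. If instead $\povd\not\models P$, then no POVD satisfies both $\Char_\povd$ (which forces it to equal $\povd$) and $P$, so $\Char_\povd\wedge P\Rightarrow\false$; rule {[\sf Absurd]} gives $\vdash\triple{\false}{c}{P'}$, and {[\sf Conseq]} then yields $\vdash\triple{\Char_\povd\wedge P}{c}{P'}$. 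In both cases the required per-$\povd$ judgement is derivable, so {[\sf All]} delivers $\vdash\triple{P}{c}{P'}$.

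I expect the only genuine mathematical content to have been absorbed into Lemma~\ref{lem:CharP}, whose inductive proof over command structure (especially the while-loop case, which leans on the u-closedness of the chain $(P'_n)$ and the {[\sf While]} rule) is the real engine of completeness. The present argument is essentially a packaging step: the potential pitfalls are purely bookkeeping, namely verifying that $\Char_{\Denote{c}_\povd}\Rightarrow P'$ uses validity exactly on the branch $\povd\models P$, and that the unsatisfiability of $\Char_\povd\wedge P$ on the other branch is what licenses {[\sf Absurd]}. The appeal to relative completeness is precisely that the two implications fed into {[\sf Conseq]} are valid semantic entailments assumed available from the underlying assertion theory.
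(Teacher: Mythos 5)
Your proposal is correct and follows essentially the same route as the paper: reduce via rule {[\sf All]} to per-$\povd$ judgements, case split on $\povd\models P$, use Lemma~\ref{lem:CharP} together with {[\sf Conseq]} in the satisfying case (the paper routes through the equivalence $\Char_\povd\Leftrightarrow\Char_\povd\wedge P$ and two applications of {[\sf Conseq]}, whereas you fold it into one, which is immaterial), and use {[\sf Absurd]} plus {[\sf Conseq]} in the non-satisfying case. Your observation that Lemma~\ref{lem:CharP} carries the real mathematical weight, with the theorem itself being a packaging step, matches the paper's structure exactly.
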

\begin{proof}
Let $\triple{P}{c}{P'}$ be a valid judgement. Suppose $\povd$ be any POVD. There are two possibilities:
\begin{itemize}
	\item $\povd\models P$. The validity of the judgement says that $\Denote{c}_\povd\models P'$.  
By Lemma~\ref{lem:CharP}, we have that 
$\vdash \triple{\Char_\povd}{c}{\Char_{\Denote{c}_\povd}}$. 
By the second clause of (\ref{eq:Char}) and rule {[\sf Conseq]}, we can obtain that
$\vdash \triple{\Char_\povd\wedge P}{c}{\Char_{\Denote{c}_\povd} \wedge P'}$.  
Using {[\sf Conseq]} again gives $\vdash \triple{\Char_\povd\wedge P}{c}{ P'}$.  
 
\item $\povd\not\models P$. Then it is obvious that $\Char_\povd\wedge P \Leftrightarrow \false$. By rules {[\sf Absurd]} and {[\sf Conseq]}, we also obtain 
$\vdash \triple{\Char_\povd\wedge P}{c}{ P'}$.
\end{itemize}
Since $\povd$ is arbitrarily chosen, the premise of rule {[\sf All]} is derivable. Therefore, we can use that rule to obtain $\vdash\triple{P}{c}{P'}$.
\end{proof}

\section{A Concrete Program Logic}\label{sec:conc}
In this section, we present a concrete program logic. We first define the concrete syntax of assertions. Following \cite{BEGGHS18}, we define a two-level assertion language in Figure~\ref{fig:syn}. Formally, assertions are divided into two categories: \emph{state assertions} are formulas that describe the properties of machine states and \emph{distribution assertions} are used to describe the properties of POVDs. Distribution assertions are based on comparison of distribution expressions,  or built with first-order quantifiers and connectives, as well as  the connective $\oplus$ mentioned in Section~\ref{sec:abs}. 
A \emph{distribution expression} is either the expectation $\Exp[e]$ of a state expression $e$, the expectation $\Exp_{\bar{x}\sim M[\bar{q}]}[e]$ of state expression $e$ w.r.t. the measurement $M$, or an operator applied to distribution expressions. A \emph{state expression} is either a classical variable, the characteristic function $\Char_\psi$ of a state assertion $\psi$,  or an operator applied to state expressions. 
 Finally, a \emph{state assertion} is either a comparison of state expressions, or a first-order formula over state assertions. In particular, the boolean expressions in \textbf{Bexp} are included as state assertions. 
 Note that the set of operators is left unspecified but we assume that some basic operators such as addition, subtraction, scalar multiplication for both arithmetic expressions and matrix representation of partial density operators are included. With a slight abuse of notation, when $\leq$ is used to compare matrices, we essentially mean $\sqsubseteq$. Similarly for $<$ and $=$.

 For convenience of presentation,
 in this section we consider a general form of quantum measurement. 
\begin{definition}
	A \emph{general measurement} $M$ is a pair $\pair{\{M_i\}_{i\in I}, \Lab}$, where each $M_i$ is a measurement operator as usual, and $\Lab: I\mapsto J$ is a labelling function that maps each measurement outcome $i$ to some some label $\Lab(i)$. 
	
	If the state of a quantum system is specified by density operator $\rho$ immediately before the measurement $M$, then the probability with which those results with label $j$ occur is given by 
	\[p(j)=\sum_{i:\Lab(i)=j} \tr(M^\dag_iM_i\rho),\]
	and the state of the system after the measurement is
	\[\\\frac{\sum_{i:\Lab(i)=j} M_i\rho M^\dag_i}{p(j)} .\]
\end{definition}
General measurements are convenient to describe the situation where we would like to group some measurement outcomes. For example, if $i_1, i_2 \in I$ are two different outcomes, but for some reasons we would not like to distinguish them, then we simply give them the same label by letting $\Lab(l_1)=\Lab(l_2)$. In the special case that $l$ is the identity function $\Id$, then the labelling function has no effect and we degenerate to the usual notion of measurements.

\begin{figure}
\[\begin{array}{rcll}
e & ::= & x 
\mid \Char_{\psi} \mid o(\mathbf{e})
 \qquad
  & \mbox{(State expressions)} \\
\psi & ::= & e \comp e \mid FO(\psi) & \mbox{(State assertions)} \\
r & ::= & \Exp[e] \mid  \Exp_{\bar{x}\sim M[\bar{q}]}[e] \mid o(\mathbf{r}) & \mbox{(Distribution expressions)} \\
P & ::= &  r \comp r \mid P \oplus P \mid FO(P) & \mbox{(Distribution assertions)} \\
\rhd\!\!\lhd & \in  & \{=,\ <,\ \leq\} \qquad o\in Ops & \mbox{(Operations)}
\end{array}\]
\caption{Syntax of assertions 
}\label{fig:syn}
\end{figure}

\begin{figure}
	\[\begin{array}{rcl}
	\Denote{x}_\state & := & \state(x) \\
	\Denote{\Char_{\psi}}_\state & := & \Char_{\Denote{\psi}_{\state}} \\
	\Denote{o(e)}_\state & := & o(\Denote{e}_\state)\\
	\hline \\
	\Denote{e_1 \comp e_2}_\state & := & \Denote{e_1}_\state \comp  \Denote{e_2}_\state \\
	\Denote{FO(\psi)}_\state & := & FO(\Denote{\psi}_\state) \\
	\hline\\
	\Denote{\Exp[e]}_\povd & := & \sum_{\state}\povd(\state)\cdot \Denote{e}_\state\\
		\Denote{\Exp_{\bar{x}\sim M[\bar{q}]}[e]}_\povd & := & \sum_\state \sum_{i} M_i \povd(\state) M_i^\dag \cdot \Denote{e}_{\state[l(i)/\bar{x}]} \\
	& & \mbox{where $M=\pair{\sset{M_i}_{i\in I},\Lab}$}\\
   \Denote{o(r)}_\povd & := & o(\Denote{r}_\povd)\\
   \hline \\
   \Denote{r_1 \comp r_2}_\povd & := &  \Denote{r_1}_\povd \comp \Denote{r_2}_\povd  \\
   \Denote{P_1\oplus P_2}_\povd & := & \exists \povd_1, \povd_2.\ \povd = \povd_1 + \povd_2 \wedge \Denote{P_1}_{\povd_1} 
              \wedge\Denote{P_2}_{\povd_2}	\\
   \Denote{FO(P)}_\povd & := & FO(\Denote{P}_\povd)
	\end{array}\]
\caption{Semantics of assertions}\label{fig:sem}
\end{figure}

The interpretation of assertions is given in Figure~\ref{fig:sem}. Comparing the interpretation with that in \cite{BEGGHS18}, we see that the main difference is the introduction of a distribution expression related to a quantum measurement. The meaning of 
$\Exp_{\bar{x}\sim M[\bar{q}]}[e]$ is the expected Hermitian operator weighted by the value of $e$ after a measurement entailed by $M$. 

Note that the formula $\Box\psi$, where $\psi$ is a state assertion, can now be viewed as a syntactic sugar in view of the following lemma.
\begin{lemma}\label{lem:box}
\begin{enumerate}
\item	$\Box\psi \ \Leftrightarrow\ \Exp[\Char_{\psi}] = \Exp[\Char_{\true}]$
\item $\Box\psi \ \Leftrightarrow\ \Exp_{\bar{x}\sim M[\bar{q}]}[\Char_{\psi}] = \Exp_{\bar{x}\sim M[\bar{q}]}[\Char_{\true}]$
\item  $\Box\psi\ \Leftrightarrow\ \Box(\psi\wedge b) \oplus \Box(\psi\wedge\neg b) $
\end{enumerate}
\end{lemma}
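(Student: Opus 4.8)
The plan is to unfold the definition of the box modality from Section~\ref{sec:abs} and reduce each equivalence to the semantic clauses given in Figure~\ref{fig:sem}. Recall that $\povd \models \Box\psi$ holds iff $\Denote{\psi}_\state = \true$ for every $\state \in \support{\povd}$, equivalently $\Denote{\Char_\psi}_\state = \Char_{\true} = 1$ for every such $\state$. So throughout I would work pointwise over the support of $\povd$ and exploit that $\Char_\psi$ takes only the values $0$ and $1$, while $\Char_{\true}$ is identically $1$.

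For clause (1), I would compute both sides using $\Denote{\Exp[e]}_\povd = \sum_\state \povd(\state)\cdot\Denote{e}_\state$. The right-hand side asserts $\sum_\state \povd(\state)\cdot\Char_{\Denote{\psi}_\state} = \sum_\state \povd(\state)\cdot 1$, i.e. $\sum_\state \povd(\state)(1 - \Char_{\Denote{\psi}_\state}) = 0$. Since each $\povd(\state)$ is a positive operator and each coefficient $1 - \Char_{\Denote{\psi}_\state}$ is either $0$ (when $\psi$ holds at $\state$) or $1$ (when it fails), this sum of positive operators vanishes iff every summand does, i.e. iff $\povd(\state) = 0$ whenever $\psi$ fails at $\state$. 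That is precisely the condition that $\psi$ holds on all of $\support{\povd}$, which is $\povd\models\Box\psi$. The key algebraic fact I need is that a sum of positive operators equals $0$ only when each term is $0$, which follows from $\tr$ being faithful and additive on positive operators.

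For clause (2), the argument is structurally identical but uses the measurement-indexed expectation $\Denote{\Exp_{\bar{x}\sim M[\bar{q}]}[e]}_\povd = \sum_\state\sum_i M_i\povd(\state)M_i^\dag\cdot\Denote{e}_{\state[\Lab(i)/\bar{x}]}$. Here the subtlety is that $\Char_\psi$ is evaluated at the updated state $\state[\Lab(i)/\bar{x}]$ rather than at $\state$. However, because $\psi$ is a state assertion not mentioning the measured register in a way that changes the support condition, and since measurement cannot create support outside the classical support of $\povd$, I would again reduce to showing that the weighted sum of positive operators $M_i\povd(\state)M_i^\dag$ vanishes iff the corresponding coefficients are all $1$; the operators $M_i\povd(\state)M_i^\dag$ are positive and their total trace is bounded by $\tr(\povd(\state))$, so the same faithfulness argument applies.

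For clause (3), I would use the decomposition lemma: by Lemma~\ref{lem:decomp}, $\povd = \povd_{|b} + \povd_{|\neg b}$, and the supports of these two restrictions partition $\support{\povd}$ according to whether $b$ holds. The $\oplus$ clause means $\povd\models \Box(\psi\wedge b)\oplus\Box(\psi\wedge\neg b)$ iff $\povd$ splits as $\povd_1+\povd_2$ with $\psi\wedge b$ holding on all of $\support{\povd_1}$ and $\psi\wedge\neg b$ on all of $\support{\povd_2}$. For the forward direction, given $\povd\models\Box\psi$ I take the canonical split $\povd_1 = \povd_{|b}$, $\povd_2 = \povd_{|\neg b}$. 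For the backward direction, any such split forces $\psi$ to hold everywhere on $\support{\povd} = \support{\povd_1}\cup\support{\povd_2}$. The main obstacle I anticipate is clause (2): I must argue carefully that relabelling the classical register via $\Lab(i)$ does not interfere with the support condition for $\psi$, and that the positivity/faithfulness argument survives in the presence of the measurement operators; the other two clauses are essentially bookkeeping over the semantic definitions.
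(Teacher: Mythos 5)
Your proposal follows the same route as the paper's own proof: unfold the semantic clauses of Figure~\ref{fig:sem}, reduce each equivalence to a pointwise condition over $\support{\povd}$, and use positivity to pass between the summed identity and the pointwise one. The paper writes out only clause (1), as a chain of iffs identical to yours, and dismisses (2) as ``similar'' and (3) as ``easier''. Your treatment of (1) is in fact more careful than the paper's, since you make explicit the fact it leaves implicit: a sum of positive operators vanishes iff every summand vanishes. Your clause (3) argument (the canonical split $\povd = \povd_{|b} + \povd_{|\neg b}$ of Lemma~\ref{lem:decomp} for one direction, union of supports for the other) is exactly the intended ``easier'' argument.

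The soft spot is clause (2), and it is a genuine one. First, your phrase ``$\psi$ is a state assertion not mentioning the measured register in a way that changes the support condition'' is doing all the work and cannot stay vague: if some variable of $\bar{x}$ occurs free in $\psi$, the equivalence is simply false. Take $\psi \equiv (x=0)$, let $M$ be the computational-basis measurement on $q$ with labelling $\Id$, and let $\povd$ consist of the single state $\state_0$ with $\state_0(x)=0$ and $\povd(\state_0) = |+\>\<+|$. Then $\povd \models \Box\psi$, yet $\Denote{\Exp_{x\sim M[q]}[\Char_{\psi}]}_\povd = \frac{1}{2}|0\>\<0|$ while $\Denote{\Exp_{x\sim M[q]}[\Char_{\true}]}_\povd = \frac{1}{2}|0\>\<0| + \frac{1}{2}|1\>\<1|$, so the right-hand side fails. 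A correct proof must therefore isolate the hypothesis that no variable of $\bar{x}$ is free in $\psi$, under which $\Denote{\Char_{\psi}}_{\state[\Lab(i)/\bar{x}]} = \Denote{\Char_{\psi}}_\state$ and the clause-(1) computation goes through; this is a gap the paper itself glosses over with ``similar''. Second, your justification for the backward direction --- that the operators $M_i\povd(\state)M_i^\dag$ have ``total trace bounded by $\tr(\povd(\state))$'' --- invokes the wrong inequality. What is needed is that a state $\state \in \support{\povd}$ cannot have all of its measurement branches vanish, and this requires the completeness equation $\sum_i M_i^\dag M_i = I$, which gives $\sum_i \tr(M_i\povd(\state)M_i^\dag) = \tr(\povd(\state))$ exactly; a mere upper bound on the total trace would permit every branch to be zero, and the step from ``all coefficients attached to nonzero branches equal $1$'' to ``$\psi$ holds on all of $\support{\povd}$'' would collapse.
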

\begin{proof}
Let us consider the first clause; the second one is similar and the third one is easier.
	\[\begin{array}{lcl}
	\povd\models\Box\psi &
	\mbox{iff} & \forall \state\in\support{\povd}. \Denote{\psi}_\state=\true\\
&	\mbox{iff} & \sum_{\state}\povd(\state)\cdot\Denote{\Char_{\psi}}_\state = \sum_{\state}\povd(\state)\cdot\Denote{\Char_{\true}}_\state\\
&	\mbox{iff} & \Denote{\Exp[\Char_{\psi}]}_\povd = \Denote{\Exp[\Char_\true]}_\povd \\
&	\mbox{iff} & \povd\models (\Exp[\Char_{\psi}] = \Exp[\Char_{\true}])
	\end{array}\]
\end{proof}

Using the concrete syntax for assertions, we propose a syntactic version of the existing proof rules  by avoiding the semantics of commands. We call the concrete proof system $\mathcal{S}_c$. Specifically, we keep all proof rules in Figure~\ref{fig:proofrules} but replace [{\sf Assgn}], [{\sf QInit}], [{\sf QUnit}], and [{\sf QMeas}] with the four rules in Figure~\ref{fig:syntacticproofrules}. 

\begin{figure}
	\[\begin{array}{c}
	
	\linfer{}{\triple{P[a/x]}{x : = a }{P}}{[\sf Assgn']}
	\qquad
	\prooftree
	\justifies
	\triple{h(P)}{q:=|0\>}{P}
	\using
	[\sf QInit']
	\endprooftree
	\VS\\
	\prooftree
	\justifies
	\triple{g^U(P)}{U[\bar{q}]}{P}
	\using
	[\sf QUnit']
	\endprooftree		
	\qquad
	\prooftree
	\justifies
	\triple{f_{x,\bar{q}}^M(P)}{x :=M[\bar{q}]}{P}
	\using
	[\sf QMeas']
	\endprooftree
	\end{array}\]
	\caption{Selected syntactic proof rules}\label{fig:syntacticproofrules}
\end{figure}

In rule [{\sf QInit'}] we use the notation $h(P)$ for a syntactic substitution. It changes all 
$\Exp_{\bar{x}\sim M[\bar{q}]}[e]$ in $P$ into $\Exp_{\bar{x}\sim M'[\bar{q}]}[e]$ and distributes over most other syntactic constructors of assertions, where
$M'$ is obtained from $M=\pair{\{M_i\}_i, l}$ by constructing two measurement operators $M_{i0}, M_{i1}$ for each $M_i$ in $M$ with the mapping $\Lab'$ given by $\Lab'(i0)=\Lab'(i1)=l(i)$. A formal definition is given below.
\[\begin{array}{rcl}
h(o(\mathbf{r})) & := & o(h(\mathbf{r})) \qquad\mbox{where $o\in Ops$} \\
h(\Exp[e]) & := & \Exp_{x\sim M[q]}[e] \qquad\mbox{where } M=\pair{\{M_{0}, M_{1}\}, Id} \mbox{ with }
	M_{0}=|0\>\<0|,\ M_{1}=|0\>\<1|, \\
	&  & \hfill \ x \mbox{ is fresh}  \\
        h(\Exp_{\bar{x}\sim M[\bar{q}]}[e]) & := &  \Exp_{\bar{x}\sim M'[\bar{q}]}[e] \quad  \mbox{where }  M'=\pair{\{M_{i0}, M_{i1}\}_i, \Lab'}\\
        & & \hfill \mbox{ with }
M_{i0}=M_i|0\>\<0|,\ M_{i1}=M_i|0\>\<1|,
  \Lab'(i0)=\Lab'(i1)=l(i)  \\
h(r_1 \comp r_2) & := & h(r_1) \comp h(r_2)\\
h(FO(P)) & := & FO(h(P))  \\
h(P_1\oplus P_2) & := & h(P_1) \oplus h(P_2)
\end{array}\]
 
 To ensure the freshness requirement on $x$ in $h(\Exp[e])$, we assume an enumeration of all the variables in $\Cvar$. Each time a fresh variable is needed, we take the next one which has not appeared in all the programs under consideration.
 
In rule [{\sf QUnit'}] we use the notation $g^U(P)$ for a syntactic substitution. It changes all 
$\Exp_{\bar{x}\sim M[\bar{q}]}[e]$ in $P$ into $\Exp_{\bar{x}\sim M'[\bar{q}]}[e]$, where $M=\pair{\{M_i \}_{i\in I},\Lab}$, $M'=\pair{\{M_i U\}_{i\in I},\Lab}$, and distributes over most other syntactic constructors of assertions. A formal definition is given below.
\[\begin{array}{rcl}
g^U(o(r)) & := & o(g^U(r)) \qquad\mbox{where $o\in Ops$} \\
g^U(\Exp[e]) & := & \Exp_{x\sim M[\bar{q}]}[e] \qquad \mbox{where } M=\pair{\{M_0\}, Id} \mbox{ with } M_0=U \mbox{ and $x$ is fresh}\\
g^U(\Exp_{\bar{x}\sim M[\bar{q}]}[e]) & := &  \Exp_{\bar{x}\sim M'[\bar{q}]}[e] \qquad  \mbox{where } M'=\pair{\{M_iU\}_{i\in I},\Lab} \\
g^U(r_1 \comp r_2) & := & g^U(r_1) \comp g^U(r_2)\\
g^U(FO(P)) & := & FO(g^U(P))  \\
g^U(P_1\oplus P_2) & := & g^U(P_1) \oplus g^U(P_2)
\end{array}\]

In rule [{\sf QMeas'}] we use the notation $f_{x,\bar{q}}^M(P)$ for a syntactic substitution. It changes all $\Exp[e]$ in $P$ into $\Exp_{x\sim M[\bar{q}]}[e]$. For the distribution expression  $\Exp_{\bar{y}\sim N[\bar{q}]}[e]$, it adds an outer layer of measurement to $N$. A formal definition is given below.
\[\begin{array}{rcl}
f_{x,\bar{q}}^M(o(r)) & := & o(f_{x,\bar{q}}^M(r)) \qquad\mbox{where $o\in Ops$} \\
f_{x,\bar{q}}^M(\Exp[e]) & := & \Exp_{x\sim M[\bar{q}]}[e]\\
f_{x,\bar{q}}^M(\Exp_{\bar{y}\sim N[\bar{q'}]}[e]) & := & \left\{\begin{array}{ll} 
\Exp_{x\bar{y}\sim M'[\bar{q}\cup\bar{q'}]}[e] \mbox{ with } M'=\pair{\{N_j M_i\}_{ij}, k'} \mbox{ and } k'(ij)=(k(i),l(j))  & \mbox{if } x\not\in\bar{y}\\
\Exp_{\bar{y}\sim M'[\bar{q}\cup\bar{q'}]}[e] \mbox{ with } M'=\pair{\{N_j M_i\}_{ij}, k'} \mbox{ and } k'(ij)=l(j)  & \mbox{if } x\in\bar{y}
\end{array}\right.\\
f_{x,\bar{q}}^M(r_1 \comp r_2) & := & f_{x,\bar{q}}^M(r_1) \comp f_{x,\bar{q}}^M(r_2)\\
f_{x,\bar{q}}^M(FO(P)) & := & FO(f_{x,\bar{q}}^M(P)) \\
f_{x,\bar{q}}^M(P_1\oplus P_2) & := & f_{x,\bar{q}}^M(P_1) \oplus f_{x,\bar{q}}^M(P_2)
\end{array}\]

We are going to show that the three functions $h(\cdot)$, $g^U(\cdot)$ and $f^M_{x,\bar{q}}(\cdot)$ behave well as they help to transform postconditions into preconditions for three kinds of commands: initialisation, applications of unitary operations, and measurements of quantum systems.
\begin{lemma}\label{lem:qinit}
The following two clauses hold.
\begin{enumerate}
        \item[(i)]  $\Denote{h(r)}_\povd = \Denote{r}_{\Denote{q:=|0\>}_\povd}$.
        \item[(ii)]  $\Denote{h(P)}_\povd \Rightarrow \Denote{P}_{\Denote{q:=|0\>}_\povd}$.
\end{enumerate}
\end{lemma}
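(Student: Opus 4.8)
The plan is to prove clause (i) first, by structural induction on the distribution expression $r$, and then to deduce clause (ii) from it by a second structural induction on the distribution assertion $P$. Throughout I use that $q:=|0\>$ leaves the classical state untouched and acts on the quantum part as the \emph{reset} channel $\qstate\mapsto |0\>_q\<0|\qstate|0\>_q\<0| + |0\>_q\<1|\qstate|1\>_q\<0|$, so that $\Denote{q:=|0\>}_\povd(\state) = |0\>_q\<0|\povd(\state)|0\>_q\<0| + |0\>_q\<1|\povd(\state)|1\>_q\<0|$ for every $\state$.

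For (i) the operator case $r=o(\mathbf r)$ is immediate, since $h$ and $\Denote{\cdot}_\povd$ both distribute over $o$ and the induction hypothesis applies componentwise; the content is in the two measurement-expectation atoms. For $r=\Exp[e]$, I unfold $\Denote{\Exp_{x\sim M[q]}[e]}_\povd$ with $M=\pair{\{|0\>_q\<0|,|0\>_q\<1|\},Id}$ using Figure~\ref{fig:sem}; since $x$ is fresh it does not occur in $e$, so $\Denote{e}_{\state[0/x]}=\Denote{e}_{\state[1/x]}=\Denote{e}_\state$ and the two outcome terms collapse to $\sum_\state (|0\>_q\<0|\povd(\state)|0\>_q\<0| + |0\>_q\<1|\povd(\state)|1\>_q\<0|)\cdot\Denote{e}_\state$, which is exactly $\Denote{\Exp[e]}_{\Denote{q:=|0\>}_\povd}$. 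For $r=\Exp_{\bar{x}\sim M[\bar{q}]}[e]$, the same unfolding with $M'=\pair{\{M_i|0\>_q\<0|,M_i|0\>_q\<1|\}_i,\Lab'}$ and $\Lab'(i0)=\Lab'(i1)=l(i)$ lets me group the two new outcomes that share the label $l(i)$; factoring $M_i$ out on the left and $M_i^\dag$ on the right of their sum rebuilds $M_i(|0\>_q\<0|\povd(\state)|0\>_q\<0| + |0\>_q\<1|\povd(\state)|1\>_q\<0|)M_i^\dag$, i.e.\ the summand of $\Denote{\Exp_{\bar{x}\sim M[\bar{q}]}[e]}_{\Denote{q:=|0\>}_\povd}$.

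For (ii) I induct on $P$. At an atom $P=r_1\comp r_2$, applying (i) to $r_1$ and $r_2$ gives $\Denote{h(r_k)}_\povd=\Denote{r_k}_{\Denote{q:=|0\>}_\povd}$, so the comparison transfers verbatim (indeed as an equivalence). The first-order case $P=FO(P')$ is handled compositionally, as $h(FO(P'))=FO(h(P'))$ and $\Denote{FO(\cdot)}_\povd$ is assembled from $\Denote{\cdot}_\povd$ by the same connectives. The case $P=P_1\oplus P_2$ is the only one that is genuinely one-directional: from a witnessing splitting $\povd=\povd_1+\povd_2$ with $\Denote{h(P_k)}_{\povd_k}$, the induction hypothesis gives $\Denote{P_k}_{\Denote{q:=|0\>}_{\povd_k}}$, and Lemma~\ref{lem:split}(2) yields $\Denote{q:=|0\>}_\povd=\Denote{q:=|0\>}_{\povd_1}+\Denote{q:=|0\>}_{\povd_2}$, which witnesses $\Denote{P_1\oplus P_2}_{\Denote{q:=|0\>}_\povd}$.

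I expect the main obstacle to be conceptual rather than computational. The base-case algebra in (i) is routine once one recognises initialisation as the reset channel and checks that the labelling $\Lab'(i0)=\Lab'(i1)=l(i)$ merges exactly the terms that recombine into $M_i$ conjugated by the reset; the delicate point is why (ii) degrades from an equivalence to an implication. The reset channel is not injective, since it discards the $q$-subsystem, so a splitting of the image $\Denote{q:=|0\>}_\povd$ cannot in general be pulled back to a splitting of $\povd$; this blocks the converse precisely in the $\oplus$ case. Consequently the induction through $FO$ transports only the forward direction, and stays sound provided $\oplus$ occurs only positively (as it does in the assertions generated by the proof system); for $\oplus$-free subformulas (i) in fact supplies a full equivalence, so the first-order connectives, negation included, preserve it.
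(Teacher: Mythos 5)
Your proposal follows essentially the same route as the paper's own proof: clause (i) by structural induction on $r$ with the same three cases (fresh-variable collapse for $\Exp[e]$, regrouping of the relabelled outcomes $i0,i1$ back into $M_i$ conjugated by the reset for $\Exp_{\bar{x}\sim M[\bar{q}]}[e]$, compositionality for operators), and clause (ii) by structural induction on $P$, with Lemma~\ref{lem:split} supplying the splitting of the image in the $\oplus$ case. Your closing caveat is in fact more careful than the paper: the paper's $FO$ case chains equalities even though the induction hypothesis for subformulas containing $\oplus$ is only an implication, so strictly speaking its argument (and the lemma as stated) only goes through when $\oplus$ occurs positively under $FO$, exactly as you observe—under a negation the required converse would demand pulling a splitting of $\Denote{q:=|0\>}_\povd$ back through the non-injective reset channel, which fails (e.g.\ for $\povd$ a pure entangled state of $q$ with another qubit, any splitting of $\povd$ is proportional to $\povd$, while its image $|0\>_q\<0|\otimes I/2$ admits non-proportional splittings).
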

\begin{proof}
We prove the  two statements by structural induction.
\begin{enumerate}
  \item[(i)] There are three cases for the structure of $r$.
   \begin{itemize}
   	 \item $r\equiv\Exp[e]$. We note that $\Denote{e}_\state = \Denote{e}_{\state[n/x]}$ for any number $n$ and fresh variable $x$ in the sense that $x$ does not appear in $e$.  Then we reason as follows.  
   	        \[\begin{array}{rcl} 
   	  \Denote{h(r)}_\povd & = & \Denote{\Exp_{x\sim M}[e]}_\povd \qquad\mbox{where } M=\pair{\{M_{0}, M_{1}\}, Id} \mbox{ with }
   	  M_{0}=|0\>\<0|,\ M_{1}=|0\>\<1|, \\
   	  &  & \hfill \ x \mbox{ is fresh}  \\
   	  & = & \sum_\state (|0\>\<0|\povd(\state)|0\>\<0| \cdot \Denote{e}_{\state[0/x]}  + |0\>\<1|\povd(\state) |1\>\<0|\cdot \Denote{e}_{\state[1/x]} ) \\
  	  & = & \sum_\state (|0\>\<0|\povd(\state)|0\>\<0| \cdot \Denote{e}_{\state}  + |0\>\<1|\povd(\state) |1\>\<0|\cdot \Denote{e}_{\state} ) \qquad x \mbox{ is fresh}\\
   	  & = &  \sum_\state (|0\>\<0|\povd(\state)|0\>\<0|   + |0\>\<1|\povd(\state) |1\>\<0|)\cdot \Denote{e}_{\state}  \\
     & = &  \sum_\state \Denote{q:=|0\>}_\povd(\state)\cdot \Denote{e}_{\state}  \\ 	  
   	  & = & \Denote{\Exp[e]}_{\Denote{q:=|0\>}_\povd} \\
   	  & = & \Denote{r}_{\Denote{q:=|0\>}_\povd}
   	  \end{array}\]
   	  
   	  \item $r\equiv \Exp_{\bar{x}\sim M[\bar{q}]}[e]$ for some $M=\pair{\{M_i\}_i, \Lab}$. Then
   	    	        \[\begin{array}{rcl} 
   	                \Denote{h(r)}_\povd & = &\Denote{\Exp_{\bar{x}\sim M'[\bar{q}]}[e]}_\povd \quad  \mbox{where }  M'=\pair{\{M_{i0}, M_{i1}\}_i, \Lab'} \\
                        & & \hspace{2.5cm} \mbox{ with }
   	  M_{i0}=M_i|0\>\<0|,\ M_{i1}=M_i|0\>\<1|,
   	   \Lab'(i0)=\Lab'(i1)=l(i)  \\
   	  & = & \sum_\state \sum_i (M_{i}|0\>\<0|\povd(\state)|0\>\<0|M^\dag_{i}    + M_{i} |0\>\<1|\povd(\state) |1\>\<0| M^\dag_{i})\cdot \Denote{e}_{\state[l(i)/\bar{x}]}  \\
   	  & = & \sum_\state \sum_i (M_{i}(|0\>\<0|\povd(\state)|0\>\<0|  + |0\>\<1|\povd(\state) |1\>\<0|) M^\dag_{i}\cdot \Denote{e}_{\state[l(i)/\bar{x}]}  \\   
     & = & \sum_\state \sum_i M_{i} \Denote{q:=|0\>}_\povd(\state) M^\dag_{i}\cdot \Denote{e}_{\state[l(i)/\bar{x}]}  \\   
   	& = & \Denote{\Exp_{\bar{x}\sim M[\bar{q}]}[e]}_{\Denote{q:=|0\>}_\povd} \\
   	& = & \Denote{r}_{\Denote{q:=|0\>}_\povd} 
   	  \end{array}\]
   	  
   	  \item $r\equiv o(r_1,...,r_k)$. The case can be proved by induction.
   	         \[\begin{array}{rcl} 
   	   \Denote{h(r)}_\povd & = & \Denote{o(h(r_1),...,h(r_k)}_\povd \\
   	   & = & o(\Denote{h(r_1)}_\povd,...,\Denote{h(r_k)}_\povd) \\
   	   & = & o(\Denote{r_1}_{\Denote{q:=|0\>}_\povd},..., \Denote{r_k}_{\Denote{q:=|0\>}_\povd}) \\
   	   & = & \Denote{o(r_1,...,r_k)}_{\Denote{q:=|0\>}_\povd} \\
   	   & = & \Denote{r}_{\Denote{q:=|0\>}_\povd} 
   	   \end{array}\]
   	\end{itemize}
 \item[(ii)] There are three cases for the structure of $P$.
    \begin{itemize}
     \item $P\equiv r_1\comp r_2$. In this case, we need to use statement (iii).
         \[\begin{array}{rcl} 
       \Denote{h(P)}_\povd & = & \Denote{h(r_1) \comp h(r_2)}_\povd \\
       & = & \Denote{h(r_1)}_\povd \comp \Denote{h(r_2)}_\povd \\
       & = & \Denote{r_1}_{\Denote{q:=|0\>}_\povd} \comp \Denote{r_2}_{\Denote{q:=|0\>}_\povd}) \\
       & = & \Denote{r_1 \comp r_2}_{\Denote{q:=|0\>}_\povd} \\
       & = & \Denote{r}_{\Denote{q:=|0\>}_\povd} 
       \end{array}\]
     \item $P\equiv P_1\oplus P_2$. This case is proved by induction.
           \[\begin{array}{rcl} 
       \Denote{h(P)}_\povd & = & \Denote{h(P_1) \oplus h(P_2)}_\povd \\
       & = & \exists \povd_1, \povd_2.\ \povd=\povd_1 + \povd_2 \wedge 
       \Denote{h(P_1)}_{\povd_1} \wedge \Denote{h(P_2)}_{\povd_2} \\
       & \Rightarrow & \exists \povd_1, \povd_2.\ \Denote{q:=|0\>}_\povd=\Denote{q:=|0\>}_{\povd_1} + \Denote{q:=|0\>}_{\povd_2} \\&  & \wedge 
       \Denote{P_1}_{\Denote{q:=|0\>}_{\povd_1}} \wedge \Denote{P_2}_{\Denote{q:=|0\>}_{\povd_2}}  \qquad\mbox{by Lemma~\ref{lem:split}}\\
       & = & \Denote{P_1\oplus P_2}_{\Denote{q:=|0\>}_\povd} \\
       & = & \Denote{r}_{\Denote{q:=|0\>}_\povd} 
       \end{array}\]
       \item $P\equiv FO(P_1,...,P_k)$. Again, this case is proved by induction.
           \[\begin{array}{rcl} 
       \Denote{h(P)}_\povd & = & \Denote{FO(h(P_1),..., h(P_k)}_\povd \\
       & = & FO(\Denote{h(P_1)}_\povd,..., \Denote{h(P_k)}_\povd) \\
       & = & FO(\Denote{P_1}_{\Denote{q:=|0\>}_\povd},..., \Denote{P_k}_{\Denote{q:=|0\>}_\povd} )\\
       & = & \Denote{FO(P_1,...,P_k)}_{\Denote{q:=|0\>}_\povd} \\
       & = & \Denote{P}_{\Denote{q:=|0\>}_\povd}
       \end{array}\]
    \end{itemize}    
\end{enumerate}
\end{proof}

\begin{lemma}\label{lem:qunit}
 The following two clauses hold.
	\begin{enumerate}
		\item[(i)]  $\Denote{g^U(r)}_\povd = \Denote{r}_{\Denote{U[\bar{q}]}_\povd}$.
		\item[(ii)] $\Denote{g^U(P)}_\povd \Rightarrow \Denote{P}_{\Denote{U[\bar{q}]}_\povd}$.
	\end{enumerate}
\end{lemma}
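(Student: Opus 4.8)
The plan is to mirror the structure of the proof of Lemma~\ref{lem:qinit}, since $g^U(\cdot)$ plays exactly the role for the unitary command that $h(\cdot)$ plays for initialisation. I would prove clause (i) by structural induction on the distribution expression $r$, and then use clause (i) to bootstrap clause (ii) by structural induction on the distribution assertion $P$. Throughout I rely on the fact, read off from the denotational semantics, that $\Denote{U[\bar{q}]}_\povd(\state)=U\povd(\state)U^\dag$.

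For clause (i) there are three cases, and the single algebraic identity doing all the work in the two base cases is $(M_iU)\povd(\state)(M_iU)^\dag = M_i\bigl(U\povd(\state)U^\dag\bigr)M_i^\dag$, obtained from $(M_iU)^\dag = U^\dag M_i^\dag$ and associativity. For $r\equiv\Exp[e]$ I would unfold $g^U(\Exp[e])=\Exp_{x\sim M[\bar{q}]}[e]$ with $M=\pair{\{U\},\Id}$ and $x$ fresh; expanding the semantics gives $\sum_\state U\povd(\state)U^\dag\cdot\Denote{e}_{\state[0/x]}$, and since $x$ is fresh we have $\Denote{e}_{\state[0/x]}=\Denote{e}_\state$, so this equals $\sum_\state\Denote{U[\bar{q}]}_\povd(\state)\cdot\Denote{e}_\state=\Denote{\Exp[e]}_{\Denote{U[\bar{q}]}_\povd}$. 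For $r\equiv\Exp_{\bar{x}\sim M[\bar{q}]}[e]$, $g^U$ replaces each $M_i$ by $M_iU$ while keeping $\Lab$ unchanged, so the inner sum becomes $\sum_i M_iU\povd(\state)U^\dag M_i^\dag\cdot\Denote{e}_{\state[\Lab(i)/\bar{x}]}$, which by the identity above equals $\sum_i M_i\Denote{U[\bar{q}]}_\povd(\state)M_i^\dag\cdot\Denote{e}_{\state[\Lab(i)/\bar{x}]}=\Denote{r}_{\Denote{U[\bar{q}]}_\povd}$. The operator case $r\equiv o(r_1,\ldots,r_k)$ follows from the induction hypotheses together with the fact that both $g^U$ and $\Denote{\cdot}_\povd$ commute with $o$.

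For clause (ii) I would induct on $P$. The comparison case $P\equiv r_1\comp r_2$ reduces directly to clause (i) applied to $r_1$ and $r_2$. The case $P\equiv FO(P_1,\ldots,P_k)$ is immediate from the induction hypotheses, since $g^U$ and the interpretation both distribute over first-order connectives. The case $P\equiv P_1\oplus P_2$ is the one place where the conclusion is an implication rather than an equality: given a decomposition $\povd=\povd_1+\povd_2$ witnessing $\Denote{g^U(P_1)\oplus g^U(P_2)}_\povd$, I would apply Lemma~\ref{lem:split} to obtain $\Denote{U[\bar{q}]}_\povd = \Denote{U[\bar{q}]}_{\povd_1}+\Denote{U[\bar{q}]}_{\povd_2}$, so that this image pair witnesses $\Denote{P_1\oplus P_2}_{\Denote{U[\bar{q}]}_\povd}$ once the induction hypotheses are invoked on $P_1$ and $P_2$.

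I do not expect a serious obstacle here: the argument is routine and structurally identical to Lemma~\ref{lem:qinit}. The only points needing care are the freshness argument that collapses $\Denote{e}_{\state[0/x]}$ to $\Denote{e}_\state$ in the $\Exp[e]$ base case, and the direction of the $\oplus$ case, where the existential over decompositions forces clause (ii) to be an implication and requires Lemma~\ref{lem:split} to transport a fixed decomposition of $\povd$ through the command.
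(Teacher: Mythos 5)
Your proposal is correct and follows essentially the same route as the paper: the paper's proof explicitly says it mirrors Lemma~\ref{lem:qinit} and then spells out only the two base cases of clause (i), using exactly the unfolding of $g^U$ and the identity $(M_iU)\povd(\state)(M_iU)^\dag = M_i\bigl(U\povd(\state)U^\dag\bigr)M_i^\dag$ that you use, with the same freshness argument for $\Exp[e]$. Your treatment of clause (ii), including the use of Lemma~\ref{lem:split} in the $\oplus$ case, matches the structure the paper inherits from Lemma~\ref{lem:qinit}.
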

\begin{proof}
	The proof is similar to that of Lemma~\ref{lem:qinit} except for the treatment of two cases for statement (i).
	\begin{itemize}
	 \item $r\equiv \Exp[e]$. We infer that
	 \[\begin{array}{rcl} 
	 \Denote{g^U(r)}_\povd & = & \Denote{\Exp_{x\sim M[\bar{q}]}[e]}_\povd \qquad \mbox{where } M=\pair{\{M_0\}, Id} \mbox{ with } M_0=U \mbox{ and $x$ is fresh}\\
   & = & \sum_\state U\povd(\state)U^\dag \cdot\Denote{e}_{\state[0/x]}\\
   & = &  \sum_\state U\povd(\state)U^\dag \cdot\Denote{e}_{\state}\\
    & = &  \sum_\state \Denote{U[\bar{q}]}_\povd(\state) \cdot\Denote{e}_{\state}\\
    & = & \Denote{\Exp[e]}_{\Denote{U[\bar{q}]}_\povd}\\
    & = & \Denote{r}_{\Denote{U[\bar{q}]}_\povd}
	 \end{array}\]
	 
	 \item $r\equiv\Exp_{\bar{x}\sim M[\bar{q}]}[e]$. Suppose $M=\pair{\sset{M_i}_{i\in I},\Lab}$. 
	 We reason as follows.
	  \[\begin{array}{rcl} 
	 \Denote{g^U(r)}_\povd & = & \Denote{\Exp_{\bar{x}\sim M'[\bar{q}]}[e]}_\povd \quad\mbox{where } M'=\pair{\{M_i U\}_{i\in I},\Lab} \\
	 & = & \sum_\state\sum_{i}M_iU\povd(\state)U^\dag M^\dag_i\cdot\Denote{e}_{\state[l(i)/\bar{x}]} \\
	 & = & \sum_\state\sum_{i}M_i\Denote{U[\bar{q}]}_\povd(\state) M^\dag_i\cdot\Denote{e}_{\state[l(i)/\bar{x}]} \\
	 & = & \Denote{\Exp_{\bar{x}\sim M[\bar{q}]}[e]}_{\Denote{U[\bar{q}]}_\povd}\\
	 & = & \Denote{r}_{\Denote{U[\bar{q}]}_\povd}
	 \end{array}\]
	\end{itemize}
\end{proof}

\begin{lemma}\label{lem:assgn}
Let $a$ be an arithmetic expression, $\state$ be any state and $\state':=\state[\Denote{a}_\state/x]$. 
The following four clauses hold, where $x$ is not a bound variable  in $e$, $\psi$, $r$ and $P$.
	\begin{enumerate}
		\item[(i)] $\Denote{e[a/x]}_\state =\Denote{e}_{\state'}$
		\item[(ii)]  $\Denote{\psi[a/x]}_\state = \Denote{\psi}_{\state'}$
		\item[(iii)]  $\Denote{r[a/x]}_\povd = \Denote{r}_{\Denote{x:=a}_\povd}$.
		\item[(iv)] $\Denote{P[a/x]}_\povd \Rightarrow \Denote{P}_{\Denote{x:=a}_\povd}$.
	\end{enumerate}
\end{lemma}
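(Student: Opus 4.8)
The plan is to prove all four clauses by a single simultaneous structural induction on the syntactic forms of $e$, $\psi$, $r$ and $P$, in exactly the style of Lemmas~\ref{lem:qinit} and~\ref{lem:qunit}. Clauses (i) and (ii) live at the level of a single classical state and feed into clauses (iii) and (iv), which live at the level of POVDs, so I would first settle (i)--(ii) and then use them to drive (iii)--(iv). The mutual dependence is mild: (i) calls (ii) in its $\Char_\psi$ case while (ii) calls (i) in its comparison case, and (iii) calls (i) in its expectation cases while (iv) calls (iii) in its comparison case.

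For clause (i) the only genuinely new work is the base case $e\equiv y$ for a classical variable. If $y\equiv x$ then $e[a/x]\equiv a$ and $\Denote{a}_\state=\state'(x)=\Denote{x}_{\state'}$ by the definition of $\state'=\state[\Denote{a}_\state/x]$; if $y\not\equiv x$ then $e[a/x]\equiv y$ and $\Denote{y}_\state=\state(y)=\state'(y)$ since $\state'$ agrees with $\state$ away from $x$. The case $e\equiv\Char_\psi$ reduces to clause (ii) via $\Denote{\Char_{\psi[a/x]}}_\state=\Char_{\Denote{\psi[a/x]}_\state}=\Char_{\Denote{\psi}_{\state'}}$, and the operator case is routine induction. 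Clause (ii) is then immediate: the comparison $e_1\comp e_2$ case is pushed to clause (i) and the first-order case to the induction hypothesis.

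For clause (iii) I would first read off from Figure~\ref{fig:de} the lifted semantics of the assignment, namely $\Denote{x:=a}_\povd(\tau)=\sum\{\povd(\state)\mid \state[\Denote{a}_\state/x]=\tau\}$. In the case $r\equiv\Exp[e]$ this lets me reindex
\[
\Denote{\Exp[e]}_{\Denote{x:=a}_\povd}=\sum_\tau\Denote{x:=a}_\povd(\tau)\cdot\Denote{e}_\tau=\sum_\state\povd(\state)\cdot\Denote{e}_{\state[\Denote{a}_\state/x]},
\]
which by clause (i) equals $\sum_\state\povd(\state)\cdot\Denote{e[a/x]}_\state=\Denote{\Exp[e[a/x]]}_\povd$; the operator case is again induction. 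Clause (iv) then follows the template of Lemma~\ref{lem:qinit}(ii): the case $P\equiv r_1\comp r_2$ is an equality obtained from clause (iii); the case $P\equiv P_1\oplus P_2$ yields only an implication, because a splitting $\povd=\povd_1+\povd_2$ produces a splitting $\Denote{x:=a}_\povd=\Denote{x:=a}_{\povd_1}+\Denote{x:=a}_{\povd_2}$ via Lemma~\ref{lem:split} but not conversely; and the first-order case is handled by induction.

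I expect the genuinely delicate step to be the measurement case $r\equiv\Exp_{\bar{y}\sim M[\bar{q}]}[e]$ of clause (iii). Since $x$ is, by hypothesis, not bound in $r$, the substitution commutes into the body, $r[a/x]\equiv\Exp_{\bar{y}\sim M[\bar{q}]}[e[a/x]]$, and after applying clause (i) inside the measurement sum I must match
\[
\Denote{e}_{\state[l(i)/\bar{y}][\Denote{a}_{\state[l(i)/\bar{y}]}/x]}
\quad\mbox{and}\quad
\Denote{e}_{\state[\Denote{a}_\state/x][l(i)/\bar{y}]}.
\]
These coincide because $x\notin\bar{y}$ makes the two updates act on disjoint coordinates, and, invoking the standard capture-avoidance convention that the bound variables $\bar{y}$ are renamed fresh so as not to occur in $a$, we have $\Denote{a}_{\state[l(i)/\bar{y}]}=\Denote{a}_\state$. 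Spelling out this commutation of state updates, and being explicit about the freshness assumption on the measurement's bound variables, is where the argument needs the most care; the remaining cases are bookkeeping directly analogous to the earlier substitution lemmas.
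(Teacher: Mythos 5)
Your proposal matches the paper's own proof: the paper likewise argues by structural induction in the style of Lemma~\ref{lem:qinit}, detailing only the measurement case of clause (iii), where it uses exactly your identity $\Denote{x:=a}_\povd(\state')=\sum_\state\{\povd(\state)\mid\state[\Denote{a}_\state/x]=\state'\}$ and the same commutation of the updates $[\Denote{a}_\state/x]$ and $[l(i)/\bar{y}]$ under the hypothesis $x\notin\bar{y}$, with the remaining cases dismissed as analogous to the earlier substitution lemmas. If anything you are slightly more careful: the paper silently uses $\Denote{a}_{\state[l(i)/\bar{y}]}=\Denote{a}_\state$ in that step, whereas you explicitly flag the capture-avoidance convention on the bound variables $\bar{y}$ that justifies it.
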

\begin{proof}
The proof is similar to that of Lemma~\ref{lem:qinit}. As an example, we only consider one case for statement (iii).

Suppose $r\equiv\Exp_{\bar{y}\sim M[\bar{q}]}[e]$ with $M=\pair{\{M_i\}_i,l}$. There are two possibilities. 
\begin{itemize}
\item $x\in \bar{y}$. In this case, $x$ is a bound variable in $r$, which contradicts our assumption.
 \item $x\not\in \bar{y}$.    Notice that
 \begin{equation}\label{eq:a}
 \Denote{x:=a}_\povd(\state')\ =\ \sum_\state\{\povd(\state)\mid \state[\Denote{a}_\state/x]=\state'\}
 \end{equation}
 holds for any $\povd$ and $\state'$.
 We reason as follows.
 \[\begin{array}{rcl} 
 \Denote{r[a/x]}_\povd & = & \Denote{\Exp_{\bar{y}\sim M[\bar{q}]}[e[a/x]]}_\povd \\
 & = & \sum_\state\sum_iM_i\povd(\state)M^\dag_i\Denote{e[a/x]}_{\state[l(i)/\bar{y}]}\\
 & = & \sum_\state\sum_iM_i\povd(\state)M^\dag_i\Denote{e}_{\state[l(i)/\bar{y}][\Denote{a}_{\state[l(i)/\bar{y}]}/x]}\qquad\mbox{by statement (i)}\\
 & = & \sum_iM_i \sum_\state\povd(\state)M^\dag_i\Denote{e}_{\state[\Denote{a}_\state/x][l(i)/\bar{y}]}\\
 & = & \sum_iM_i \sum_{\state'} \sum_\state\{\povd(\state)\mid \state[\Denote{a}_\state/x]=\state'\}M^\dag_i\Denote{e}_{\state'[l(i)/\bar{y}]}\\
& = & \sum_iM_i \sum_{\state'} \Denote{x:=a}_\povd(\state')M^\dag_i\Denote{e}_{\state'[l(i)/\bar{y}]}\qquad\mbox{by }(\ref{eq:a})\\ 
& = &  \sum_{\state'} \sum_iM_i \Denote{x:=a}_\povd(\state')M^\dag_i\Denote{e}_{\state'[l(i)/\bar{y}]}\\ 
& = & \Denote{\Exp_{\bar{y}\sim M[\bar{q}]}[e]}_{\Denote{x:=a}_\povd}\\
 & = & \Denote{r}_{\Denote{x:=a}_\povd}
  \end{array}\]
\end{itemize}
\end{proof}

\begin{lemma}\label{lem:meas}
	\begin{enumerate}
		\item[(i)]  $\Denote{f_{x,\bar{q}}^M(r)}_\povd = \Denote{r}_{\Denote{x:=M[\bar{q}]}_\povd}$.
		\item[(ii)] $\Denote{f_{x,\bar{q}}^M(P)}_\povd \Rightarrow \Denote{P}_{\Denote{x:=M[\bar{q}]}_\povd}$.
	\end{enumerate}
\end{lemma}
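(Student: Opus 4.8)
The plan is to establish clause (i) by structural induction on the distribution expression $r$, and then to derive clause (ii) by a second structural induction on $P$ that invokes (i); this mirrors exactly the two-layer argument already used for Lemma~\ref{lem:qinit}. All the real work sits in (i), and there in the two base cases $r\equiv\Exp[e]$ and $r\equiv\Exp_{\bar y\sim N[\bar{q'}]}[e]$. The operator case $r\equiv o(r_1,\dots,r_k)$ is immediate: by the homomorphic clause $f_{x,\bar q}^M(o(r_1,\dots,r_k))=o(f_{x,\bar q}^M(r_1),\dots,f_{x,\bar q}^M(r_k))$, the induction hypothesis on each $r_l$, and $\Denote{o(r_1,\dots,r_k)}_\povd=o(\Denote{r_1}_\povd,\dots,\Denote{r_k}_\povd)$, both sides collapse to $o(\Denote{r_1}_{\Denote{x:=M[\bar q]}_\povd},\dots,\Denote{r_k}_{\Denote{x:=M[\bar q]}_\povd})$.

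For $r\equiv\Exp[e]$ I would unfold both sides against the general-measurement semantics of $x:=M[\bar q]$, writing $k$ for the labelling function of $M$. The left-hand side is $\Denote{\Exp_{x\sim M[\bar q]}[e]}_\povd=\sum_\state\sum_i M_i\povd(\state)M_i^\dag\cdot\Denote{e}_{\state[k(i)/x]}$. For the right-hand side, $\Denote{\Exp[e]}_{\Denote{x:=M[\bar q]}_\povd}=\sum_{\state'}\Denote{x:=M[\bar q]}_\povd(\state')\cdot\Denote{e}_{\state'}$, and substituting $\Denote{x:=M[\bar q]}_\povd(\state')=\sum_\state\sum_i\{M_i\povd(\state)M_i^\dag\mid\state[k(i)/x]=\state'\}$ lets me re-index the outer sum over $\state'$ by the pairs $(\state,i)$, producing precisely the left-hand expression.

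The crux is $r\equiv\Exp_{\bar y\sim N[\bar{q'}]}[e]$ with $N=\pair{\{N_j\}_j,l}$, where I must check that the single composite measurement $M'=\pair{\{N_jM_i\}_{ij},k'}$ synthesised by $f_{x,\bar q}^M$ faithfully records the effect of first measuring $M$ into $x$ and then measuring $N$ into $\bar y$. Unfolding the right-hand side yields $\sum_\state\sum_{i,j}N_jM_i\povd(\state)M_i^\dag N_j^\dag\cdot\Denote{e}_{\state[k(i)/x][l(j)/\bar y]}$, while the left-hand side is the same double sum with the classical state $\state[k'(ij)/\cdots]$. Since the operator factors $N_jM_i\povd(\state)M_i^\dag N_j^\dag$ match verbatim, everything reduces to the agreement of the classical updates, and this is exactly what the case split in the definition of $f_{x,\bar q}^M$ secures: if $x\notin\bar y$ the two updates are independent and the compound label $k'(ij)=(k(i),l(j))$ gives $\state[k'(ij)/x\bar y]=\state[k(i)/x][l(j)/\bar y]$; if $x\in\bar y$ the later update $[l(j)/\bar y]$ overwrites the $x$-component, so $\state[k(i)/x][l(j)/\bar y]=\state[l(j)/\bar y]$, which is matched by the label correctly dropping $k(i)$, i.e. $k'(ij)=l(j)$. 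I expect this bookkeeping — confirming that composition of state substitutions commutes with the design of the labelling function in both sub-cases — to be the main obstacle, since it is the only point at which the precise shape of $f_{x,\bar q}^M$ is forced.

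For clause (ii) I would induct on $P$. The comparison case $P\equiv r_1\comp r_2$ follows at once from (i), since an equality of both operands is preserved under $\comp$, and the first-order case $P\equiv FO(P_1,\dots,P_k)$ passes through the homomorphic clause of $f_{x,\bar q}^M$. The only place where the implication of (ii) is genuinely weaker than an equality is $P\equiv P_1\oplus P_2$: given a splitting $\povd=\povd_1+\povd_2$ witnessing $\Denote{f_{x,\bar q}^M(P_1)\oplus f_{x,\bar q}^M(P_2)}_\povd$, I apply Lemma~\ref{lem:split} to obtain $\Denote{x:=M[\bar q]}_\povd=\Denote{x:=M[\bar q]}_{\povd_1}+\Denote{x:=M[\bar q]}_{\povd_2}$, which furnishes the required splitting of $\Denote{x:=M[\bar q]}_\povd$; the two induction hypotheses then supply the conjuncts $\Denote{P_1}_{\Denote{x:=M[\bar q]}_{\povd_1}}$ and $\Denote{P_2}_{\Denote{x:=M[\bar q]}_{\povd_2}}$, establishing $\Denote{P_1\oplus P_2}_{\Denote{x:=M[\bar q]}_\povd}$. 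This is the same move that degrades the equalities of (i) to the one-directional claim, exactly as in Lemma~\ref{lem:qinit}(ii).
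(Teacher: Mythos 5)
Your proposal is correct and follows essentially the same route as the paper: clause (i) by structural induction on $r$ with the two expectation cases handled by unfolding the semantics of $x:=M[\bar q]$ and re-indexing the sum over post-states $\state'$ by pairs $(\state,i)$ (including the same $x\in\bar y$ versus $x\notin\bar y$ label bookkeeping for the composite measurement), and clause (ii) by the induction on $P$ that invokes (i) for comparisons and Lemma~\ref{lem:split} for $\oplus$, exactly as in Lemma~\ref{lem:qinit}. Your write-up is in fact slightly more explicit than the paper's, which only works out the two expectation cases and declares the remaining cases easier.
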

\begin{proof}
We consider two cases for statement (i); the other cases are easier.  Assume that $M=\pair{\{M_i\}_i, k}$.
\begin{itemize}
	\item $r\equiv \Exp[e]$.
We reason as follows.
  \[\begin{array}{rcl}
\Denote{f_{x,\bar{q}}^M(r)}_\povd & = & \Denote{\Exp_{x\sim M[\bar{q}]}[e]}_\povd \\
& = & \sum_\state\sum_i M_i\povd(\state)M_i^\dag\cdot\Denote{e}_{\state[l(i)/x]} \\
& = & \sum_{\state'}\sum_\state\sum_i \{M_i\povd(\state)M_i^\dag\mid \state[l(i)/x]=\state'\}\cdot\Denote{e}_{\state'} \\
& = &  \sum_{\state'}\sum_\state\povd_\state(\state')\cdot\Denote{e}_{\state'} \qquad\mbox{where } \povd_\state(\state')=\sum_i \{M_i\povd(\state)M_i^\dag\mid \state[l(i)/x]=\state'\}\\
& = &\Denote{\Exp[e]}_{\sum_\state\povd_\state}\\
& = & \Denote{\Exp[e]}_{\Denote{x:=M[\bar{q}]}_\povd} \\
& = & \Denote{r}_{\Denote{x:=M[\bar{q}]}_\povd} 
\end{array}\]
The second last equality holds because $\Denote{x:=M[\bar{q}]}_{(\state,\povd(\state))}=\povd_\state$.

\item $r\equiv \Exp_{\bar{y}\sim N[\bar{q'}]}[e]$. There are two possibilities. Let us first assume that $x\not\in\bar{y}$ and $N=\pair{\{N_j\}_j, l}$.
  \[\begin{array}{rcl}
\Denote{f_{x,\bar{q}}^M(r)}_\povd & = & \Denote{\Exp_{x\bar{y}\sim M'[\bar{q}\cup\bar{q'}]}[e]}_\povd \mbox{ with } M'=\pair{\{N_j M_i\}_{ij}, k'} \mbox{ and } k'(ij)=(k(i),l(j))  \\
& = & \sum_\state\sum_{ij} N_jM_i\povd(\state)M_i^\dag N_j^\dag \cdot\Denote{e}_{\state[k(i),l(j)/x\bar{y}]} \\
& = & \sum_\state\sum_{j} N_j(\sum_i M_i\povd(\state)M_i^\dag) N_j^\dag \cdot\Denote{e}_{\state[k(i)/x][l(j)/\bar{y}]} \\
& = & \sum_{\state'}\sum_\state\sum_j N_j(\sum_i\{M_i\povd(\state)M_i^\dag \mid \state[k(i)/x]=\state'\})N_j^\dag\cdot\Denote{e}_{\state'[l(j)/\bar{y}]}\\
& = & \sum_{\state'}\sum_\state\sum_{j} N_j\povd_\state(\state') N_j^\dag \cdot\Denote{e}_{\state'[l(j)/\bar{y}]} \\
& & \qquad\mbox{where } \povd_\state(\state')=\sum_i \{M_i\povd(\state)M_i^\dag\mid \state[k(i)/x]=\state'\}\\
& = & \sum_{\state'}\sum_{j} N_j\sum_\state\povd_\state(\state') N_j^\dag \cdot\Denote{e}_{\state'[l(j)/\bar{y}]} \\
& = &\Denote{\Exp_{\bar{y}\sim N[\bar{q'}]}[e]}_{\sum_\state\povd_\state}\\
& = & \Denote{\Exp_{\bar{y}\sim N[\bar{q'}]}[e]}_{\Denote{x:=M[\bar{q}]}_\povd} \\
& = & \Denote{r}_{\Denote{x:=M[\bar{q}]}_\povd} 
\end{array}\]
If $x\in\bar{y}$, the proof is similar by noting that $\state[k(i)/x][l(j)/\bar{y}]=\state[l(j)/\bar{y}]$.
\end{itemize}
\end{proof}

The next theorem states that the concrete proof system is sound.
\begin{theorem}
	Every judgement provable in $\mathcal{S}_c$ is valid.
\end{theorem}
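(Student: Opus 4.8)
The plan is to prove soundness by rule induction on the derivation of $\triple{P}{c}{P'}$ in $\mathcal{S}_c$, showing that each inference rule preserves validity; here $\povd\models P$ abbreviates that $\Denote{P}_\povd$ holds under the concrete interpretation of Figure~\ref{fig:sem}, so that a judgement is valid iff $\Denote{P}_\povd \Rightarrow \Denote{P'}_{\Denote{c}_\povd}$ for every $\povd$. Since $\mathcal{S}_c$ retains every rule of $\mathcal{S}_a$ except that [{\sf Assgn}], [{\sf QInit}], [{\sf QUnit}] and [{\sf QMeas}] are replaced by their syntactic counterparts, I would split the rules into two groups and treat them separately.

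For the rules shared with $\mathcal{S}_a$ — namely [{\sf Skip}], [{\sf Abort}], [{\sf Seq}], [{\sf Split}], [{\sf Cond}], [{\sf Absurd}], [{\sf Conseq}], [{\sf All}] and [{\sf While}] — I would observe that their soundness arguments in the proof of Theorem~\ref{thm:sounda} appeal only to the denotational semantics of commands and to the semantic behaviour of the connectives $\neg$, $\wedge$, $\oplus$ and $\Box$; they never mention the concrete form of the preconditions produced by the four replaced rules. Because these connectives receive exactly the same interpretation in the concrete assertion language of Figure~\ref{fig:sem}, and $\Box\psi$ is itself definable as a concrete formula by Lemma~\ref{lem:box}, each of those arguments transfers verbatim to $\mathcal{S}_c$. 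In particular the argument for [{\sf While}] still goes through because u-closedness is a purely semantic notion, unaffected by the change of assertion syntax.

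The real work is therefore concentrated in the four new rules, and for each of them soundness reduces immediately to the matching substitution lemma. For [{\sf QMeas'}], I would take any $\povd$ with $\povd\models f_{x,\bar{q}}^M(P)$, i.e.\ $\Denote{f_{x,\bar{q}}^M(P)}_\povd$ holds; Lemma~\ref{lem:meas}(ii) then yields $\Denote{P}_{\Denote{x:=M[\bar{q}]}_\povd}$, which is exactly $\Denote{x:=M[\bar{q}]}_\povd\models P$, establishing validity. The arguments for [{\sf Assgn'}], [{\sf QInit'}] and [{\sf QUnit'}] are identical in shape, invoking respectively Lemma~\ref{lem:assgn}(iv), Lemma~\ref{lem:qinit}(ii) and Lemma~\ref{lem:qunit}(ii).

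Because the substitution lemmas are already in hand, the theorem is essentially a corollary, and I expect no single step to pose a serious obstacle. The one point I would take care to check is the direction of the implications: clauses (ii)/(iv) of those lemmas give only an implication of the form $\Denote{h(P)}_\povd \Rightarrow \Denote{P}_{\Denote{q:=|0\>}_\povd}$ (and its analogues for the other three rules) rather than an equivalence, with the $\oplus$ case responsible for the strict weakening. This is precisely the implication needed for soundness, so it suffices here; but it is also why one cannot run the completeness argument of $\mathcal{S}_a$ in reverse, which is the reason $\mathcal{S}_c$ is sound only. Conceptually, then, the only genuine mathematical content already lived in Lemma~\ref{lem:meas}(ii), where the composed measurement $\langle\{N_jM_i\},k'\rangle$ and the $x\in\bar{y}$ versus $x\notin\bar{y}$ bookkeeping had to be verified; the soundness theorem merely packages these facts together through rule induction.
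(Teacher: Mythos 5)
Your proposal is correct and takes essentially the same approach as the paper: the paper's own proof likewise reduces everything to the soundness of the rules inherited from $\mathcal{S}_a$ (Theorem~\ref{thm:sounda}) together with the substitution lemmas (Lemmas~\ref{lem:qinit}--\ref{lem:meas}) applied to the four replaced rules [{\sf Assgn'}], [{\sf QInit'}], [{\sf QUnit'}] and [{\sf QMeas'}]. Your elaborations---that the shared-rule arguments transfer because they depend only on the semantics of commands and connectives, and that the one-directional implications in clauses (ii)/(iv) are exactly what soundness (but not completeness) requires---simply make explicit what the paper's terse proof leaves implicit.
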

\begin{proof}
	We only need to prove that the four new rules [{\sf Assgn'}], [{\sf QInit'}], [{\sf QUnit'}] and [{\sf QMeas'}] are sound, which follows from Lemmas~\ref{lem:qinit} - \ref{lem:meas}; the soundness of all other rules are already shown in Theorem~\ref{thm:sounda}.
\end{proof}


We can define a precondition calculus to help with syntactic reasoning. Given an assertion $P$ as a postcondition and a loop-free command $c$, we construct an assertion  as a precondition for $c$, written as
$pc(c, P)$.
The computation rules for preconditions are given in Figure~\ref{fig:pc}. 
\begin{figure}
	\[\begin{array}{rcl}
pc(\Skip, P) & := & P \\
pc(x := a, P) & := & P[a/x] \\
pc(c_0; c_1, P) & := & pc(c_0, pc(c_1, P))\\
pc(\If b \Then c_0 \Else c_1, P) & := & 
(pc(c_0,P)\wedge\Box b)\oplus (pc(c_1,P)\wedge\Box \neg b)
 \\
pc(\abort, P) & := & \left\{\begin{array}{ll}
                                     \top & \mbox{if } P=\Box\false\\
                                     \mbox{undefined} & \mbox{otherwise}
                                     \end{array}\right. 
\\ 
pc(q:=|0\>, P) & := & h(P)\\
pc(U[\bar{q}], P) &:=& g^U(P)\\
pc(x:=M[\bar{q}], P) & := & f_{x,\bar{q}}^M(P)

	\end{array}\]
	\caption{Precondition calculus 
	}\label{fig:pc}
\end{figure}

\begin{theorem}
	Let $c$ be a non-looping command. The following rule is derivable.
	\[
	\prooftree
	\justifies
	\triple{pc(c, P)}{c}{P}
	\using
	[\sf PC]
    \endprooftree
	\]
\end{theorem}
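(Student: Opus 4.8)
The plan is to prove the derivability of the rule $[\sf PC]$ by structural induction on the non-looping command $c$, in each case exhibiting an explicit derivation of $\triple{pc(c,P)}{c}{P}$ using the rules of $\mathcal{S}_c$ together with the substitution lemmas already established. Since $c$ is loop-free, the cases to consider are exactly $\Skip$, $x:=a$, $c_0;c_1$, $\If b \Then c_0 \Else c_1$, $\abort$, $q:=|0\>$, $U[\bar q]$, and $x:=M[\bar q]$; the $\While$ construct is excluded by hypothesis, so no u-closure machinery is needed here.

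First I would dispatch the base cases. For $\Skip$, $pc(\Skip,P)=P$ and $[\sf Skip]$ immediately gives $\triple{P}{\Skip}{P}$. For $\abort$: when $P=\Box\false$, rule $[\sf Abort]$ yields $\triple{P'}{\abort}{\Box\false}$ for any $P'$, so in particular with precondition $\top$; when $P\neq\Box\false$ the precondition is undefined and there is nothing to prove. For the four atomic commands that carry a syntactic substitution, the derivations are one-line: $pc(x:=a,P)=P[a/x]$ matches $[\sf Assgn']$ directly; $pc(q:=|0\>,P)=h(P)$ matches $[\sf QInit']$; $pc(U[\bar q],P)=g^U(P)$ matches $[\sf QUnit']$; and $pc(x:=M[\bar q],P)=f^M_{x,\bar q}(P)$ matches $[\sf QMeas']$. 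These rules are precisely engineered so that the computed precondition is the antecedent of the corresponding proof rule, so each case is immediate once the definitions in Figure~\ref{fig:pc} are unfolded.

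Next I would handle the two compositional cases, which are where the induction hypothesis is actually used. For sequencing, $pc(c_0;c_1,P)=pc(c_0,pc(c_1,P))$; the induction hypothesis gives $\triple{pc(c_1,P)}{c_1}{P}$ and $\triple{pc(c_0,pc(c_1,P))}{c_0}{pc(c_1,P)}$, and rule $[\sf Seq]$ composes these into $\triple{pc(c_0;c_1,P)}{c_0;c_1}{P}$. For the conditional, $pc(\If b \Then c_0 \Else c_1,P)=(pc(c_0,P)\wedge\Box b)\oplus(pc(c_1,P)\wedge\Box\neg b)$; by induction I have $\triple{pc(c_0,P)}{c_0}{P}$ and $\triple{pc(c_1,P)}{c_1}{P}$, from which $[\sf Conseq]$ lets me weaken the preconditions to $pc(c_0,P)\wedge\Box b$ and $pc(c_1,P)\wedge\Box\neg b$ respectively, and then $[\sf Cond]$ produces the postcondition $P\oplus P$. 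The only remaining subtlety is to collapse $P\oplus P$ back to $P$: I would justify $P\oplus P \Rightarrow P$ (the two summands recombine into a single POVD satisfying $P$, using the semantics of $\oplus$) and discharge it via $[\sf Conseq]$.

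The main obstacle, such as it is, lies in making the conditional case fully rigorous, specifically the step $P\oplus P\Rightarrow P$ and the justification that weakening by $\Box b$ (resp.\ $\Box\neg b$) is sound under $[\sf Conseq]$. The former is not an unconditional validity for arbitrary $P$, so I would need to observe that the relevant implication holds because $[\sf Cond]$ is only applied to the split pieces $\povd_{|b}$ and $\povd_{|\neg b}$, whose recombination is governed by Lemma~\ref{lem:decomp} and equation~\eqref{eq:cond}; the cleanest route is to argue directly at the level of validity (invoking the already-proved soundness theorem) rather than purely syntactically, or alternatively to note that the matching completeness-style argument in Lemma~\ref{lem:CharP} already contains exactly this pattern of reasoning and can be reused. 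Everything else is routine rule-chasing.
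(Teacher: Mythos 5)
Your proposal follows the paper's own proof essentially step for step: the same structural induction, the same one-line dispatch of $\Skip$, $\abort$, $x:=a$, $q:=|0\rangle$, $U[\bar q]$ and $x:=M[\bar q]$ via $[{\sf Skip}]$, $[{\sf Abort}]$, $[{\sf Assgn'}]$, $[{\sf QInit'}]$, $[{\sf QUnit'}]$, $[{\sf QMeas'}]$, and the same treatment of sequencing ($[{\sf Seq}]$ on the induction hypotheses) and of the conditional ($[{\sf Conseq}]$ to strengthen the preconditions with $\Box b$ and $\Box\neg b$, then $[{\sf Cond}]$, then a final $[{\sf Conseq}]$ to collapse $P\oplus P$ to $P$).

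The subtlety you flag in the conditional case is real, and it is the only place where you and the paper differ: the paper asserts $P\oplus P\Leftrightarrow P$ without comment, whereas you correctly observe that $P\oplus P\Rightarrow P$ is not valid for arbitrary $P$. Indeed, take $P\equiv\neg(\Exp[\Char_{x=0}]=\Exp[\Char_{x=1}])$: a POVD concentrated on a state with $x=0$ and one concentrated on a state with $x=1$, carrying the same quantum operator, each satisfy $P$, but their sum does not. With $c\equiv\If y=0\Then\Skip\Else\Skip$ this even refutes the validity, hence by soundness the derivability, of $\triple{pc(c,P)}{c}{P}$. However, neither of your proposed repairs closes this hole: arguing ``at the level of validity'' via the soundness theorem hits exactly the same wall, since the semantic argument likewise only yields $\Denote{c}_\povd\models P\oplus P$; and the pattern of Lemma~\ref{lem:CharP} relies on the characteristic assertions $\Char_\povd$ and the equivalences~(\ref{eq:Char}), which are not available in the concrete assertion language. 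So your derivation is precisely as strong as the paper's, and the step you were suspicious of is a genuine soft spot of the paper's own argument; it disappears only for assertions whose satisfaction is preserved under addition of POVDs (e.g.\ those of the form $\Box\psi$), not in general.
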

\begin{proof}
	We proceed by induction on the structure of $c$.
	\begin{itemize}
		\item $c\equiv\Skip$. Then $pc(c,P)=P$ and we have 
		$\vdash \triple{P}{c}{P}$ by rule {\sf [Skip]}.
		\item $c\equiv\abort$. Then $pc(c, P)$ is only defined for $P=\Box\false$. In this case, we can infer that $\vdash \triple{\top}{\abort}{\Box\false}$ by rule {\sf [Abort]}.
		\item $c\equiv x:=a$. Then $pc(c,P) = P[a/x]$ and we have $\vdash\triple{P[a/x]}{c}{P}$ by rule {\sf [Assgn']}.
		\item $c\equiv c_0;c_1$. Then $pc(c,P) = pc(c_0, pc(c_1,P))$. By induction, we have $\vdash\triple{pc(c_1,P)}{c_1}{P}$ and $\vdash\triple{pc(c_0,pc(c_1,P))}{c_0}{pc(c_1,P)}$. By using rule {\sf [Seq]}, we obtain $\vdash \triple{pc(c_0,pc(c_1,P))}{c}{P}$.
		\item $c\equiv \If b \Then c_0 \Else c_1$. Then $pc(c,P) = (pc(c_0,P)\wedge\Box b)\oplus (pc(c_1,P)\wedge\Box \neg b)$.  By induction, we have that $\vdash \triple{pc(c_0,P)}{c_0}{P}$ and $\vdash \triple{pc(c_1,P)}{c_1}{P}$. It is obvious that $pc(c_0,P)\wedge \Box b\Rightarrow pc(c_0,P)$. We can use rule {\sf [Conseq]} to infer that $\vdash\triple{pc(c_0,P)\wedge \Box b}{c_0}{P}$. Similarly, we have $\vdash\triple{pc(c_1,P)\wedge \Box \neg b}{c_1}{P}$.
		By applying rule {\sf [Cond]}, we have that
		$\vdash\triple{pc(c, P\oplus P)}{c}{P\oplus P}$.
		Since $P\oplus P\Leftrightarrow P$, we use rule {\sf [Conseq]} again to infer the required result that $\vdash\triple{pc(c, P)}{c}{P}$.
		\item $c\equiv q:=|0\>$.  A direct consequence of rule {\sf [QInit']}.
		\item $c\equiv U[\bar{q}]$. By using rule {\sf [QUnit']}.
		\item $c\equiv M[\bar{q}]$. By using rule {\sf [QMeas']}.
	\end{itemize}
\end{proof}

\section{Example: superdense coding}\label{sec:example}
In this section, we illustrate the use of the proof system $\mathcal{S}_c$ via the example of superdense coding.

Superdense coding was proposed by Bennett and Wiesner in 1992~\cite{BW92}. It is a quantum communication protocol allowing two classical bits to be encoded in one qubit during a transmission, so it needs only one quantum channel. Such advantage is based on the use of a maximally
entangled state, EPR state. An EPR state can be transformed into all the four kinds of EPR states through 1-qubit operations, and these EPR states are mutually orthogonal. 
\paragraph*{Protocol.}
We suppose the sender and the receiver of the communication are Alice and Bob, then the protocol goes as follows:
\begin{enumerate}
	\item Alice and Bob prepare an EPR state $\frac{|00\>+|11\>}{\sqrt{2}}$ together. Then they share the qubits, Alice holding $q_0$ and Bob holding $q_1$.
	\item Depending on the message Alice wants to send, she applies a gate to her qubit. If Alice wants to send $00$, she does nothing. If Alice wants to send $01$, she applies the X gate. To send $10$, she applies the Z gate. To send $11$, she applies both $X$ and $Z$.
	\item Then Alice sends the qubit $q_0$ to Bob.
	\item Bob applies a CNOT operation on $q_0,q_1$ and a Hadamard operation on $q_0$ to remove the entanglement.
	\item Bob measures $q_0$ and $q_1$ to get the message.
\end{enumerate}
After the execution of the protocol above, Bob gets the value that  Alice wants to send. The protocol exactly transmits two classical bits of information by sending one qubit from Alice to Bob. A quantum circuit implementing the protocol is illustrated in Figure~\ref{fig:sc}.
	\begin{figure}[t]
		\[
		\Qcircuit @C=1.2em @R=2em {
			\lstick{x_0} & \cw & \cw & \cw & \cw & \cw & \cw & \cw & \cctrl{2} & & & & & & & & & \\
			\lstick{x_1} & \cw & \cw & \cw & \cw & \cw & \cctrl{1} &  & & & & & & & & & & \\
			\lstick{q_0} & \qw & \gate{H} & \qw & \ctrl{1} & \qw & \gate{X} & \qw & \gate{Z} & \qw & \ctrl{1} & \qw & \gate{H} & \qw & \meter & \cw & \cw & y_0 \\
			\lstick{q_1} & \qw & \qw & \qw & \targ & \qw & \qw & \qw & \qw & \qw & \targ & \qw & \qw & \qw & \meter & \cw & \cw & y_1 \\
		}
		\]
\caption{Superdense coding}\label{fig:sc}
	\end{figure}
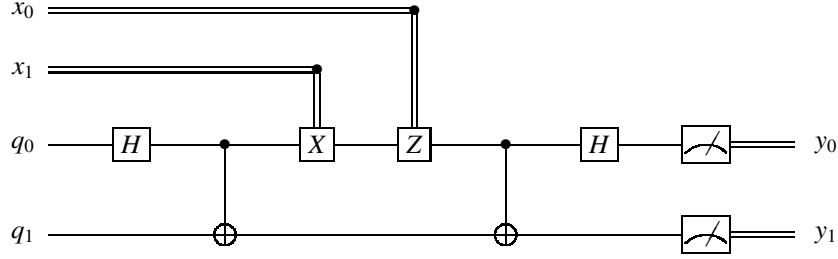

The protocol can also be described by the quantum program $SC$ given in Figure~\ref{fig:prog}, where for any pure state $|\varphi\>$, we write $[|\varphi\>]$ for its density operator $|\varphi\>\<\varphi|$.
\begin{figure}
	\[\begin{array}{cl}
	SC \equiv & \\
1: &	q_0 := |0\>; \\
2: &	q_1 := |0\>; \\
3: &	H[q_0]; \\
4: &	CNOT[q_0 q_1]; \\
5: &	\If x_1=1 \Then X[q_0]; \\
6: &	\If x_0=1 \Then Z[q_0]; \\
7: &	CNOT[q_0 q_1]; \\
8: &	H[q_0]; \\
9: &	y_0 :=M[q_0]; \\
10: &	y_1 :=M[q_1]  \\
&	\quad\mbox{where } M=\pair{\{M_0, M_1\}, Id}, M_0=[|0\>], M_1=[|1\>]
	\end{array}\]
	\caption{The quantum program of implementing superdense coding}\label{fig:prog}
\end{figure}

According to the operational rules in Figure~\ref{fig:exec}, we can derive the following sequence of transitions, where  the initial values of the four classical variables in the first configuration can be arbitrary and we use $*$ to stand for unimportant commands or the values of variables.
\[\begin{array}{ll}
& (SC, x_0x_1y_0y_1, [|00\>])\VS\\
\rightarrow & (*, *, [\frac{|0\>+|1\>}{\sqrt{2}}|0\>]) \VS\\
\rightarrow & (*, *, [\frac{|00\>+|11\>}{\sqrt{2}}]) \VS\\
\rightarrow & (*, *, [X_0^{x_1}\frac{|00\>+|11\>}{\sqrt{2}}]) \VS\\
\rightarrow & (*, *, [Z_0^{x_0}X_0^{x_1}\frac{|00\>+|11\>}{\sqrt{2}}]) \VS\\
\equiv & \left\{\begin{array}{ll}
(*, 00y_0y_1, [\frac{|00\>+|11\>}{\sqrt{2}}]) & \mbox{ if } x_0=x_1=0 \\
(*, 01y_0y_1, [\frac{|10\>+|01\>}{\sqrt{2}}]) & \mbox{ if } x_0=0, x_1=1 \\
(*, 10y_0y_1, [\frac{|00\>-|11\>}{\sqrt{2}}]) & \mbox{ if } x_0=1, x_1=0 \\
(*, 11y_0y_1, [\frac{|10\>-|01\>}{\sqrt{2}}]) & \mbox{ if } x_0=x_1=1 
             \end{array} \right.\VS\\
\rightarrow & \left\{\begin{array}{l}
             (*, 00y_0y_1, [\frac{|00\>+|10\>}{\sqrt{2}}])  \\
             (*, 01y_0y_1, [\frac{|11\>+|01\>}{\sqrt{2}}])  \\
             (*, 10y_0y_1, [\frac{|00\>-|10\>}{\sqrt{2}}])  \\
             (*, 11y_0y_1, [\frac{|11\>-|01\>}{\sqrt{2}}])  
             \end{array} \right.\VS\\
\rightarrow & \left\{\begin{array}{l}
             (*, 00y_0y_1, [|00\>])  \\
             (*, 01y_0y_1, [|01\>])  \\
             (*, 10y_0y_1, [|10\>])  \\
             (*, 11y_0y_1, [|11\>])  
             \end{array} \right.\VS\\
\rightarrow & \left\{\begin{array}{l}
             (*, 000y_1, [|00\>])  \\
             (*, 010y_1, [|01\>])  \\
             (*, 101y_1, [|10\>])  \\
             (*, 111y_1, [|11\>])  
             \end{array} \right.\VS\\
\rightarrow & \left\{\begin{array}{l}
             (\nil, 0000, [|00\>])  \\
             (\nil, 0101, [|01\>])  \\
             (\nil, 1010, [|10\>])  \\
             (\nil, 1111, [|11\>])  
             \end{array} \right.\\
\end{array}\]

We observe that in each case of the four last configurations, we always have the value of $x_0x_1$ coincide with $y_0y_1$ as expected. Indeed, we would like to show that the judgement
\begin{equation}\label{eq:correct}
\triple{\true} {SC}{\Box(x_0=y_0 \wedge x_1=y_1)}
\end{equation}
is provable in our concrete proof system. This can be accomplished by a sequence of derivations; for every line of command in Figure~\ref{fig:prog} we need to prove a Hoare triple. We  start from line 10 and proceed backwards. The first six steps can be derived by using the rules {\sf[QMeas']}, {\sf [QUnit']}, {\sf [Cond]} and {\sf [Split]}, as shown in Figure~\ref{fig:cond}. Continue the reasoning until line 1, we obtain the following precondition for SC.
\[\begin{array}{l}
  \{(\Exp_{y_0y_1\sim M_{10}[q_0 q_1]}[\Char_{\psi\wedge x_0=1\wedge x_1=1}]= \Exp_{y_0y_1\sim M_{10}[q_0 q_1]}[\Char_{\true}]) \\
 \oplus (\Exp_{y_0y_1\sim M_9[q_0 q_1]}[\Char_{\psi\wedge x_0=1\wedge x_1=0}]= \Exp_{y_0y_1\sim M_9[q_0 q_1]}[\Char_{\true}]) \qquad\qquad (\dag)\\
\oplus (\Exp_{y_0y_1\sim M_8[q_0 q_1]}[\Char_{\psi\wedge x_0=0\wedge x_1=1}]= \Exp_{y_0y_1\sim M_8[q_0 q_1]}[\Char_{\true}]) \\
\oplus (\Exp_{y_0y_1\sim M_7[q_0 q_1]}[\Char_{\psi\wedge x_0=0\wedge x_1=0}]= \Exp_{y_0y_1\sim M_7[q_0 q_1]}[\Char_{\true}]) \},\\
\mbox{where } M_{10}\equiv \pair{\{E_{0000}, E_{0001},..., E_{1111}\}, f}\mbox{ with}\\
 \hspace{2.3cm}  	E_{0000}\equiv E_{00}H_{q_0}CNOT_{q_0q_1}Z_{q_0}X_{q_0}CNOT_{q_0q_1}H_{q_0}|0\>_{q_1}\<0|\cdot |0\>_{q_0}\<0|, \\ 
\hspace{2.3cm} 	 E_{0001}\equiv E_{00}H_{q_0}CNOT_{q_0q_1}Z_{q_0}X_{q_0}CNOT_{q_0q_1}H_{q_0}|0\>_{q_1}\<0|\cdot |0\>_{q_0}\<1|, \\ 
 \hspace{2.3cm} E_{0010}\equiv E_{00}H_{q_0}CNOT_{q_0q_1}Z_{q_0}X_{q_0}CNOT_{q_0q_1}H_{q_0}|0\>_{q_1}\<1|\cdot |0\>_{q_0}\<0|,\\  \hspace{2.3cm} E_{0011}\equiv E_{00}H_{q_0}CNOT_{q_0q_1}Z_{q_0}X_{q_0}CNOT_{q_0q_1}H_{q_0}|0\>_{q_1}\<1|\cdot |0\>_{q_0}\<1|,  \\
  \hspace{2.3cm}  	E_{0100}\equiv E_{01}H_{q_0}CNOT_{q_0q_1}Z_{q_0}X_{q_0}CNOT_{q_0q_1}H_{q_0}|0\>_{q_1}\<0|\cdot |0\>_{q_0}\<0|, \\ 
    \hspace{3.3cm}   \vdots \\
 \hspace{2.3cm} E_{1111}\equiv E_{11}H_{q_0}CNOT_{q_0q_1}Z_{q_0}X_{q_0}CNOT_{q_0q_1}H_{q_0}|0\>_{q_1}\<1|\cdot |0\>_{q_0}\<1|,  \\
  \hspace{2.3cm} f(00**)=00,\ f(01**)=01,\ f(10**)=10,\ f(11**)=11\\  	
 M_{9}\equiv \pair{\{E'_{0000}, E'_{0001},..., E'_{1111}\}, f}\mbox{ with}\\ 	
\hspace{1.3cm}	E'_{0000}\equiv E_{00}H_{q_0}CNOT_{q_0q_1}Z_{q_0}CNOT_{q_0q_1}H_{q_0}|0\>_{q_1}\<0|\cdot |0\>_{q_0}\<0|, \\ 
    \hspace{2.3cm}  \vdots \\
\hspace{1.3cm} E'_{1111}\equiv E_{11}H_{q_0}CNOT_{q_0q_1}Z_{q_0}CNOT_{q_0q_1}H_{q_0}|0\>_{q_1}\<1|\cdot |0\>_{q_0}\<1|  \\
 M_{8}\equiv \pair{\{E''_{0000}, E''_{0001},..., E''_{1111}\}, f}\mbox{ with}\\ 	
\hspace{1.3cm}	E''_{0000}\equiv E_{00}H_{q_0}CNOT_{q_0q_1}X_{q_0}CNOT_{q_0q_1}H_{q_0}|0\>_{q_1}\<0|\cdot |0\>_{q_0}\<0|, \\ 
\hspace{2.3cm}   \vdots \\
\hspace{1.3cm} E''_{1111}\equiv E_{11}H_{q_0}CNOT_{q_0q_1}X_{q_0}CNOT_{q_0q_1}H_{q_0}|0\>_{q_1}\<1|\cdot |0\>_{q_0}\<1| \\

 M_{7}\equiv \pair{\{E'''_{0000}, E'''_{0001},..., E'''_{1111}\}, f}\mbox{ with}\\ 	
\hspace{1.3cm}	E'''_{0000}\equiv E_{00}H_{q_0}CNOT_{q_0q_1}CNOT_{q_0q_1}H_{q_0}|0\>_{q_1}\<0|\cdot |0\>_{q_0}\<0|, \\ 
\hspace{2.3cm}  \vdots\\
\hspace{1.3cm} E'''_{1111}\equiv E_{11}H_{q_0}CNOT_{q_0q_1}CNOT_{q_0q_1}H_{q_0}|0\>_{q_1}\<1|\cdot |0\>_{q_0}\<1|  \\
\end{array}\]

The measurement operators in $M_{10}$ look complicated. However, a simple calculation shows that among the 16 operators only four of them are non-zero. Indeed, the simplified form of $M_{10}$ is 
\[M_{10}=\pair{\{E_{1100}, E_{1101}, E_{1110}, E_{1111}\}, f}\]
where 
\[E_{1100}=\begin{bmatrix}
0 & 0 & 0 & 0\\
0 & 0 & 0 & 0\\
0 & 0 & 0 & 0\\
1 & 0 & 0 & 0
\end{bmatrix}
\quad E_{1101}=\begin{bmatrix}
0 & 0 & 0 & 0\\
0 & 0 & 0 & 0\\
0 & 0 & 0 & 0\\
0 & 0 & 1 & 0
\end{bmatrix}
\quad E_{1110}=\begin{bmatrix}
0 & 0 & 0 & 0\\
0 & 0 & 0 & 0\\
0 & 0 & 0 & 0\\
0 & 1 & 0 & 0
\end{bmatrix}
\quad E_{1111}=\begin{bmatrix}
0 & 0 & 0 & 0\\
0 & 0 & 0 & 0\\
0 & 0 & 0 & 0\\
0 & 0 & 0 & 1
\end{bmatrix}
\]
Similarly, the simplified form of $M_{9}$ is 
$M_{9}=\pair{\{E'_{1000}, E'_{1001}, E'_{1010}, E'_{1011}\}, f}$,
where 
\[E'_{1000}=\begin{bmatrix}
0 & 0 & 0 & 0\\
0 & 0 & 0 & 0\\
1 & 0 & 0 & 0\\
0 & 0 & 0 & 0
\end{bmatrix}
\quad E'_{1001}=\begin{bmatrix}
0 & 0 & 0 & 0\\
0 & 0 & 0 & 0\\
0 & 0 & 1 & 0\\
0 & 0 & 0 & 0
\end{bmatrix}
\quad E'_{1010}=\begin{bmatrix}
0 & 0 & 0 & 0\\
0 & 0 & 0 & 0\\
0 & 1 & 0 & 0\\
0 & 0 & 0 & 0
\end{bmatrix}
\quad E'_{1011}=\begin{bmatrix}
0 & 0 & 0 & 0\\
0 & 0 & 0 & 0\\
0 & 0 & 0 & 1\\
0 & 0 & 0 & 0
\end{bmatrix}
\]
The simplified form of $M_{8}$ is 
$M_{8}=\pair{\{E''_{0100}, E''_{0101}, E''_{0110}, E''_{0111}\}, f}$,
where 
\[E''_{0100}=\begin{bmatrix}
0 & 0 & 0 & 0\\
1 & 0 & 0 & 0\\
0 & 0 & 0 & 0\\
0 & 0 & 0 & 0
\end{bmatrix}
\quad E''_{0101}=\begin{bmatrix}
0 & 0 & 0 & 0\\
0 & 0 & 1 & 0\\
0 & 0 & 0 & 0\\
0 & 0 & 0 & 0
\end{bmatrix}
\quad E''_{0110}=\begin{bmatrix}
0 & 0 & 0 & 0\\
0 & 1 & 0 & 0\\
0 & 0 & 0 & 0\\
0 & 0 & 0 & 0
\end{bmatrix}
\quad E''_{0111}=\begin{bmatrix}
0 & 0 & 0 & 0\\
0 & 0 & 0 & 1\\
0 & 0 & 0 & 0\\
0 & 0 & 0 & 0
\end{bmatrix}
\]
Finally, the simplified form of $M_{7}$ is 
$M_{7}=\pair{\{E'''_{0000}, E'''_{0001}, E'''_{0010}, E'''_{0011}\}, f}$,
where 
\[E'''_{0000}=\begin{bmatrix}
1 & 0 & 0 & 0\\
0 & 0 & 0 & 0\\
0 & 0 & 0 & 0\\
0 & 0 & 0 & 0
\end{bmatrix}
\quad E'''_{0001}=\begin{bmatrix}
0 & 0 & 1 & 0\\
0 & 0 & 0 & 0\\
0 & 0 & 0 & 0\\
0 & 0 & 0 & 0
\end{bmatrix}
\quad E'''_{0010}=\begin{bmatrix}
0 & 1 & 0 & 0\\
0 & 0 & 0 & 0\\
0 & 0 & 0 & 0\\
0 & 0 & 0 & 0
\end{bmatrix}
\quad E'''_{0011}=\begin{bmatrix}
0 & 0 & 0 & 1\\
0 & 0 & 0 & 0\\
0 & 0 & 0 & 0\\
0 & 0 & 0 & 0
\end{bmatrix}
\]

\begin{figure}[t]
	\[\begin{array}{cl}
	& \cond{\{\true\}} \\
	SC \equiv & \\
	1: &	q_0 := |0\>; \\
	2: &	q_1 := |0\>; \\
	3: &	H[q_0]; \\
	4: &	CNOT[q_0 q_1]; \\
	 &  \cond{\{(\Exp_{y_0y_1\sim M_6[q_0 q_1]}[\Char_{\psi\wedge x_0=1\wedge x_1=1}]= \Exp_{y_0y_1\sim M_6[q_0 q_1]}[\Char_{\true}])} \\
	 	&	\cond{ \oplus (\Exp_{y_0y_1\sim M_4[q_0 q_1]}[\Char_{\psi\wedge x_0=1\wedge x_1=0}]= \Exp_{y_0y_1\sim M_4[q_0 q_1]}[\Char_{\true}]) }\\
	&	\cond{ \oplus (\Exp_{y_0y_1\sim M_5[q_0 q_1]}[\Char_{\psi\wedge x_0=0\wedge x_1=1}]= \Exp_{y_0y_1\sim M_5[q_0 q_1]}[\Char_{\true}]) }\\
		&	\cond{ \oplus (\Exp_{y_0y_1\sim M_3[q_0 q_1]}[\Char_{\psi\wedge x_0=0\wedge x_1=0}]= \Exp_{y_0y_1\sim M_3[q_0 q_1]}[\Char_{\true}]) \},}\\
	& \cond{\mbox{where } M_6\equiv\langle\{E_{00}H_{q_0}CNOT_{q_0q_1}Z_{q_0}X_{q_0}, \ E_{01}H_{q_0}CNOT_{q_0q_1}Z_{q_0}X_{q_0},} \\ 
		& \hspace{2.3cm} \cond{E_{10}H_{q_0}CNOT_{q_0q_1}Z_{q_0}X_{q_0},\ E_{11}H_{q_0}CNOT_{q_0q_1}Z_{q_0}X_{q_0}\}, Id\rangle ,} \\
	& \cond{M_5\equiv\langle\{E_{00}H_{q_0}CNOT_{q_0q_1}X_{q_0}, \ E_{01}H_{q_0}CNOT_{q_0q_1}X_{q_0},} \\ 
	& \hspace{2.3cm} \cond{E_{10}H_{q_0}CNOT_{q_0q_1}X_{q_0},\ E_{11}H_{q_0}CNOT_{q_0q_1}X_{q_0}\}, Id\rangle} \\
	5: &	\If x_1=1 \Then X[q_0]; \\
			 &  \cond{\{(\Exp_{y_0y_1\sim M_4[q_0 q_1]}[\Char_{\psi\wedge x_0=1}]= \Exp_{y_0y_1\sim M_4[q_0 q_1]}[\Char_{\true}])} \\
		&	\cond{ \oplus (\Exp_{y_0y_1\sim M_3[q_0 q_1]}[\Char_{\psi\wedge x_0=0}]= \Exp_{y_0y_1\sim M_3[q_0 q_1]}[\Char_{\true}]) \},}\\
	& \cond{\mbox{where } M_4\equiv\langle\{E_{00}H_{q_0}CNOT_{q_0q_1}Z_{q_0}, \ E_{01}H_{q_0}CNOT_{q_0q_1}Z_{q_0},} \\ 
	& \hspace{2.3cm} \cond{E_{10}H_{q_0}CNOT_{q_0q_1}Z_{q_0},\ E_{11}H_{q_0}CNOT_{q_0q_1}Z_{q_0}\}, Id\rangle} \\
	6: &	\If x_0=1 \Then Z[q_0]; \\
		 &  \cond{\{\Exp_{y_0y_1\sim M_3[q_0 q_1]}[\Char_{\psi}]= \Exp_{y_0y_1\sim M_3[q_0 q_1]}[\Char_{\true}] \}, }\\
	& \cond{\mbox{where } M_3\equiv\pair{\{E_{00}H_{q_0}CNOT_{q_0q_1}, E_{01}H_{q_0}CNOT_{q_0q_1}, E_{10}H_{q_0}CNOT_{q_0q_1}, E_{11}H_{q_0}CNOT_{q_0q_1}\}, Id}} \\
	7: &	CNOT[q_0 q_1]; \\
	 &  \cond{\{\Exp_{y_0y_1\sim M_2[q_0 q_1]}[\Char_{\psi}]= \Exp_{y_0y_1\sim M_2[q_0 q_1]}[\Char_{\true}] \},}\\
	& \cond{\mbox{where } M_2\equiv\pair{\{E_{00}H_{q_0}, E_{01}H_{q_0}, E_{10}H_{q_0}, E_{11}H_{q_0}\}, Id} } \\
	8: &	H[q_0]; \\
	 &  \cond{\{\Exp_{y_0y_1\sim M_1[q_0 q_1]}[\Char_{\psi}]= \Exp_{y_0y_1\sim M_1[q_0 q_1]}[\Char_{\true}] \}}\\
	 & \cond{\mbox{where } M_1\equiv\pair{\{E_{00}, E_{01}, E_{10}, E_{11}\}, Id}, } \\
	 & \cond{ E_{00}\equiv [|0\>]_{q_1}\cdot [|0\>]_{q_0}, \
	 	E_{01}\equiv [|1\>]_{q_1}\cdot [|0\>]_{q_0}, \
	 	E_{10}\equiv [|0\>]_{q_1}\cdot [|1\>]_{q_0}, \
	 	E_{11}\equiv [|1\>]_{q_1}\cdot [|1\>]_{q_0}
	  } \\
	9: &	y_0 :=M[q_0]; \\
	    &  \cond{\{\Exp_{y_1\sim M[q_1]}[\Char_{\psi}]= \Exp_{y_1\sim M[q_1]}[\Char_{\true}] \}}\\
	10: &	y_1 :=M[q_1]  \\
	&	\quad\mbox{where } M=\pair{\{E_0, E_1\}, Id}, E_0=[|0\>], E_1=[|1\>]\\
   & \cond{\{\Box\psi\}, \mbox{ where }\psi\equiv x_0=y_0 \wedge x_1=y_1 }
	\end{array}\]
	\caption{The quantum program with pre- and postconditions}\label{fig:cond}
\end{figure}

Write $P$ for the assertion in ($\dag$). We have seen that 
\begin{equation}\label{eq:b}
\triple{P}{SC}{\Box\psi} .
\end{equation}
We observe that $\true \Leftrightarrow P$. This can be seen as follows. Let 
\[\begin{array}{rcl}
P_{11} & \equiv & (\Exp_{y_0y_1\sim M_{10}[q_0 q_1]}[\Char_{\psi\wedge x_0=1\wedge x_1=1}]= \Exp_{y_0y_1\sim M_{10}[q_0 q_1]}[\Char_{\true}]) \\
P_{10} & \equiv & (\Exp_{y_0y_1\sim M_9[q_0 q_1]}[\Char_{\psi\wedge x_0=1\wedge x_1=0}]= \Exp_{y_0y_1\sim M_9[q_0 q_1]}[\Char_{\true}]) \\
P_{01} & \equiv & (\Exp_{y_0y_1\sim M_8[q_0 q_1]}[\Char_{\psi\wedge x_0=0\wedge x_1=1}]= \Exp_{y_0y_1\sim M_8[q_0 q_1]}[\Char_{\true}]) \\
P_{00} & \equiv & (\Exp_{y_0y_1\sim M_7[q_0 q_1]}[\Char_{\psi\wedge x_0=0\wedge x_1=0}]= \Exp_{y_0y_1\sim M_7[q_0 q_1]}[\Char_{\true}])\\
b_{11} & \equiv & x_0=1\wedge x_1=1\\
b_{10} & \equiv & x_0=1\wedge x_1=0\\
b_{01} & \equiv & x_0=0\wedge x_1=1\\
b_{00} & \equiv & x_0=0\wedge x_1=0
\end{array}\]
We have $P = P_{11}\oplus P_{10}\oplus P_{10}\oplus P_{00}$.  For any POVD $\povd$, it is easy to see that 
\[\povd = \povd_{|b_{11}} + \povd_{|b_{10}} + \povd_{|b_{01}} + \povd_{|b_{00}} \]
We have that $\povd_{|b_{11}} \models P_{11}$ because
\[\begin{array}{ll}
& \Denote{\Exp_{y_0y_1\sim M_{10}[q_0 q_1]}[\Char_{\psi\wedge x_0=1\wedge x_1=1}]}_{\povd_{|b_{11}} }  \\
= & \sum_{\state}\sum_i E_i\povd_{|b_{11}}(\state)M^\dag_i\cdot \Char_{\Denote{\psi\wedge x_0=1\wedge x_1=1}_{\state[f(i)/y_0y_1]}}\\
& \mbox{where } i\in\{1100, 1101, 1110, 1100\} \\
= & \sum_{\state}\sum_i E_i\povd_{|b_{11}}(\state)M^\dag_i\cdot \Char_{\Denote{\psi\wedge x_0=1\wedge x_1=1}_{\state[11/y_0y_1]}}\\
= & \sum_{\state}\sum_i E_i\povd_{|b_{11}}(\state)M^\dag_i\cdot 1 \\
= & \sum_{\state}\sum_i E_i\povd_{|b_{11}}(\state)M^\dag_i\cdot \Char_{\Denote{\true}_{\state[11/y_0y_1]}}\\
= & \Denote{\Exp_{y_0y_1\sim M_{10}[q_0 q_1]}[\Char_{\true}]}_{\povd_{|b_{11}} }  
\end{array}\]
which implies $\Denote{P_{11}}_{\povd_{|b_{11}} } =\true$.
Similarly, we can check that $\povd_{|b_{10}} \models P_{10}$, etc. Therefore, we obtain that $\mu \models P$. As $\povd$ is arbitrarily chosen, we have verified that $\true\Leftrightarrow P$. By (\ref{eq:b}) and rule {\sf [Conseq]}, we finally see that the triple $\triple{\true}{SC}{\Box\psi}$ is provable.

As we can see in (\ref{eq:correct}), the postcondition of that Hoare triple is an assertion about classical variables, even though quantum computation takes place during the execution of the program $SC$. For such scenario, it is very natural to prove the correctness of programs via an satisfaction-based proof system, rather than an expectation-based one.

\section{Conclusion and future work}\label{sec:clu}
We have introduced a simple quantum imperative  language that has both classical and quantum constructs by extending the language \textbf{IMP} studied in depth by Winskel. We have investigated its formal semantics by providing a small-step operational semantics, a denotational semantics and two Hoare-style proof systems: an abstract one and a concrete one. In order to define the semantics, we have used the notion of POVDs to represent the states of programs. Therefore, a program can be considered as a transformer of POVDs. Following the work of Barthe et al, we have designed 
two satisfaction-based proof systems, as opposed to the usual expectation-based systems. The abstract proof system turns out to be sound and relatively complete, while the concrete one is sound.

As to the future work, at least two immediate improvements are interesting and worth being considered.
\begin{itemize}
\item The proof rule {\sf [While]} is not satisfactory because it involves two sequences of assertions $(P_n)$ and $(P'_n)$. In either the purely classical \cite{hoare1969axiomatic} or purely quantum setting \cite{Yin12}, the rule can be elegantly formulated. However, in the presence of both classical and quantum variables, it remains a challenge to find a more concise formulation of the rule.
\item The reasoning in Section~\ref{sec:example} about the example of superdense coding was done manually. In the future, we would like to embed our program logic in a proof assistant so as to facilitate the reasoning.
  
\end{itemize}

\bibliographystyle{abbrv}
\bibliography{ref}

\end{document}